\mathchardef\mh="2D
\definecolor{grey}{RGB}{245,245,245}
\colorlet{mix}{red!50!black}
\newcommand{\markone}{$\mathtt{Mark\mh 1}$}
\newcommand{\markk}{\mathtt{Mark}}
\newcommand{\card}[1]{\ensuremath{{\vert {#1} \vert }}} 
\newcommand{\set}[1]{\ensuremath{\left\{ {#1} \right\}}} 
\newcommand{\phcags}{{proper Helly circular-arc graphs}\xspace}
\newcommand{\phcag}{{proper Helly circular-arc graph}\xspace}
\newcommand{\pac}{{\sc Proper Helly Circular-arc Vertex Deletion}\xspace}
\newcommand{\phcad}{{\sc PHCAVD}\xspace}
\newcommand{\obs}{{\mathbb{O}}\xspace}
\newcommand{\prv}{\mathsf{prev}}
\newcommand{\nxt}{\mathsf{next}}
\newcommand{\glb}{\mathsf{g}}
\newcommand{\OO}{\mathcal{O}}
\newcommand{\cO}{\mathcal{O}}
\newcommand{\C}{\mathcal}
\newcommand{\dist}{\text{dist}}
\newcommand{\redundC}{{5}\xspace}
\newcommand{\calW}{{\mathcal {W}}\xspace}
\newcommand{\copyg}{{\mathsf {copy}}\xspace}
\newcommand{\cl}{{\mathcal{Q}_i}\xspace}
\newcommand{\cd}{{\mathbb{Q}}\xspace}
\newcommand{\cdc}{{\mathbb{Q}_\mathcal{C}}\xspace}
\newcommand{\monad}{\texttt{Monad}\xspace}
\newcommand{\centre}{\texttt{centre}\xspace}
\newcommand{\hole}{\texttt{M-Hole}\xspace}
\newcommand{\mhole}{\texttt{M-Hole}\xspace}
\newcommand{\NP} {{\sf NP}}
\newcommand{\FPT} {{\sf FPT}}
\newcommand{\NPC} {{\sf NP}-complete\xspace}
\newcommand{\no}{\texttt{No}\xspace} 
\newcommand{\yes}{\texttt{Yes}\xspace} 
\newcommand{\appIVD}{\texttt{ApproxPHCAD}\xspace}
\newcommand{\redIVD}{\texttt{RedundantPHCAD}\xspace}
\newcommand{\cyclebound}{\ensuremath{12}}
\newcommand{\defparproblem}[4]{
	\vspace{1mm}
	\noindent\fbox{
		\begin{minipage}{0.96\textwidth}
			\begin{tabular*}{\textwidth}{@{\extracolsep{\fill}}lr} #1  \\ \end{tabular*}
			{\bf{Input:}} #2  \\
			{\bf{Parameter:}} #3\\
			{\bf{Output:}} #4
		\end{minipage}
	}
	\vspace{1mm}
}
\newtheorem{reduction rule}{Reduction Rule}
\newtheorem{marking scheme}{Marking Scheme}
\newtheorem{observation}{Observation}
\newtheorem{clm}{Claim}
\begin{document}
\title{A Polynomial Kernel for Proper Helly Circular-arc Vertex Deletion}
\titlerunning{A Polynomial Kernel for Proper Helly Circular-arc Vertex Deletion}



%
\author{Akanksha Agrawal\inst{1} \and
Satyabrata Jana\inst{2} \and
Abhishek Sahu\inst{3}}
\authorrunning{A. Agrawal, S. Jana, A. Sahu}
\institute{Indian Institute of Technology Madras, India \and
The Institute of Mathematical Sciences, HBNI, Chennai, India \and National Institute of Science, Education and Research, An OCC of Homi Bhabha National Institute, Bhubaneswar 752050, Odisha, India
\\
\email{akanksha@cse.iitm.ac.in, \{satyamtma,asahuiitkgp\}@gmail.com}}
\maketitle              
\begin{abstract}

A {\em proper Helly circular-arc graph} is an intersection graph of a set of arcs on a circle such that none of the arcs properly contains any other arc and every set of pairwise intersecting arcs has a common intersection. The \pac\ problem takes as input a graph $G$ and an integer $k$, and the goal is to check if we can remove at most $k$ vertices from the graph to obtain a proper Helly circular-arc graph; the parameter is $k$. Recently, Cao et al.~[MFCS 2023] obtained an FPT algorithm for this (and related) problem. In this work, we obtain a polynomial kernel for the problem.  

\end{abstract}

	\section{Introduction}

The development of parameterized complexity is much owes much to the study of graph modification problems, which have inspired the evolution of many important tools and techniques. One area of parameterized complexity is data reduction, also known as kernelization, which focuses on the family of graphs $\C{F}$ and the {\sc $\C{F}$-Modification} problem. Given a graph $G$ and an integer $k$, this problem asks whether it is possible to obtain a graph in $\C{F}$ using at most $k$ modifications in $G$, where the modifications are limited to vertex deletions, edge deletions, edge additions, and edge contractions. The problem has been extensively studied, even when only a few of these operations are allowed.
\smallskip

Here we deal on the parameterization of the {\sc $\C{F}$-Vertex Deletion} problem, which is a special case of {\sc $\C{F}$-Modification} where the objective is to find the minimum number of vertex deletions required to obtain a graph in $\C{F}$. This problem encompasses several well-known \NP-complete problems, such as {\sc Vertex Cover}, {\sc Feedback Vertex Set}, {\sc Odd Cycle Transversal}, {\sc Planar Vertex Deletion}, {\sc Chordal Vertex Deletion}, and {\sc Interval Vertex Deletion}, which correspond to $\C{F}$ being the family of graphs that are edgeless, forests, bipartite, planar, chordal and interval, respectively. Unfortunately, most of these problems are known to be {\NPC}, and therefore have been extensively studied in paradigms such as parameterized complexity designed to cope with \NP-hardness. There have been many studies on this topic, including those referenced in this paper, but this list is not exhaustive.
\smallskip

In this article, we focus on the {\sc $\C{F}$-Vertex Deletion} problem, specifically when $\C{F}$ refers to the family of proper Helly circular-arc graphs. We refer to this problem as \pac(\phcad) for brevity. A circular-arc graph is a graph whose vertices can be assigned to arcs on a circle such that there is an edge between two vertices if and only if their corresponding arcs intersect. If none of the arcs properly contains one another, the graph is a proper circular-arc graph. These graphs have been extensively studied, and their structures and recognition are well understood \cite{duran2014structural,kaplan2009certifying,lin2009characterizations}. These graphs also arise naturally when considering the clique graphs of a circular-arc graph. However, the lack of the Helly property, which dictates that every set of intersecting arcs has a common intersection, contributes to the complicated structures of circular-arc graphs. A Helly circular-arc graph is a graph that admits a Helly arc representation. All interval graphs are Helly circular-arc graphs since every interval representation is Helly. The class of proper Helly circular-arc graphs lies between proper circular-arc graphs and proper interval graphs. A graph is a proper Helly circular-arc graph if it has a proper and Helly arc representation. Circular-arc graphs are a well-studied graph class due to their intriguing combinatorial properties and modeling power \cite{Golumbic80}.  Additionally, there exists a linear-time algorithm to determine if a given graph is a circular-arc graph and construct a corresponding arc representation if so \cite{mcconnell2003linear}, even for Helly circular-arc graphs, such algorithm exists \cite{DBLP:conf/cocoon/LinS06}.\\
For graph modification problems, the number of allowed modifications, $k$, is considered the {\em parameter}. With respect to $k$, such a problem is said to be {\em fixed-parameter tractable} (\FPT)  if it admits an algorithm running in time $f(k)n^{\OO(1)}$ for some computable function $f$. Also, the problem is said to have a polynomial kernel if in polynomial time (with respect to the size of the instance) one can obtain an equivalent instance of polynomial size (with respect to the parameter), i.e., for any given instance $(G,k)$ of the problem, it can be reduced in time $n^{\cO(1)}$ to an equivalent instance $(G',k')$ where
$|V(G')|$ and $k'$ are upper bounded by $k^{\cO(1)}$. A kernel for a problem immediately implies that it admits an \FPT\ algorithm, but kernels are also interesting in their own right. In particular, kernels allow us to model the performance of polynomial-time preprocessing algorithms. The field of kernelization has received considerable attention, especially after the introduction of methods to prove kernelization lower bounds~\cite{BDFH09}. We refer to the books~\cite{paramalgoCFKLMPPS,DowneyFbook13}, for a detailed treatment of the area of kernelization.\\
Designing polynomial kernels for problems such as {\sc Chordal Vertex Deletion} \cite{AgrawalLMSZ17} and {\sc Interval Vertex Deletion} \cite{AgrawalM0Z19} posed several challenges. In fact, kernels for these problems were obtained only recently, after their status being open for quite some time. \pac has remained an interesting problem in this area.  Recently, Cao et al.~\cite{DBLP:journals/corr/abs-2202-00854} studied this problem and showed that it admits a factor $6$-approximation algorithm, as well as an \FPT algorithm that runs in time $6^k \cdot n^{\OO(1)}$. \\
  A natural follow-up question to the prior work on this problem is to check whether \phcad admits a polynomial
kernel. In this paper, we resolve this question in the
affirmative way.


\smallskip

\defparproblem{\pac(\phcad)}{ A graph $G$ and an integer $k$.}{$k$}{Does there exist a subset $ S \subseteq V(G) $ of size at most $ k $ such that $ G-S $ is a \phcag?}
\medskip


\begin{theorem}\label{theo:poly_kernel}
	\pac admits a polynomial kernel.
\end{theorem}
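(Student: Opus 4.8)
\textbf{Proof strategy for \Cref{theo:poly_kernel} (sketch).}
The plan is a bootstrapping kernelization layered on top of the factor-$6$ approximation of Cao et al. First, we invoke an approximation routine \appIVD\ which, on input $(G,k)$, either certifies that $(G,k)$ is a \no-instance (in which case we output a trivial constant-size \no-instance) or returns a set $M\subseteq V(G)$ with $|M|\le 6k$ such that $H:=G-M$ is a \phcag. It then suffices to bound $|V(H)|$ by $k^{\OO(1)}$, since the reduced graph has $|V(G)|=|M|+|V(H)|=k^{\OO(1)}$. Every reduction rule has the usual shape — identify in polynomial time a vertex (or a whole clique, or a long ``clean'' segment) whose deletion yields an equivalent instance, then repeat — and since each application strictly decreases $|V(G)|$, the preprocessing terminates in polynomial time.

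The structural engine is a forbidden-induced-subgraph description of \phcags: besides a finite collection of small obstructions (the claw, nets, tents, and some small wheels and anti-holes), every minimal obstruction belongs to one of a small number of explicitly describable infinite families (long holes are \emph{not} among them, which is precisely what makes the circular case genuinely different from the interval case). Since $H$ is already a \phcag, every obstruction occurring in $G$ must use at least one vertex of $M$. We also use that $H$ admits a proper Helly arc model, hence a round enumeration of its maximal cliques on a circle in which the cliques containing any fixed vertex form a contiguous block; cut at any point of this circle, $H$ looks like an interval graph equipped with a clique-path, which lets us import machinery from the kernel for \ivd.

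The reduction runs in three layers. \textbf{(i) Clean segments.} Call a vertex of $H$ \emph{dirty} if it has a neighbour in $M$. A maximal stretch of the clique-circle all of whose vertices are clean, whose ``length'' exceeds a fixed polynomial in $k$ (in the fully circular degenerate case, a stretch longer than the constant \cyclebound\ already suffices), contains a redundant vertex: any obstruction threaded through such a stretch can be rerouted through a shorter sub-stretch, so a carefully chosen vertex can be deleted by a routine exchange argument — this is the rule \redIVD. \textbf{(ii) Interaction with $M$.} After (i) the dirty vertices are confined to $\OO(|M|)=\OO(k)$ short zones of the clique-circle, but there may still be many of them with many distinct neighbourhoods into $M$; here a two-phase marking scheme marks $k^{\OO(1)}$ dirty vertices that together witness every \emph{relevant} potential obstruction — in particular every \mhole\ (a hole routed through modulator vertices, whose $H$-pieces between consecutive modulator vertices are now short) and every potential claw/net/tent or \monad\ centred at $M$ — and every unmarked dirty vertex is then shown to be redundant. \textbf{(iii) Large cliques and twins.} A single clique of $H$ may still be enormous; but within a clique, two vertices with the same $M$-neighbourhood and the same position in the round enumeration are interchangeable ($\copyg$-equivalent), so from each such class we keep only $k+\OO(1)$ representatives — enough that no solution can be forced to delete an entire class. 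Combining (i)--(iii) leaves $k^{\OO(1)}$ vertices outside $M$.

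The technically hardest part is layer (ii): one must show that contracting clean segments destroys no obstruction certificate and creates no spurious one, while simultaneously proving that the marked dirty vertices capture every way in which $M$ can complete a forbidden configuration through $H$. The circular topology, the Helly condition, and the no-proper-containment condition all interact here, and the \mhole\ analysis — bounding, for each pair of modulator vertices, the number of ``essentially different'' connecting $H$-paths that matter, and certifying a single deletion simultaneously against all of them — is the crux. The degenerate long-clean-cycle case and the long-obstruction families require separate but analogous exchange arguments; everything else is bookkeeping with the round enumeration together with the approximation and \FPT\ tools of Cao et al.
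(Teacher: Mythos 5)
Your high-level blueprint --- start from the factor-$6$ approximate modulator, then shrink what is left by an irrelevant-vertex style marking argument, a ``clean-stretch'' argument, and a twin rule --- has the right flavor, but there are several concrete gaps that would prevent it from going through as stated.

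First, you work throughout with the approximation output $M$ of size $\le 6k$ as your only modulator. The paper instead grows this to a ``nice modulator'' $T$ of size $\OO(k^{12})$ consisting of two parts: an \emph{efficient modulator} $T_1$, obtained by applying a hitting-set preservation lemma (Lemma~3.2 of Fomin--Saurabh--Villanger) so that minimal hitting sets of \emph{small} obstructions in $G$ coincide with those in $G[T_1]$; and a \emph{$5$-redundant solution} $M$, constructed via the $\copyg$ machinery, with the property that every large obstruction not ``covered'' passes through $T$ in at least six vertices. This redundancy property is exactly what powers the irrelevant-vertex exchange: when you try to reroute an obstruction around a deleted vertex, you need enough $T$-vertices on the obstruction to serve as anchors. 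With a bare $6k$-size modulator there is no such guarantee --- a large monad may touch the modulator in only one vertex --- and your ``rerouting'' steps in layers (i) and (ii) have nothing to hold onto.

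Second, your layer (iii) says two vertices of a clique with the same $M$-neighbourhood are interchangeable, so keep $k + \OO(1)$ representatives per class. With $|M| \le 6k$, the number of distinct $M$-neighbourhoods inside a clique can be $2^{\Theta(k)}$, so this gives no polynomial bound. The paper's marking scheme \markone\ avoids this by only ever conditioning on adjacency to at most two vertices of $T$ (positively) and at most two (negatively), which yields $|T|^4$ patterns; this small-pattern sufficiency is again only justified because the redundancy property guarantees every uncovered large obstruction meets $T$ at least six times, so that a vertex's local picture with respect to a few $T$-anchors already determines whether it can be swapped out.

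Third, in layer (i) you propose to delete a single redundant vertex from any long clean stretch. This does not by itself bound the length of a clean segment by a polynomial in $k$: one deletion at a time is too weak, and repeated single deletions would have to be applied a super-polynomial number of times on an $n$-vertex stretch with no obvious progress measure. The paper instead uses a min-cut reduction rule: it takes a window of $\sim 50k$ consecutive clean cliques, computes a minimum vertex cut $\tau$ between the two boundary cliques, and replaces the $10k$ interior cliques by a \emph{single} clique of size $\tau$, with a careful two-way correctness argument (Observations~\ref{obs:inGintZ}--\ref{obs:inGintZ1} and Lemma~\ref{red:BoundIngComp}) that shows solutions and obstructions survive the surgery. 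Without some such compression operation, the per-component bound does not follow.

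Finally, the paper also needs a separate count of the connected components of $G-T$ (Reduction Rules~\ref{rule 3k} and~\ref{rule no nbr}, including the interval vs.\ non-interval component distinction), which your sketch does not address. A small notational point: \redIVD\ in the paper is the algorithm that builds the redundant solution, not a clean-segment deletion rule, so the name is used for something quite different from what you describe.
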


\subsection{Methods}
Our kernelization heavily uses the characterization of \phcags in terms of their {\em forbidden induced subgraphs}, also called {\em obstructions}. Specifically, a graph $H$ is an obstruction to the class of \phcags if $H$ is not \phcag but $H - \{v\}$ is \phcag for every vertex $v \in V(H)$. A graph $G$ is a \phcag if and only if it does not contain any of the following obstructions as induced subgraphs, which are $ \overline{C_3^*}$ (claw),
$S_3$ (tent),  $\overline{S_3}$ (net),$W_4$ (wheel of size 4) , $W_5$ (wheel of size 5), $\overline{C_6}$ as well as a family of graphs: $C_{\ell}^*,~\ell \geq 4$ referred to as a $\monad$ of size $\ell$ (see \cref{fig:obs})~\cite{DBLP:journals/corr/abs-2202-00854,DBLP:journals/dam/LinSS13}. We call any obstruction of size less than $\cyclebound$ a {\em small obstruction}, and call all other obstructions \emph{large obstructions}. Note that every \emph{large obstruction} is a \monad  of size at least $\cyclebound$. 
\smallskip

The first ingredient of our kernelization algorithm is the factor $6$ polynomial-time approximation algorithm for \phcad given by Cao et al.~\cite{DBLP:journals/corr/abs-2202-00854}. We use this algorithm to obtain an approximate solution of size at most $6k$, or conclude that there is no solution of size at most $k$.  We grow (extend) this approximate solution to a set $T_1$ of size $\OO(k^{\cyclebound})$, such that 
every set $Y \subseteq V (G)$ of size at most $k$ is a minimal hitting set for all \emph{small} obstructions in $G$ if and only if $Y$ is a minimal hitting set for all \emph{small} obstructions in $G[T_1]$. Notice that $G-T_1$ is a \phcag (we call $T_1$ as an \emph{efficient modulator}, description prescribed in \cref{lem:minimal_hitting}), where for any minimal (or minimum) solution $S$ of size at most $k$, the only purpose of vertices in $S\setminus T_1$ is to hit \emph{large} obstructions. This $T_1$ is the first part of the \emph{nice modulator} $T$ that we want to construct. The other part is $M$, which is a 5-redundant solution (see \cref{def:redundant}) of size $\OO(k^{6})$, which we obtain in polynomial time following the same construction procedure given by \cite{AgrawalM0Z19}. This gives us the additional property that any obstruction of size at least 5 contains at least 5 vertices from $M$ and hence also from $T=T_1\cup M$. We bound the size of such a \emph{nice modulator} $T$ by $\OO(k^{\cyclebound})$.
Next, we analyze the graph $G-T$ and reduce its size by applying various reduction rules. 
\smallskip

For the kernelization algorithm, we look at $G-T$, which is a \phcag and hence  has  a ``\emph{nice clique partition}'' (defined in \cref{sec:clique}). Let $\mathcal{Q}=\{Q_1,Q_2,\ldots\}$ denote such a \emph{nice clique partition} of $G-T$. 
This partition is similar to the clique partition used by Ke et al.~\cite{KeCOLW18} to design a polynomial kernel for vertex deletion to proper interval graphs.

In the first phase, we bound the size of a clique $Q_i$ for each $Q_i\in\mathcal{Q}$.  Our clique-reduction procedure is based on  ``irrelevant vertex rule'' \cite{Marx10}. 
In particular, we find a vertex that is not necessary for a solution of size at most $k$, and delete it. And after this procedure, we reduce the size of each clique in $G-T$ to $k^{\cO(1)}$.

In the second phase, we bound the size of each connected component in $G-T$. Towards this, we bound the number of cliques in $Q_1,Q_2,\ldots,Q_t$ that contain a neighbor of a vertex in $T$ (say {\em good cliques}). We use \emph{small} obstructions, and in particular, the claw, to bound the number of good cliques by $k^{\cO(1)}$. This automatically divides the clique partition 
into {\em chunks}. A  chunk is  a maximal set of {\em non-good cliques} between a pair of {\em good cliques} where the {\em non-good cliques} along with the {\em good cliques} induce a connected component. We show that  the number of chunks is upper bounded by  $k^{\cO(1)}$. 
Finally, we use a structural analysis to bound the size of each chunk, which includes the design of a reduction rule that computes a minimum cut between the two cliques of a certain distance from the border of the chunk. With this, we bound the number of cliques in each chunk and hence the size of each chunk as well as every connected component by $k^{\cO(1)}$.

In the third and final phase of our kernelization algorithm using the claw obstruction, we bound the number of connected components in $G-T$ by $k^{\cO(1)}$. Using this bound, together with the facts that $|T|\leq k^{\cO(1)}$, and that each connected component is of size $k^{\cO(1)}$, we are able to design a polynomial kernel for \phcad.  We conclude this section by summarizing all the steps in our kernelization algorithm (see \cref{fig:flowchart}).

\begin{figure}[ht!]
	\begin{center}
		\includegraphics[scale=0.6]{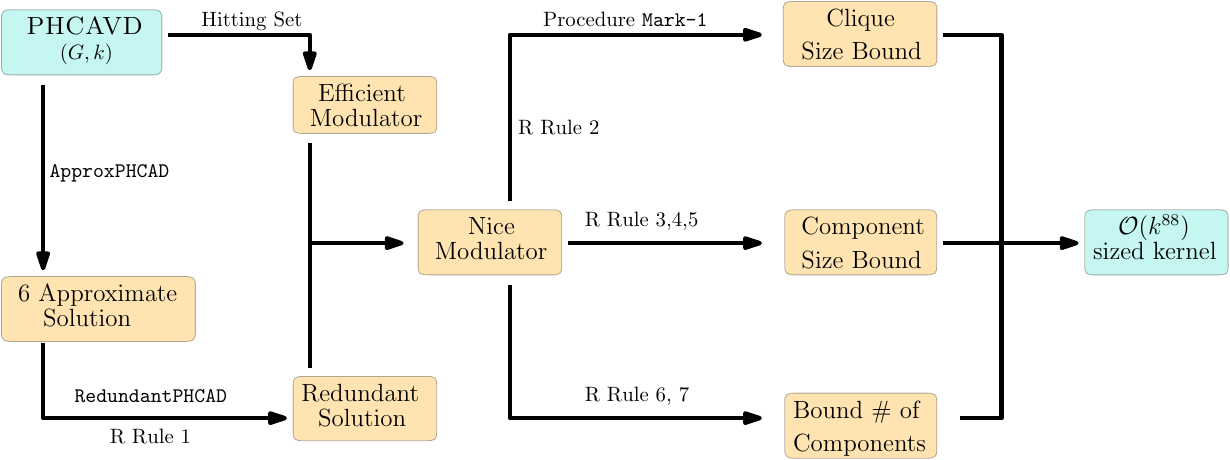}
	\end{center}
	\caption{Flowchart of the Kernelization algorithm for \phcad.}
	\label{fig:flowchart}
\end{figure}

\section{Preliminaries}

\paragraph{Sets and Graph Notations.} We denote the set of natural numbers by $ \mathbb{N} $. For $n \in \mathbb{N}$, by $[n]$ and $[n]_0$, we denote the sets $\{1,2,\cdots, n\}$ and $\{0,1,2,\cdots, n\}$, respectively.  For a graph $G$, $V(G)$ and $E(G)$ denote the set of vertices and edges, respectively.  The neighborhood of a vertex $v$, denoted by $N_G(v)$, is the set of vertices adjacent to $v$. For $A, B \subseteq V (G)$ with $A \cap B = \emptyset$, $ E(A, B)$ denotes the set of edges with one endpoint in $A$ and the other in $B$.  For a set $S \subseteq V(G)$, $G - S$ is the graph obtained by removing $S$ from $G$ and $G[S]$ denotes the subgraph of $G$ induced on $S$. 
A {\em path} $P=v_1,\ldots,v_\ell$ is a sequence of distinct vertices where every consecutive pair of vertices is adjacent. We say that $P$ {\em starts} at $v_1$ and {\em ends} at $v_\ell$. The  vertex set of $P$, denoted by $V(P)$, is the set $\{v_1,\ldots,v_\ell\}$. The {\em internal vertices} of $P$ is the set $V(P) \setminus \{v_1,v_\ell\}$. The {\em length} of $P$ is defined as $|V(P)|-1$. A {\em cycle} is a sequence $v_1,\ldots,v_\ell$ of vertices such that $v_1,\ldots,v_\ell$ is a path and $v_\ell v_1$ is an edge. A cycle (or path) $v_1,\ldots,v_\ell$ is also represented as the ordered set $\{v_1,\ldots,v_\ell\}$. A set $Q \subseteq V(G)$ of pairwise adjacent vertices is called a {\em clique}. A {\em hole} is an induced cycle of length at least four. A vertex is  {\em isolated}  if it has degree zero. For a pair of sets $A, B \subseteq V(G)$, we say $S$ is an $A$-$ B $ cut in $ G $ if there is no edge $ (u,v) $ where $ u \in A\setminus S,~v \in B \setminus S $. Such a $ S $ with minimum cardinality is called as \emph{minimum $ A$-$ B $ cut}. The \emph{distance} between two vertices $u$ and $v$ denoted by $d_G(u,v)$ is the length of a shortest $uv$ path in the graph $G$. The complement graph $\overline{G}$ of a graph $G$ is defined in the same set of vertex $V(G)$ such that  $(u,v) \in E(\overline{G})$ if  and only if $ (u,v) \notin E(G)$. For $ \ell \geq 3 $, we use $ C_{\ell} $ to denote an induced cycle on $ \ell $ vertices; if we add a new vertex to a $ C_{\ell} $ and make it adjacent to none or each vertex in $C_{\ell}$ we end with $ C^*_{\ell} $ or $ W_\ell $, respectively. A \monad is a $ C^*_{\ell} $ with $\ell \geq 4$. We call the $C_{\ell}$ as \mhole and the corresponding isolated vertex as \centre of the \monad. For graph-theoretic terms and definitions not stated explicitly here, we refer to \cite{diestel-book}.

\paragraph{Parameterized problems and kernelization.} A parameterized problem $ \Pi $ is a subset of $ \Gamma^* \times \mathbb{N}$ for some finite alphabet $ \Gamma $. An instance of a parameterized problem consists of $(X, k)$, where $k$ is called the parameter. The notion of kernelization is formally defined as follows. A kernelization algorithm, or in short, a kernelization, for a parameterized problem $ \Pi \subseteq \Gamma^* \times \mathbb{N}$ is an algorithm that, given $(X, k) \in  \Gamma^* \times \mathbb{N}$, outputs in time polynomial in $|X| + k$ a pair $ (X', k')\in  \Gamma^* \times \mathbb{N} $ such that (a) $ (X, k) \in \Pi$ if and only if $(X', k') \in \Pi$ and  (b)  $|x'|,|k| \leq g(k)$, where $ g $ is some computable function depending only on $ k $. The output of kernelization $  (X', k') $ is referred to as the kernel and the function $ g $ is referred to as the size of the kernel. If $ g(k) \in k^{\OO(1)}  $ , then we say that $ \Pi $ admits a polynomial kernel.  We refer to the monographs \cite{DBLP:series/mcs/DowneyF99,DBLP:series/txtcs/FlumG06,DBLP:books/ox/Niedermeier06} for a detailed study  of the area of kernelization.

\paragraph{Proper Helly  Circular-arc Graphs.} A {\em proper Helly circular-arc graph} is an intersection graph of a set of arcs on a circle such that none of the arcs properly contains another (proper) and every set of pairwise intersecting arcs has a common intersection (Helly).  The following is a characterization of proper Helly circular arc graphs.

\begin{figure}[t!]
	\begin{center}
		\includegraphics[scale=0.6]{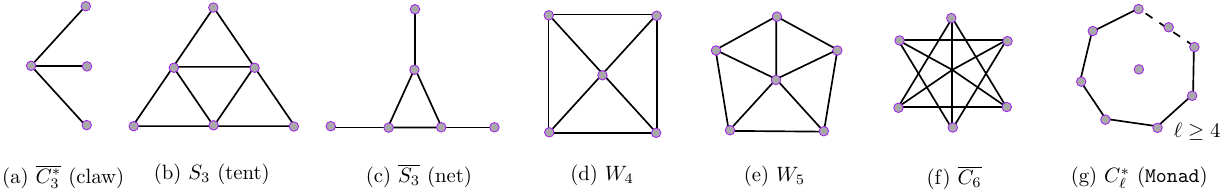}
	\end{center}
	\caption{Forbidden induced subgraphs of \phcags.}
	\label{fig:obs}
\end{figure}


    
    
    
    
    
    


	


\begin{proposition}[\cite{DBLP:journals/dam/LinSS13}] 
	A graph is a proper Helly circular-arc graph if and only if 
		it contains neither claw, net, tent, wheel of size 4, wheel of size 5, complement of cycle of length 6, nor \monad of length at least 4 as	induced subgraphs.
	\end{proposition}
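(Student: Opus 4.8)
As this statement is quoted from \cite{DBLP:journals/dam/LinSS13}, I only describe the line of argument I would follow.

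\textbf{Necessity.} The plan is to check directly that none of the listed graphs is a \phcag. For the claw I would argue that three pairwise non-adjacent arcs split the circle into exactly three ``gap'' arcs, so any arc meeting all three must contain one of them together with an adjacent gap and part of a second arc, hence \emph{properly} contains an arc --- so no proper representation exists. For the tent, net, $W_4$, $W_5$ and $\overline{C_6}$, each having at most six vertices, I would run a finite case analysis over the cyclic orders of the arcs, showing that every proper representation either forces a proper containment or leaves a maximal clique whose arcs have no common point. For a \monad\ $C_\ell^*$ with $\ell\ge 4$ I would observe that in \emph{any} circular-arc representation of the \mhole\ $C_\ell$ the $\ell$ arcs must cover the whole circle --- else a cut at an uncovered point gives an interval representation of $C_\ell$, impossible since $C_\ell$ ($\ell\ge 4$) is not chordal --- so no arc can be disjoint from all of them and the \centre\ cannot be placed; thus a \monad\ is not even a circular-arc graph.

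\textbf{Sufficiency.} Suppose $G$ has none of the obstructions. If $G$ has no hole, then it is $\{$claw, net, tent, hole$\}$-free, hence a proper interval graph by the classical forbidden-subgraph characterization; drawing each connected component's proper interval model on a separate proper sub-arc of the circle shows $G$ is a \phcag. So assume $G$ has a hole $H = v_1 v_2 \cdots v_\ell$ with $\ell\ge 4$. Then $G$ is connected (a vertex of another component together with $H$ would induce a \monad), and \monad-freeness forces $H$ to dominate $G$: every $v\notin V(H)$ has a neighbour in $H$. My plan is to use $H$ as the ``backbone'' of a circular model and to insert every other vertex around it. Exploiting claw-, net-, tent-freeness, and then $W_4$-, $W_5$-, $\overline{C_6}$-freeness, I would establish, in this order: (i) for every $v\notin V(H)$, $N(v)\cap V(H)$ is a nonempty set of cyclically consecutive vertices of $H$ spanning only a short ``window''; (ii) vertices with overlapping windows interact like a proper interval graph; and (iii) $G$ admits a cyclic ordering of $V(G)$ --- a ``round enumeration'' --- in which the closed neighbourhood of every vertex is consecutive and which restricts to $v_1,\dots,v_\ell$ on $H$. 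From such an ordering one builds a proper circular-arc model by assigning to each vertex a short arc spanning exactly its closed-neighbourhood window (after a routine normalization so that no two arcs share an endpoint, which preserves properness and the Helly property). I would finish by verifying this model is Helly: were it not, a minimal pairwise-intersecting family of arcs with empty total intersection would consist of three arcs overlapping ``the long way round'', and pulling these vertices back together with a suitable part of $H$ would induce one of $W_4$, $W_5$, $\overline{C_6}$, or a \monad --- contradicting obstruction-freeness. Hence $G$ is a \phcag.

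\textbf{Where the difficulty lies.} Necessity is essentially bookkeeping. The weight of the proof is in step (iii) of the sufficiency direction: assembling the purely local information --- each outside vertex sees only a short window of the backbone hole --- into one globally consistent round enumeration of $V(G)$, and then certifying the Helly property of the resulting model. This is exactly where the ``dense'' obstructions $W_4$, $W_5$, $\overline{C_6}$ are needed (they play no role in the hole-free case), and where most of the case analysis sits; delicate configurations such as co-bipartite $G$, or graphs in which holes of several different lengths coexist, have to be treated on their own. A complete proof is given in \cite{DBLP:journals/dam/LinSS13}.
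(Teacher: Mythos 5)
The paper does not prove this proposition: it is imported verbatim from the cited reference, so there is no in-paper argument to compare against. Your outline --- direct verification that each listed graph fails properness, the Helly condition, or circular-arc-ness (the \monad case), plus a round-enumeration construction for sufficiency with the Helly property certified via the ``no three arcs cover the circle'' criterion --- matches the strategy of the cited work of Lin, Soulignac and Szwarcfiter, and you correctly present it as a sketch deferring the substantial case analysis to that reference rather than as a complete proof.
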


 \begin{proposition}[Theorem 1.3~\cite{DBLP:journals/corr/abs-2202-00854}]\label{prop:approxAlg}
	\phcad admits a polynomial-time 6-approximation algorithm, called \appIVD.
\end{proposition}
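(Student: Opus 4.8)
The plan is to split the forbidden induced subgraphs of \phcags recalled above into the six \emph{bounded} obstructions --- the claw, net, tent, $W_4$, $W_5$ and $\overline{C_6}$, each on at most $6$ vertices --- and the infinite family of \monad obstructions, and to handle the two parts in two phases. In the first phase I would greedily build a maximal collection $\mathcal H$ of pairwise vertex-disjoint induced subgraphs of $G$, each isomorphic to one of the six bounded obstructions; a single such subgraph is found, or its absence certified, in $n^{\OO(1)}$ time by inspecting all vertex subsets of size at most $6$. Setting $S_1 := \bigcup_{H \in \mathcal H} V(H)$ and $G_1 := G - S_1$, we get $|S_1| \le 6|\mathcal H|$, the graph $G_1$ contains none of the six bounded obstructions (by maximality), and $|\mathcal H| \le \mathsf{opt}$ --- where $\mathsf{opt}$ denotes the minimum size of a solution to \pac on $G$ --- since any solution must delete a vertex from each (disjoint) member of $\mathcal H$.

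The second phase rests on the structural claim that \emph{every connected graph avoiding the six bounded obstructions is already a \phcag}. Granting it, every connected component of $G_1$ is a \phcag, so $G_1$ can fail to be one only through a \monad whose \mhole lies in one component and whose \centre lies in another; hence, for $X \subseteq V(G_1)$, the graph $G_1 - X$ is a \phcag if and only if it is connected or contains no induced hole (a disjoint union of hole-free proper Helly circular-arc graphs again being one). Since vertex deletions cannot merge components, and deleting an entire component never beats making it chordal (the empty graph being chordal), the minimum such $|X|$ equals $\min\bigl(\sum_i \mathsf{hk}(K_i),\; |V(G_1)| - \max_i |V(K_i)|\bigr)$, where $K_1,\dots,K_p$ are the components of $G_1$ and $\mathsf{hk}(K_i)$ is the minimum number of vertices whose deletion makes $K_i$ chordal. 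Each $K_i$, being a \phcag, is a proper circular-arc graph, so I would compute a proper circular-arc representation of it and observe that destroying all of its holes amounts to ``cutting the circle'': $\mathsf{hk}(K_i)$ is thus computable, up to a small constant factor $c$ (one can take $c \le 2$, e.g.\ via the cheapest scanline through the representation), in polynomial time. Let $S_2$ be the resulting set; then $|S_2| \le c\cdot\mathsf{opt}(G_1)$, with $\mathsf{opt}(G_1)$ the optimum solution size on $G_1$.

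The output is $S := S_1 \cup S_2$; then $G - S = G_1 - S_2$ avoids the six bounded obstructions and, by the second phase, is a \phcag, and every step runs in polynomial time. For the ratio, fix an optimum solution $S^\star$ of $G$; its restriction $S^\star \cap V(G_1)$ is a solution of $G_1$ (as $G - (S_1 \cup S^\star)$ is an induced subgraph of the \phcag $G - S^\star$), so $\mathsf{opt}(G_1) \le |S^\star \cap V(G_1)| = |S^\star| - |S^\star \cap S_1| \le \mathsf{opt} - |\mathcal H|$, using that $S^\star$ meets every member of $\mathcal H$. Therefore
\[
 |S| \;=\; 6|\mathcal H| + |S_2| \;\le\; 6|\mathcal H| + c\,(\mathsf{opt} - |\mathcal H|) \;=\; (6-c)|\mathcal H| + c\,\mathsf{opt} \;\le\; 6\,\mathsf{opt},
\]
since $|\mathcal H| \le \mathsf{opt}$ and $c \le 6$ (with $c \le 2$ there is ample slack). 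This yields the claimed polynomial-time $6$-approximation \appIVD.

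The one genuinely non-routine ingredient is the structural claim: a connected graph avoiding the claw, net, tent, $W_4$, $W_5$ and $\overline{C_6}$ contains no induced \monad. I would prove it by contradiction: given such a \monad, take a shortest path from its \centre to its \mhole and examine how the first two or three vertices of that path attach to the hole --- one neighbour on the hole, two consecutive neighbours, two non-consecutive neighbours, and so on --- each pattern forcing one of the six bounded obstructions on a constant number of vertices. Making this finite case analysis airtight, together with pinning down the polynomial-time (constant-factor) computation of $\mathsf{hk}(K_i)$ from the proper circular-arc representation, is where the real work lies; the rest is bookkeeping.
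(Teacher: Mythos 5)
The paper does not prove this proposition at all --- it is imported verbatim as Theorem~1.3 of Cao et al., so there is no in-paper argument to compare against; you are supplying a proof from scratch. Unfortunately, your proof has a fatal gap: the structural claim on which Phase~2 rests --- that every connected graph avoiding the claw, net, tent, $W_4$, $W_5$ and $\overline{C_6}$ is already a \phcag --- is false. Take an induced $C_4=h_1h_2h_3h_4$, add $u$ adjacent to $h_1$ and $h_2$ only, and add $v$ adjacent to $u$ only. This graph is connected; every vertex has degree at most~$3$ and no vertex has three pairwise non-adjacent neighbours, so there is no claw, $W_4$, $W_5$ or $\overline{C_6}$, and a quick edge count rules out the net and the tent. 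Yet $\{h_1,h_2,h_3,h_4,v\}$ induces $C_4^*$, a \monad, so the graph is not a \phcag. Your intended case analysis (``each attachment pattern of the connecting path forces a bounded obstruction'') breaks precisely in the case where the attaching vertex $u$ sees two consecutive vertices of a hole of length $4$ or $5$: the would-be net degenerates because its two hole-side pendants become adjacent. Consequently $G_1$ need not decompose into \phcag components, and the entire second phase --- the reduction to ``make each component hole-free or delete all but one component,'' and the formula for the optimum on $G_1$ --- does not apply. This is not a cosmetic issue: the kernelization paper itself classifies \emph{all} obstructions on fewer than $\cyclebound$ vertices, explicitly including the short \monad{}s $C_\ell^*$ for $\ell\le 11$, as ``small,'' which is a strong hint that any packing-based first phase must also pack short \monad{}s, and that the residual problem is hitting \emph{long} \monad{}s, which Cao et al.\ handle with a genuinely different (connectivity/cut-based) analysis rather than a per-component chordalization.

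A secondary, independent gap is the claim that $\mathsf{hk}(K_i)$ (minimum vertex deletion to make a proper circular-arc component chordal) is $2$-approximable by a ``cheapest scanline'': you neither prove that the cheapest point load lower-bounds twice the optimum nor that deleting the arcs through one point suffices to kill all holes together with the cross-component \monad{}s; as written this is an assertion, not an argument. Since the approximation ratio bookkeeping in your final display is fine \emph{given} $c\le 6$ and the (false) structural claim, the proof stands or falls with those two ingredients, and the first one falls.
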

\begin{sloppypar}{\noindent \textit{Nice Clique Partition.}} \label{sec:clique}
	For a connected graph $ G $, a clique partition $\cd= ( Q_1, Q_2, \ldots, Q_{|\cd|}(=Q_0)) $ is called  a  \emph{nice clique partition} of  $G$ if (i) $\bigcup_{i} V(Q_i)= V(G) $, 
	(ii) $V(Q_i) \cap V(Q_j) = \emptyset$ if $i \neq j$, and (iii) $E(Q_i, Q_j) = \emptyset$ if $|i-j| > 1$   holds. In such a \emph{nice clique partition} every edge of $G$ is either inside a clique in $\mathbb{Q}$ or present between vertices from adjacent cliques. For a proper circular-arc graph such a partition always exists and can be obtained in $n^{\OO(1)} $ time using a procedure similar to that for a proper interval graph \cite{DBLP:conf/latin/0001S0Z18}.
	 \end{sloppypar}

\section{Constructing an Efficient Modulator}

 We classify the set of obstructions for \phcags as follows. Any obstruction of size less than $\cyclebound$  is known as a {\em small} obstruction, while other obstructions are said to be \emph{large}. In this section we construct an \emph{efficient modulator} $T_1$, of size $\OO(k^{\cyclebound})$ such that $G-T_1$ is a \phcag  with some additional properties that are mentioned in later part.
 
\begin{proposition}[Lemma 3.2~\cite{DBLP:journals/siamdm/FominSV13}] \label{lem:hitting_set}
Let $\mathcal{F}$ be a family of sets of cardinality at most $d$ over a universe $U$ and let $k$ be a
positive integer. Then there is an $\mathcal{O}(|\mathcal{F}|(k + |\mathcal{F}|))$ time algorithm that finds a non-empty family of sets
$\mathcal{F}' \subseteq \mathcal{F}$ such that
\begin{enumerate}
    \item  For every $Z \subseteq U$ of size at most $k$, $Z$ is a minimal hitting set of $\mathcal{F}$ if and only if $Z$ is a
minimal hitting set of $\mathcal{F}'$; and
\item $|\mathcal{F}'| \leq d!(k + 1)^d$.
\end{enumerate}
\end{proposition}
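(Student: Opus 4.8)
\smallskip
\noindent\emph{Proof plan.}\quad
The plan is the classical sunflower-based kernelization for $d$-{\sc Hitting Set}. The key tool is the Erd\H{o}s--Rado sunflower lemma: any family of more than $d!\,(t-1)^{d}$ sets, each of cardinality at most $d$, contains a \emph{sunflower with $t$ petals}, i.e.\ $t$ distinct members $S_1,\dots,S_t$ together with a \emph{core} $Y$ such that $S_i\cap S_j=Y$ for all $i\ne j$. Such a sunflower can moreover be extracted greedily and quickly: take a maximal subfamily of pairwise disjoint sets; if it already has $t$ members we are done with $Y=\emptyset$, and otherwise its union $R$ has at most $(t-1)d$ elements and, by maximality, meets every member of the family, so some $u\in R$ lies in at least a $\tfrac1{(t-1)d}$ fraction of the family — recurse on the members containing $u$, with $u$ deleted and $d$ decreased by one, and reinsert $u$ into the core afterwards. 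For fixed $d$ and with $t=k+2$ this runs within the time stated in the proposition.

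Given this, the algorithm is: while $|\mathcal F|>d!\,(k+1)^{d}$, extract a sunflower $S_1,\dots,S_{k+2}$ with core $Y$ and delete a single petal, say $S_{k+2}$, from $\mathcal F$; return the resulting family $\mathcal F'$. Item~(2) is then immediate from the stopping condition, and $\mathcal F'$ stays non-empty because $d!\,(k+1)^{d}\ge 1$, so the loop never removes the last set. For item~(1) it suffices to prove that a single deletion is safe, meaning $\mathcal F$ and $\mathcal F\setminus\{S_{k+2}\}$ have exactly the same minimal hitting sets of size at most $k$; item~(1) then follows by induction on the number of deletions performed.

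Safeness of one step hinges on a single pigeonhole fact: the $k+1$ petals $S_1,\dots,S_{k+1}$ are pairwise disjoint outside $Y$, hence any $Z$ with $|Z|\le k$ that hits all of them must meet $Y$ (otherwise $Z$ would have to spend a distinct element on each of the $k+1$ pairwise disjoint sets $S_i\setminus Y$), and then, since $Y\subseteq S_{k+2}$, such a $Z$ also hits $S_{k+2}$. One direction follows at once: if $Z$ is a minimal hitting set of $\mathcal F\setminus\{S_{k+2}\}$ with $|Z|\le k$, then $Z$ hits $S_{k+2}$ as well, so $Z$ hits $\mathcal F$, and minimality transfers because $\mathcal F\setminus\{S_{k+2}\}\subseteq\mathcal F$. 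For the other direction, let $Z$ be a minimal hitting set of $\mathcal F$ with $|Z|\le k$; it hits $\mathcal F\setminus\{S_{k+2}\}$, so the only thing to check is that for each $z\in Z$ the set $Z\setminus\{z\}$ misses some member of $\mathcal F\setminus\{S_{k+2}\}$. By minimality of $Z$ in $\mathcal F$, $Z\setminus\{z\}$ misses some $S\in\mathcal F$; if $S\ne S_{k+2}$ we are done, and if $S=S_{k+2}$ then $Z\setminus\{z\}$ avoids $Y$ (as $Y\subseteq S_{k+2}$), while $Z$ meets $Y$ by the fact above, so $z$ is the unique element of $Z$ in $Y$ and $|Z\setminus\{z\}|\le k-1$; by the same pigeonhole $Z\setminus\{z\}$ cannot hit all of $S_1,\dots,S_{k+1}$, so it misses one of them, a set that survives in $\mathcal F\setminus\{S_{k+2}\}$.

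The only genuinely delicate point, and the one I would write out in full, is this last case analysis in the minimality direction; the rest is bookkeeping, and meeting the stated running time is just a matter of implementing the greedy sunflower extraction and the outer loop with care.
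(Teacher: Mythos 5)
This proposition is imported by the paper without proof (it is Lemma 3.2 of the cited Fomin--Saurabh--Villanger reference), and your argument is precisely the standard sunflower-based proof used there: repeatedly extract a sunflower with $k+2$ petals, discard one petal, and observe that the core must be hit by any small hitting set, with the $(k+2)$-nd petal reserved exactly to preserve \emph{minimality} as in your case analysis. Your proposal is correct, including the delicate minimality direction, so there is nothing to add beyond the routine edge case where a petal equals the core (which your pigeonhole already covers) and the bookkeeping for the stated running time.
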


Using \cref{lem:hitting_set} we identify a vertex subset of $V (G)$, which allows us to forget about \emph{small} obstructions in $G$ and  concentrate on \emph{large} obstructions for the kernelization algorithm for \phcad. 

\begin{lemma} \label{lem:minimal_hitting}
Let $(G, k)$ be an instance of \phcad. In polynomial-time,  either we conclude that  $(G, k)$ is a \no-instance, or we can construct a vertex subset $T_1$ such that
\begin{enumerate}
    \item  Every set $Y \subseteq V (G)$ of size at most $k$ is a minimal hitting set for all \emph{small} obstructions in $G$ if and only if it is a minimal hitting set for all \emph{small} obstructions in $G[T_1]$; and
    \item $ |T_1| \leq  12!{(k + 1)}^{12}+6k $.
\end{enumerate}
\end{lemma}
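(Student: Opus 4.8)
\noindent\emph{Proof idea.}
The plan is to obtain $T_1$ by combining the $6$-approximation of \cref{prop:approxAlg} with the representative-family machinery of \cref{lem:hitting_set}. First I would run \appIVD\ on $(G,k)$, obtaining a vertex set $A$ with $|A|\le 6\cdot\mathrm{opt}$, where $\mathrm{opt}$ is the minimum size of a solution. If $|A|>6k$ then $\mathrm{opt}>k$ and we report that $(G,k)$ is a \no-instance. Otherwise $|A|\le 6k$, and since $A$ is itself a solution, $G-A$ is a \phcag; by the obstruction characterization it then contains no obstruction whatsoever, so \emph{every} obstruction of $G$ --- in particular every small one --- contains at least one vertex of $A$.

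Next I would assemble the family $\mathcal{F}$ of vertex sets of all small obstructions of $G$. Since every small obstruction has at most $\cyclebound$ vertices and there are only finitely many isomorphism types of them, $\mathcal{F}$ can be computed in $n^{\OO(1)}$ time by testing all $\OO(n^{\cyclebound})$ candidate subsets. Apply \cref{lem:hitting_set} with universe $U=V(G)$, the family $\mathcal{F}$, $d\le\cyclebound$, and parameter $k$ to get $\mathcal{F}'\subseteq\mathcal{F}$ with $|\mathcal{F}'|\le 12!\,(k+1)^{12}$ such that a set of size at most $k$ is a minimal hitting set of $\mathcal{F}$ if and only if it is a minimal hitting set of $\mathcal{F}'$. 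Then define $T_1 := A\cup\bigcup_{F\in\mathcal{F}'}F$. Property~(2) is then a routine count: $|A|\le 6k$, $|\mathcal{F}'|\le 12!\,(k+1)^{12}$, and each $F\in\mathcal{F}'$ contributes to $T_1\setminus A$ only its vertices outside $A$, of which there are at most $\cyclebound-1$ since $F\cap A\neq\emptyset$. Also $G-T_1$ is an induced subgraph of the \phcag\ $G-A$, hence a \phcag, which is the auxiliary property needed later.

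The substance of the proof is property~(1). Let $\mathcal{G}$ be the family of vertex sets of small obstructions of $G[T_1]$. Every small obstruction of $G[T_1]$ is a small obstruction of $G$, so $\mathcal{G}\subseteq\mathcal{F}$; and every $F\in\mathcal{F}'$ lies inside $T_1$, so $\mathcal{F}'\subseteq\mathcal{G}$. Hence it is enough to prove: \emph{for any family $\mathcal{G}$ with $\mathcal{F}'\subseteq\mathcal{G}\subseteq\mathcal{F}$, a set $Y$ with $|Y|\le k$ is a minimal hitting set of $\mathcal{G}$ if and only if it is a minimal hitting set of $\mathcal{F}$.} For the forward direction, if $Y$ is a minimal hitting set of $\mathcal{F}$ then by \cref{lem:hitting_set} it is a minimal hitting set of $\mathcal{F}'$, so every $y\in Y$ is the unique element of $Y$ meeting some $F_y\in\mathcal{F}'\subseteq\mathcal{G}$; since $Y$ also hits every member of $\mathcal{G}\subseteq\mathcal{F}$, it is a minimal hitting set of $\mathcal{G}$. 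For the reverse direction --- the step I expect to need the most care --- let $Y$ be a minimal hitting set of $\mathcal{G}$ with $|Y|\le k$. A priori $Y$ could hit $\mathcal{F}'$ non-minimally (an element being forced only by some obstruction in $\mathcal{G}\setminus\mathcal{F}'$), so I would pick a minimal $Y'\subseteq Y$ that still hits $\mathcal{F}'$; then $Y'$ is a minimal hitting set of $\mathcal{F}'$ with $|Y'|\le k$, hence by \cref{lem:hitting_set} a minimal hitting set of $\mathcal{F}$, and in particular $Y'$ hits every member of $\mathcal{G}\subseteq\mathcal{F}$. Minimality of $Y$ for $\mathcal{G}$ forces $Y'=Y$, so $Y$ is a minimal hitting set of $\mathcal{F}'$, and one last application of \cref{lem:hitting_set} shows $Y$ is a minimal hitting set of $\mathcal{F}$. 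Taking $\mathcal{G}$ to be the small obstructions of $G[T_1]$ yields property~(1).

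The only genuinely delicate point is the reverse direction above: $G[T_1]$ may contain small obstructions beyond those placed in $\mathcal{F}'$, and one must argue that these extra obstructions cannot break minimality of a hitting set of size at most $k$. Everything else --- the approximation dichotomy, enumerating $\mathcal{F}$, invoking \cref{lem:hitting_set}, and the size count --- is routine.
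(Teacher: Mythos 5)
Your proposal is correct and follows the same high-level route as the paper: run the $6$-approximation to get a small solution $A$ (or declare a \no-instance), enumerate the small-obstruction family $\mathcal{F}$, apply \cref{lem:hitting_set} to get a representative subfamily $\mathcal{F}'$, and take $T_1 = A \cup \bigcup_{F\in\mathcal{F}'}F$. The one place you go beyond the paper's terse argument is exactly the point you flag as delicate: the paper simply asserts property~(1) after invoking \cref{lem:hitting_set}, but $G[T_1]$ may contain small obstructions of $G$ that are not in $\mathcal{F}'$, so the proposition does not immediately give the equivalence; your ``sandwich'' claim --- that for any $\mathcal{G}$ with $\mathcal{F}'\subseteq\mathcal{G}\subseteq\mathcal{F}$ a size-$\le k$ set is a minimal hitting set of $\mathcal{G}$ iff it is one of $\mathcal{F}$, proved by passing through a minimal $Y'\subseteq Y$ hitting $\mathcal{F}'$ --- is a clean way to close that gap. (Your size bound has a harmless extra factor of $\cyclebound-1$ from taking the union of the sets in $\mathcal{F}'$, but the paper's own stated constant has a comparable discrepancy with its proof; both give $\OO(k^{\cyclebound})$, which is all that is used downstream.)
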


\begin{proof}
	Using \cref{prop:approxAlg}, in polynomial-time we construct a $6$-approximate solution $T'$.
	We also construct $\mathcal{F}_G,U_G$ where $U_G$ consists of all the vertices in $G$ while
	$\mathcal{F}_G$ contains every minimal set of vertices in $G$ that induces a {\em small} obstruction. Applying \cref{lem:hitting_set} on $\mathcal{F}_G,U_G$, in polynomial-time we construct a vertex set $T''$. If $|T'|>6k$ or 
	$|T''|>(\cyclebound+1)!(k + 1)^{\cyclebound}$, we conclude that $(G,k)$ is a \no-instance.
	Otherwise, we have a modulator $T_1=T'\cup T''$  of size $\OO(k^{\cyclebound})$, such that $G-T_1$ is a \phcag, and every set $Y \subseteq V (G)$ of size at most $k$ is a minimal hitting set of all \emph{small} obstructions in $G$ if and only if it is a minimal hitting set for all \emph{small} obstructions in $G[T_1]$. \qed
\end{proof}

Let $S$ be a minimal (or minimum) solution of size at most $k$. Then, the only purpose of the vertices in $S\cap (V(G)\setminus T_1)$ is to hit \emph{large} obstructions. We call the modulator constructed above an {\em efficient modulator}.  We summarize these discussions in the next lemma.

\begin{lemma}\label{lem-efficient}
Let $(G, k)$ be an instance of \phcad.  In polynomial time, we can either construct an \emph{efficient} modulator $T_1\subseteq V(G)$ of size $\OO(k^{\cyclebound})$, or conclude that $(G, k)$ is a \no-instance.
\end{lemma}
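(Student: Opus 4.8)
The statement is essentially a repackaging of \cref{lem:minimal_hitting}, so the plan is short. First I would run, in polynomial time, the procedure of \cref{lem:minimal_hitting} on the instance $(G,k)$. If that procedure reports that $(G,k)$ is a \no-instance, we are done. Otherwise it returns a vertex set $T_1$ satisfying the two bullets stated there, and the only remaining task is to verify that this $T_1$ is an \emph{efficient modulator} of the required size, i.e. that $G-T_1$ is a \phcag and that $|T_1|\in\OO(k^{\cyclebound})$.

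For the size bound I would simply invoke part~2 of \cref{lem:minimal_hitting}: $|T_1|\leq 12!(k+1)^{12}+6k$, which is $\OO(k^{\cyclebound})$. For the structural property, recall from the construction in \cref{lem:minimal_hitting} that $T_1=T'\cup T''$, where $T'$ is the $6$-approximate solution produced by \appIVD\ (\cref{prop:approxAlg}); in particular $G-T'$ is a \phcag. Since the class of \phcags is characterized by a fixed set of forbidden \emph{induced} subgraphs (\cref{fig:obs}), it is closed under vertex deletion; hence $G-T_1$, being an induced subgraph of $G-T'$, is also a \phcag. All of these steps run in polynomial time because \cref{lem:minimal_hitting} does.

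The word \emph{efficient} then refers to part~1 of \cref{lem:minimal_hitting}: for every $Y\subseteq V(G)$ with $|Y|\leq k$, $Y$ is a minimal hitting set for the small obstructions of $G$ if and only if it is one for the small obstructions of $G[T_1]$. Combining this with $G-T_1$ being a \phcag formalizes the intended reading of the lemma: if $S$ is any minimal (or minimum) solution with $|S|\leq k$, then $S$ already hits every small obstruction of $G$, and this is witnessed entirely inside $G[T_1]$, so the vertices of $S$ lying outside $T_1$ are only ever needed to destroy \emph{large} obstructions. I do not expect a genuine obstacle here: the real work — the approximation algorithm of \cref{prop:approxAlg} and the hitting-set sparsification of \cref{lem:hitting_set} — is already encapsulated in \cref{lem:minimal_hitting}. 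The only point needing (minor) care is checking that forming the union $T'\cup T''$ simultaneously keeps the size polynomial and does not spoil the minimal-hitting-set equivalence, which is exactly what \cref{lem:minimal_hitting} asserts about $T_1$.
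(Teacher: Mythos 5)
Your proposal is correct and follows essentially the same route as the paper: the paper itself presents \cref{lem-efficient} as a one-line summary of \cref{lem:minimal_hitting} (``we summarize these discussions in the next lemma''), and you correctly identify that the size bound comes from part~2, the structural property $G-T_1$ being a \phcag\ comes from $T'\subseteq T_1$ plus heredity, and the ``efficient'' property comes from part~1. No gap.
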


\section{Computing a Redundant Solution}\label{sec:redundant}

In this section, our main purpose is to prove \cref{lem:redundant}. Intuitively,  this lemma asserts that in $n^{\OO(1)}$ time we can compute an $r$-redundant solution $M$ whose size is polynomial in $k$ (for a fixed constant $r$). Such a set $M$ plays a crucial role in many of the reduction rules that follow this section while designing our kernelization algorithm. We remark that in this section we use the letter $\ell$ rather than $k$ to avoid confusion, as we will use this result with $\ell=k+2$. Towards the definition of redundancy, we require the following notions and definitions.

\begin{definition}[$t$-solution]
	Let $(G,k)$ be an instance of \phcad. A subset $S\subseteq V(G)$  of size at most $t$ such that $G- S$ is a \phcag  is called a {\em $t$-solution}.
\end{definition}
\begin{definition}[$t$-necessary]
	A family ${\cal W}\subseteq 2^{V(G)}$ is called  {\em $t$-necessary} if and only if every {\em $t$-solution} is a hitting set for ${\cal W}$.
\end{definition}

Given a family ${\cal W}\subseteq 2^{V(G)}$, we say that an obstruction $\mathbb{O}$ is {\em covered by $\cal W$} if there exists $W\in{\cal W}$, such that $W\subseteq V(\mathbb{O})$. 

\begin{definition}[$t$-redundant]\label{def:redundant}{\rm
		Given a family ${\cal W}\subseteq 2^{V(G)}$ and $t\in\mathbb{N}$, a subset $M\subseteq V(G)$ is {\em $t$-redundant with respect to $\cal W$} if for every obstruction $\mathbb{O}$ that is not covered by $\cal W$, it holds that $|M\cap V(\mathbb{O})|>t$.}
\end{definition}

\begin{definition}\label{def:copy}{\rm
		Let $G$ be a graph, $U\subseteq V(G)$, and $t\in\mathbb{N}$. Then, $\copyg(G,U,t)$ is defined as the graph $G'$ in the vertex set $V(G)\cup\{v^i \mid v\in U, i\in[t]\}$ and the edge set $E(G)\cup\{(u^i,v) \mid (u,v)\in E(G), u \in U,i \in[t]\}\cup \{(u^i,v^j) \mid (u,v)\in E(G), u,v\in U, i,j\in[t]\}\cup\{(v,v^i) \mid v\in U, i\in[t]\}\cup\{(v^i,v^j) \mid v\in U, i,j\in[t], i\neq j\}$.}
\end{definition}

Informally, $\copyg(G,U,t)$ is simply the graph $G$ where for every vertex $u\in U$, we add $t$ twins that (together with $u$) form a clique. Intuitively, this operation allows us to make a vertex set ``undeletable''; in particular, this enables us to test later whether a vertex set is ``redundant'' and therefore we can grow the redundancy of our solution, or whether it is ``necessary'' and hence we should update $\cal W$ accordingly. Before we turn to discuss computational issues, let us first assert that the operation in \cref{def:copy} does not change the class of the graph, which means it remains a \phcag. We verify this in the following lemma.

\begin{lemma}\label{lem:addCliquTwins}
	Let $G$ be a graph, $U\subseteq V(G)$, and $t\in\mathbb{N}$. If $G$ is a \phcag, then $G'=\copyg(G,U,t)$ is also a \phcag.
\end{lemma}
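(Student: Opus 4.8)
The plan is to work directly with a proper and Helly arc representation of $G$ and show that it can be extended to a proper and Helly arc representation of $G' = \copyg(G,U,t)$ by, for each $u \in U$, placing the $t$ new copies $u^1,\dots,u^t$ as arcs that are ``almost identical'' to the arc of $u$. The intuition is that the copies behave as twins of $u$ forming a clique together with $u$, so geometrically they should be tiny perturbations of $\mathrm{arc}(u)$.

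\medskip

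\noindent\textbf{Setup.} First I would fix a proper Helly arc representation $\{A_v : v \in V(G)\}$ of $G$, where $A_v$ is an arc on a fixed circle $\mathcal{C}$; properness says no $A_v$ properly contains another, and Helly says every pairwise-intersecting subfamily has a common point. By a standard normalization I may assume all endpoints of the arcs are distinct (perturb slightly), and that no arc is the whole circle. Now for each $u \in U$ I want to choose arcs $A_{u^1},\dots,A_{u^t}$ so that the enlarged family is again proper and Helly and realizes exactly the edge set of $G'$.

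\medskip

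\noindent\textbf{Construction of the copies.} For a vertex $u \in U$ with arc $A_u = [a,b]$ (going clockwise from $a$ to $b$), choose $t$ pairwise-distinct points $a = a_0 ,a_1,\dots,a_t$ packed in a tiny clockwise neighborhood of $a$ and $t$ points $b_1,\dots,b_t, b_0 = b$ packed in a tiny clockwise neighborhood of $b$, small enough that the perturbed endpoints do not cross any endpoint of any other arc. Set $A_{u^i} := [a_i, b_i]$. Doing this simultaneously for all $u \in U$ (with the neighborhoods chosen disjoint and small enough that they interact with no ``foreign'' endpoints) gives the candidate representation for $G'$.

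\medskip

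\noindent\textbf{Verification — four things to check.} The core of the proof is verifying the following, each of which I expect to be short once the endpoints are set up carefully:
\begin{enumerate}
\item \emph{Correct adjacencies.} Since each $A_{u^i}$ differs from $A_u$ only by moving endpoints within tiny neighborhoods that cross no foreign endpoint, $A_{u^i}$ intersects $A_v$ (for $v \neq u$, $v$ possibly itself a copy) iff $A_u$ intersects $A_v$; and the perturbations at $a$ and $b$ can be nested (e.g.\ $a_1,\dots,a_t$ in clockwise order and $b_1,\dots,b_t$ also clockwise) so that all of $A_u, A_{u^1},\dots,A_{u^t}$ pairwise overlap near $a$, giving the clique on $\{u\}\cup\{u^i\}$. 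Also two copies $u^i$ and $v^j$ of distinct $u\neq v$ intersect iff $A_u \cap A_v \neq \emptyset$. This matches the edge set in \cref{def:copy}.
\item \emph{Properness.} No arc properly contains another: among the original arcs this held; an $A_{u^i}$ cannot properly contain or be properly contained in $A_u$ because their endpoints are interleaved (I will choose $a_i$ slightly clockwise of $a$ but $b_i$ slightly counter-clockwise of $b$, or handle the ordering so containment fails); and $A_{u^i}$ versus a foreign arc $A_v$ reduces to $A_u$ versus $A_v$ by the smallness of the perturbation.
\item \emph{Helly.} Take a pairwise-intersecting subfamily $\mathcal{B}$ of the new family. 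Replace every copy $A_{u^i}$ occurring in $\mathcal{B}$ by $A_u$; the resulting family $\mathcal{B}'$ of original arcs is still pairwise intersecting (by item 1), hence by the Helly property of the original representation has a common point $p$. I then argue $p$ (or a nearby point) lies in every member of $\mathcal{B}$: this is where I must be a little careful, because moving from $A_u$ to $A_{u^i}$ could in principle drop the point $p$ if $p$ was exactly near the endpoint being moved; the fix is to note that in a Helly representation one can take the common point to be in the interior and away from all endpoints, or equivalently to shrink the perturbation neighborhoods to avoid $p$. Since there are only finitely many subfamilies, a single small enough choice of perturbation works for all of them simultaneously.
\end{enumerate}

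\medskip

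\noindent\textbf{Main obstacle.} The one genuinely delicate point is the Helly verification (item 3): I must ensure that enlarging the family does not create a new pairwise-intersecting subfamily with empty common intersection. The clean way to handle this is not to chase $\varepsilon$'s but to observe that $A_u$ and all its copies $A_{u^i}$ share a common sub-arc (the tiny overlap region near $a$ that they all contain), so any intersection point of $A_u$ with the rest of $\mathcal{B}'$ that I am free to slide into that shared region works for all copies at once; combined with choosing the Helly witness point in the interior, this closes the argument. An even slicker alternative, if available from the cited characterization, is to avoid geometry entirely: since $G$ is a \phcag\ and $G'$ is obtained from $G$ by iteratively adding a \emph{true twin} of an existing vertex, it suffices to show the class of \phcags\ is closed under adding a true twin, which follows because adding a true twin cannot create any induced claw, net, tent, $W_4$, $W_5$, $\overline{C_6}$, or \monad\ that was not already present (in each obstruction, no two vertices are true twins, so a true-twin pair cannot both lie in an induced obstruction, and replacing one by the other shows the obstruction already existed). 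I would present the representation-based proof as the main argument and may remark on the twin-closure argument as an alternative.
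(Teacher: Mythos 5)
Your main representation-based argument is, at heart, the same idea as the paper's but made harder than it needs to be: the paper simply gives every copy $u^i$ the \emph{identical} arc that $u$ has. Since ``properly contains'' means strict containment, identical arcs do not violate properness; the intersection graph of the enlarged family is exactly $G'$; and removing duplicates from any pairwise-intersecting subfamily of the new representation yields a pairwise-intersecting subfamily of the old one, whose Helly witness point automatically lies in every arc of the new subfamily (including the duplicates). No perturbation, no $\varepsilon$-chasing, no normalization of endpoints. Your perturbation route can be made to work, but be careful about a slip in your properness step: you say you will take $a_i$ slightly clockwise of $a$ and $b_i$ slightly counter-clockwise of $b$, which pushes both endpoints inward and makes $A_{u^i}$ a proper sub-arc of $A_u$ --- exactly the containment you were trying to rule out. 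The safe choice is to shift both endpoints in the \emph{same} direction by the same tiny amount (or, again, just not to perturb at all). Your second, obstruction-based observation --- that $G'$ is built by iteratively adding true twins, and none of claw, net, tent, $W_4$, $W_5$, $\overline{C_6}$, nor any \monad $C_\ell^*$ contains a pair of true twins, so adding a true twin cannot create a forbidden induced subgraph --- is genuinely different from the paper's proof and arguably cleaner; it avoids the geometry entirely at the cost of a short case check over the obstruction list.
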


\begin{proof}
	Suppose that $G$ is a \phcag. Then $G$ admits a proper circular-arc representation and has Helly property (no three arcs cover the circle \cite{DBLP:journals/corr/abs-2202-00854}) i.e. all its vertices can be presented as arcs on a circle $C$. Notice that the newly introduced vertices in $G'$ are twin (copy) vertices of $G$. These twin vertices are given the same arc representations on C as the original vertices in $G$. It is easy to see that this indeed is a proper circular-arc representation of $G'$ with Helly property and hence $G'$ is also a \phcag.\qed
\end{proof}

Now, we present two simple claims that exhibit relations between the algorithm \appIVD\ and  \cref{def:copy}. After presenting these two claims, we will be ready to give our algorithm for computing a redundant solution. Generally speaking, the first claim shows the meaning of a situation where \appIVD returns a ``large'' solution; intuitively, for the purpose of the design of our algorithm, we interpret this meaning as an indicator to extend ${\cal W}$. 

\begin{lemma}\label{lem:approxFails}
	Let $G$ be a graph, $U\subseteq V(G)$, and $\ell\in\mathbb{N}$. If the algorithm \appIVD\ returns a set $A$ of size larger than $6\ell$ when called with  $G'=\copyg(G,U,6\ell)$ as input, then $\{U\}$ is $\ell$-necessary.
\end{lemma}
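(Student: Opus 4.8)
## Proof Plan for Lemma \ref{lem:approxFails}

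\textbf{The plan is to argue by contraposition.} Suppose $\{U\}$ is not $\ell$-necessary; I will show that \appIVD, when called on $G' = \copyg(G, U, 6\ell)$, returns a set of size at most $6\ell$. By \cref{def:copy} of $t$-necessary, if $\{U\}$ is not $\ell$-necessary then there exists an $\ell$-solution $S \subseteq V(G)$ that is \emph{not} a hitting set for $\{U\}$, i.e., $S \cap U = \emptyset$. So $G - S$ is a \phcag\ and $|S| \leq \ell$. The key point is that such an $S$ can be ``lifted'' to a small solution of $G'$.

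\textbf{First I would construct an explicit solution of $G'$ from $S$.} Since $S \cap U = \emptyset$, every vertex of $S$ lies in $V(G)$ and none of the copy vertices $\{v^i : v \in U, i \in [6\ell]\}$ is a twin of any vertex in $S$. Consider $G' - S$. I claim $G' - S = \copyg(G - S, U, 6\ell)$: deleting $S$ (which is disjoint from $U$ and from all copies) commutes with the copy operation, since the edges added by $\copyg$ only involve vertices of $U$ and their copies, none of which are removed, and the adjacencies among them mirror adjacencies in $G - S$ restricted to $U$. Since $G - S$ is a \phcag\ and $U \subseteq V(G-S)$, \cref{lem:addCliquTwins} gives that $\copyg(G-S, U, 6\ell) = G' - S$ is a \phcag. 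Hence $S$ is a solution for $G'$ of size at most $\ell$, so $G'$ has a solution of size at most $\ell$; equivalently, $G'$ is at Hamming-type distance at most $\ell$ from being a \phcag.

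\textbf{Then I would invoke the approximation guarantee.} By \cref{prop:approxAlg}, \appIVD\ is a polynomial-time factor-$6$ approximation for \phcad. Running it on $G'$, it returns a set $A$ such that $G' - A$ is a \phcag\ and $|A| \leq 6 \cdot \mathrm{opt}(G')$, where $\mathrm{opt}(G')$ is the minimum size of a solution for $G'$. Since we have exhibited a solution of size at most $\ell$, $\mathrm{opt}(G') \leq \ell$, and therefore $|A| \leq 6\ell$. This contradicts the hypothesis that $A$ has size larger than $6\ell$. Contraposition then yields that $\{U\}$ is $\ell$-necessary, completing the proof.

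\textbf{The main obstacle} I expect is the bookkeeping in the claim $G' - S = \copyg(G-S, U, 6\ell)$ — one must check carefully, going through each of the four edge-set terms in \cref{def:copy}, that deleting a set $S$ disjoint from $U$ (and automatically disjoint from the copies, which are new vertices) leaves exactly the copy-graph of $G - S$. This is where the disjointness $S \cap U = \emptyset$, guaranteed by $S$ not being a hitting set for $\{U\}$, is essential; if $S$ intersected $U$, the copies of vertices in $S \cap U$ would survive in $G' - S$ as vertices with no original, breaking the identification. Everything else is a direct appeal to \cref{lem:addCliquTwins} and \cref{prop:approxAlg}.
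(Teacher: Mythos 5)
Your proposal is correct and follows essentially the same route as the paper: both arguments hinge on the observation that an $\ell$-solution $S$ with $S\cap U=\emptyset$ satisfies $G'-S=\copyg(G-S,U,6\ell)$, invoke \cref{lem:addCliquTwins} to conclude $G'-S$ is a \phcag, and then use the factor-$6$ guarantee of \appIVD\ to derive the contradiction. The only difference is presentational (you argue by contraposition, the paper by contradiction with a separately dispatched vacuous case), which changes nothing of substance.
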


\begin{proof}
	Suppose that \appIVD\ returns a set $A$ of size larger than $6\ell$ when called with $G'$ as input. Then, $(G',\ell)$ is a \no-instance. If $(G,\ell)$ is a \no-instance, then trivially, we can say that $\{U\}$ is $\ell$-necessary (as there is no solution of size at most $\ell$, so the statement is vacuously true). Now consider the case when $G$ has an $\ell$-solution $S$ such that $S\cap U=\emptyset$. In particular, $\widehat{G}=G-S$ is a \phcag such that $U\subseteq V(\widehat{G})$. However, this means that $\copyg(\widehat{G},U,6\ell)=G'-S$, which by  \cref{lem:addCliquTwins} implies that $G'-S$ is a \phcag. Thus, $S$ is an $\ell$-solution for $G'$, which is a contradiction (as $(G',\ell)$ is a \no-instance).\qed
\end{proof}

Complementing our first claim, the second claim exhibits the meaning of a situation where \appIVD returns a ``small'' solution $A$; we interpret this meaning as an indicator of growing the redundancy of our current solution $M$ by adding $A$ ---- indeed, this lemma implies that every obstruction is hit one more time by adding $A$ to a subset $U\subseteq M$ (to grow the redundancy of $M$, every subset $U\subseteq M$ will have to be considered). 

\begin{tcolorbox}
	\textbf{\underline{Algorithm 2: \redIVD$(G, \ell, r)$}}
	
	\begin{enumerate}
		\item Initialization: \\$M_0:=  \appIVD(G)$, \\ ${\cal W}_0:=\emptyset$, \\${\cal T}_0:=\{(v) \mid v\in M_0\}$.
		
		\item 
		If $|M_0|>6\ell$, return ``$(G,\ell)$ is a \no-instance''.
		
		Otherwise, $i=1$ and go to Step 3.
		
		\item While  $i\leq r$, for every tuple $(v_0,v_1,\ldots,v_{i-1})\in {\cal T}_{i-1}$:
		\begin{enumerate}
			\item  $A:= \appIVD (\copyg(G,\{v_0,v_1,\ldots,v_{i-1}\},6\ell))$.
			\item If $|A|>6\ell$, $~{\cal W}_i:= {\cal W}_{i-1} \cup \{\{v_0,v_1,\ldots,v_{i-1}\}\}$.
			\item Otherwise, \\ ${ M}_i:= { M}_{i-1} \cup \{u \mid u\in (A\cap V(G))\setminus \{v_0,v_1,\ldots,v_{i-1}\}\}$, \\
			${\cal T}_i~~:= {\cal T}_{i-1} \cup 
			\{(v_0,v_1,\ldots,v_{i-1},u) \mid u\in (A\cap V(G))\setminus \{v_0,v_1,\ldots,v_{i-1}\}\}$.
			\item $i=i+1;$
		\end{enumerate}
		
		\item Return $(M_r,{\cal W}_r)$.
	\end{enumerate}
\end{tcolorbox}

\begin{lemma}\label{lem:approxSucceeds}
	Let $G$ be a graph, $U\subseteq V(G)$, and $\ell\in\mathbb{N}$.
	If the algorithm \appIVD\ returns a set $A$ of size at most $6\ell$ when called with $G'=\copyg(G,U,6\ell)$ as input, then for every obstruction $\mathbb{O}$ of $G$, $|V(\mathbb{O})\cap U|+1\leq |V(\mathbb{O})\cap (U\cup (A\cap V(G)))|$.
\end{lemma}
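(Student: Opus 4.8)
The plan is to exploit the key structural fact established in \cref{lem:approxFails}: since \appIVD returns a set $A$ with $|A|\leq 6\ell$ on input $G'=\copyg(G,U,6\ell)$, the instance $(G',\ell)$ is (as far as this call reveals) not obviously a \no-instance, and more importantly the copies of vertices in $U$ are ``undeletable'' by any $\ell$-solution of $G'$. Concretely, $G'$ contains $6\ell$ twin copies of each $u\in U$ forming a clique together with $u$, so any obstruction in $G$ that uses a vertex $u\in U$ lifts to an obstruction in $G'$ where $u$ can be replaced by any of its $6\ell+1$ clones; hitting all of these in $G'$ with at most $6\ell$ deletions forces a solution to avoid at least one clone, and by the twin structure this means $A\cap V(G)$ must ``behave like'' a hitting set for the $G$-obstruction that avoids $U$. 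The formal tool for this is exactly \cref{lem:addCliquTwins} together with the argument in \cref{lem:approxFails}.

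The key steps, in order: First, fix an arbitrary obstruction $\mathbb{O}$ in $G$; I want to show $A\cap V(G)$ contains a vertex of $V(\mathbb{O})\setminus U$, which immediately gives $|V(\mathbb{O})\cap (U\cup(A\cap V(G)))|\geq |V(\mathbb{O})\cap U|+1$. Second, suppose for contradiction that $A\cap V(G)$ contains no vertex of $V(\mathbb{O})\setminus U$. Third, I construct from $\mathbb{O}$ an obstruction $\mathbb{O}'$ in $G'$ that is disjoint from $A$: for each vertex $v\in V(\mathbb{O})\cap U$ that happens to lie in $A$, replace $v$ by one of its twin copies $v^i$ (there are $6\ell$ of them, and $|A\cap V(G')|\leq 6\ell$, but I have to be a little careful — I will argue there is a copy $v^i\notin A$ for each such $v$ because the copies are pairwise distinct and $A$ can only contain finitely many; actually the cleanest route is to note that $A$ as returned by \appIVD on $G'$ has size $\leq 6\ell$, while each $v\in U$ has $6\ell$ copies, so at least one copy of $v$ survives). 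Since copies are true twins of $v$ in $G'$, substituting $v^i$ for $v$ in any induced subgraph preserves its isomorphism type, so $\mathbb{O}'$ is still an induced obstruction in $G'$, and by construction $V(\mathbb{O}')\cap A=\emptyset$. Fourth, this contradicts $A$ being a solution for $G'$: $G'-A$ would contain $\mathbb{O}'$ as an induced subgraph, so $G'-A$ is not a \phcag, contradicting that \appIVD is an approximation algorithm (its output set, having size $\leq 6\ell$, must in particular be a valid solution, i.e. $G'-A$ is a \phcag).

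The main obstacle I anticipate is the bookkeeping in Step three: I need to simultaneously reroute every $U$-vertex of $\mathbb{O}$ that lands in $A$ to an unused copy, and verify that the resulting vertex set still induces precisely the same graph (not merely a supergraph or subgraph) in $G'$. This is where \cref{def:copy} matters: a copy $v^i$ is adjacent in $G'$ to exactly $\{v\}\cup(N_G(v)\cap U\text{-copies and originals})$, i.e. it has the same neighborhood as $v$ among $V(G)\setminus\{v\}$ plus adjacency to $v$ itself; since $\mathbb{O}$ is an \emph{induced} subgraph on original vertices and we never put both $v$ and $v^i$ into $\mathbb{O}'$, the adjacencies inside $\mathbb{O}'$ match those inside $\mathbb{O}$ exactly. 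One subtlety: if two vertices $u,v\in V(\mathbb{O})\cap U$ both need rerouting, I must check $u^i$ and $v^j$ are adjacent in $G'$ iff $u,v$ are adjacent in $G$ — this is guaranteed by the edge set $\{(u^i,v^j)\mid (u,v)\in E(G),u,v\in U\}$ in \cref{def:copy}. Once this is nailed down, the contradiction is immediate and the proof closes.
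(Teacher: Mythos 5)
The paper does not actually supply a proof of this lemma: the statement appears between the proof of the companion \cref{lem:approxFails} and the description of \redIVD, with no argument attached. So there is no in-paper proof to compare against; I can only evaluate your proposal on its own terms.

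Your argument is correct. The reformulation is right: since $V(\mathbb{O})\subseteq V(G)$, the inequality $|V(\mathbb{O})\cap U|+1\leq |V(\mathbb{O})\cap(U\cup(A\cap V(G)))|$ is equivalent to $A\cap(V(\mathbb{O})\setminus U)\neq\emptyset$. Your contradiction argument then works: assuming $A\cap(V(\mathbb{O})\setminus U)=\emptyset$, every $v\in V(\mathbb{O})\cap U\cap A$ has some copy $v^{i}\notin A$ (there are $6\ell$ copies, all distinct from $v$ and from each other, and $|A|\leq 6\ell$ with $v\in A$, so at most $6\ell-1$ copies can lie in $A$); replacing each such $v$ by its surviving copy yields a vertex set $V(\mathbb{O}')$ disjoint from $A$. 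Your check that $G'[V(\mathbb{O}')]$ is isomorphic to $\mathbb{O}$ is the crux and you nail both subtleties: you never place $v$ and a copy $v^{i}$ simultaneously in $\mathbb{O}'$, so the extra $(v,v^{i})$ edges are irrelevant, and for two rerouted vertices $u,v\in U$ the edge $(u^{i},v^{j})$ exists in $\copyg$ exactly when $(u,v)\in E(G)$. Hence $G'-A$ contains an induced obstruction, contradicting that \appIVD returns a valid solution. The proof is complete and matches what the paper's setup (\cref{def:copy}, \cref{lem:addCliquTwins}, \cref{lem:approxFails}) clearly intends, even though no proof is written there.
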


Now, we describe our algorithm, \redIVD, which computes a redundant solution. First, \redIVD\ initializes $M_0$ to be the 6-approximate solution to \phcad with $(G,\ell)$ as input, ${\cal W}_0:=\emptyset$ and ${\cal T}_0:=\{(v) \mid v\in M_0\}$. If $|M_0|>6\ell$, then \redIVD\ concludes that $(G,\ell)$ is a \no-instance. Otherwise, for $i=1,2,\ldots,r$ (in this order), the algorithm executes the following steps (Step 3 in the figure below) and eventually, it outputs the pair $(M_r,{\cal W}_r)$. In the \redIVD algorithm, by \appIVD($H$) we mean the $6$-approximate solution returned by the approximation algorithm to the input graph $H$.

Let us comment that in this algorithm we make use of the sets ${\cal T}_{i-1}$ rather than going over all subsets of size $i$ of $M_{i-1}$ in order to obtain a substantially better algorithm in terms of the size of the redundant solution produced.

The properties of the algorithm \redIVD\ that are relevant to us are summarized in the following lemma and observation, which are proved by induction and by making use of Lemmata \cref{lem:addCliquTwins}, \cref{lem:approxFails} and \cref{lem:approxSucceeds}. Roughly speaking, we first assert that, unless $(G,\ell)$ is concluded to be a \no-instance, we compute sets ${\cal W}_i$ that are $\ell$-necessary as well as that the tuples in ${\cal T}_i$ ``hit more vertices'' of the obstructions in the input as $i$ grows larger.

\begin{lemma}\label{lem:redAlgCor}
	Consider a call to \redIVD\ with $(G,\ell,r)$ as input that did not conclude that $(G,\ell)$ is a \no-instance. For all $i\in[r]_0$, the following conditions hold:
	\begin{enumerate}
		\item\label{cond1} For any set $W\in{\cal W}_i$, every solution $S$ of size at most $\ell$ satisfies $W\cap S\neq\emptyset$.
		\item\label{cond2} For any obstruction $\mathbb{O}$ of $G$ that is not covered by ${\cal W}_i$, there exists $(v_0,v_1,\ldots,v_i)\in{\cal T}_i$ such that $\{v_0,v_1,\ldots,v_i\}\subseteq V(\mathbb{O})$.
	\end{enumerate}
\end{lemma}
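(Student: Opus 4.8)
The plan is to prove \cref{lem:redAlgCor} by induction on $i \in [r]_0$, simultaneously establishing both conditions, and then invoking the three lemmata \cref{lem:addCliquTwins}, \cref{lem:approxFails}, \cref{lem:approxSucceeds} at the inductive step. Throughout, assume the call to \redIVD\ did not terminate with a \no-instance conclusion, so in particular $|M_0| \leq 6\ell$ and the algorithm runs Step~3 to completion.

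\paragraph{Base case $i=0$.} Here ${\cal W}_0 = \emptyset$, so Condition~\ref{cond1} holds vacuously. For Condition~\ref{cond2}, note ${\cal T}_0 = \{(v) \mid v \in M_0\}$ where $M_0 = \appIVD(G)$ is a solution to \phcad; hence every obstruction $\mathbb{O}$ of $G$ contains at least one vertex $v_0 \in M_0$, giving the required tuple $(v_0) \in {\cal T}_0$ with $\{v_0\} \subseteq V(\mathbb{O})$. (An obstruction not covered by ${\cal W}_0$ is just any obstruction, since ${\cal W}_0 = \emptyset$.)

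\paragraph{Inductive step.} Suppose both conditions hold for $i-1$; we verify them for $i$, where $i \leq r$. Fix the iteration of the while-loop for index $i$, which processes each tuple $(v_0,\ldots,v_{i-1}) \in {\cal T}_{i-1}$ and either adds a set to ${\cal W}$ (case $|A| > 6\ell$) or augments $M$ and ${\cal T}$ (case $|A| \leq 6\ell$), where $A := \appIVD(\copyg(G,\{v_0,\ldots,v_{i-1}\},6\ell))$.

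For Condition~\ref{cond1} at step $i$: any $W \in {\cal W}_i$ either lies in ${\cal W}_{i-1}$ — handled by the induction hypothesis — or equals $\{v_0,\ldots,v_{i-1}\}$ for some tuple on which \appIVD\ returned $A$ with $|A| > 6\ell$. In the latter case, \cref{lem:approxFails} (applied with $U = \{v_0,\ldots,v_{i-1}\}$) tells us $\{U\}$ is $\ell$-necessary, i.e.\ every $\ell$-solution $S$ satisfies $S \cap W \neq \emptyset$; this is exactly the claim.

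For Condition~\ref{cond2} at step $i$: let $\mathbb{O}$ be an obstruction of $G$ not covered by ${\cal W}_i$. Since ${\cal W}_{i-1} \subseteq {\cal W}_i$, it is also not covered by ${\cal W}_{i-1}$, so by the induction hypothesis there is a tuple $(v_0,\ldots,v_{i-1}) \in {\cal T}_{i-1}$ with $\{v_0,\ldots,v_{i-1}\} \subseteq V(\mathbb{O})$. Consider the iteration that processes this tuple, with $U = \{v_0,\ldots,v_{i-1}\}$ and $A := \appIVD(\copyg(G,U,6\ell))$. If $|A| > 6\ell$, then by construction $U \in {\cal W}_i$, and since $U \subseteq V(\mathbb{O})$ this would mean $\mathbb{O}$ is covered by ${\cal W}_i$ — contradicting our choice of $\mathbb{O}$. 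Hence $|A| \leq 6\ell$, so \cref{lem:approxSucceeds} applies and yields $|V(\mathbb{O}) \cap U| + 1 \leq |V(\mathbb{O}) \cap (U \cup (A \cap V(G)))|$. Since $|V(\mathbb{O}) \cap U| = |U| = i$ (all of $U$ lies in $V(\mathbb{O})$), there must exist some vertex $u \in (A \cap V(G)) \setminus U$ with $u \in V(\mathbb{O})$. But then the algorithm has added the tuple $(v_0,\ldots,v_{i-1},u)$ to ${\cal T}_i$, and $\{v_0,\ldots,v_{i-1},u\} \subseteq V(\mathbb{O})$, which is exactly what Condition~\ref{cond2} requires at step $i$. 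This closes the induction.

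\paragraph{Remarks on the main obstacle.} The technical heart is ensuring the bookkeeping of which tuples survive into ${\cal T}_i$ matches the logic of "not covered by ${\cal W}_i$": the key point, used above, is that a tuple's prefix is dropped (converted to a member of ${\cal W}$) precisely when \appIVD\ blows up, and in that case any obstruction containing that prefix becomes covered — so a surviving uncovered obstruction is guaranteed to have been extended rather than killed. A secondary subtlety is that \cref{lem:approxSucceeds} is stated for obstructions of $G$ (not of the copy graph), so one must be careful that $\mathbb{O}$ is genuinely an obstruction in the original $G$; this holds because $V(\mathbb{O}) \subseteq V(G)$ throughout the argument. Everything else is routine induction, so I expect no real difficulty beyond getting these two matching conditions stated cleanly.
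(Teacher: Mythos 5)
Your proof is correct and follows essentially the same inductive argument as the paper: the same base case, the same use of \cref{lem:approxFails} for Condition~\ref{cond1}, and the same use of \cref{lem:approxSucceeds} for Condition~\ref{cond2}. One minor improvement: you explicitly justify why $|A|\leq 6\ell$ in the Condition~\ref{cond2} step (otherwise $U$ would enter $\mathcal{W}_i$ and cover $\mathbb{O}$, a contradiction), a point the paper's proof leaves implicit.
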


\begin{sloppypar}
	\begin{proof}
		The proof is by induction on $i$. In the base case, where $i=0$,  Condition \cref{cond1} trivially holds as ${\cal W}_0=\emptyset$, and thus there are no sets in ${\cal W}_0$.  Condition \cref{cond2} holds as $M_0$ is a solution (so each obstruction must contain at least one vertex from $M_0$) and ${\cal T}_0$ simply contains a 1-vertex tuple for every vertex in $M_0$. Now, suppose that the claim is true for $i-1\geq 0$, and let us prove it for $i$.
		
		To prove  Condition \cref{cond1}, consider some set $W\in{\cal W}_i$. If $W\in{\cal W}_{i-1}$, then by the inductive hypothesis, every solution of size at most $\ell$ satisfies $W\cap S\neq\emptyset$. Thus, we next suppose that $W\in{\cal W}_i\setminus{\cal W}_{i-1}$. Then, there exists a tuple $(v_0,v_1,\ldots,v_{i-1})\in {\cal T}_{i-1}$ in whose iteration \redIVD\ inserted $W=\{v_0,v_1,\ldots,v_{i-1}\}$ into ${\cal W}_i$. In that iteration, \appIVD\ was called with ~$\copyg(G,W,6\ell)$ as input, and returned a set $A$ of size larger than $6\ell$. Thus, by  \cref{lem:approxFails}, every solution $S$ of size at most $\ell$ satisfies $W\cap S\neq\emptyset$.
		
		To prove  Condition \cref{cond2}, consider some obstruction $\mathbb{O}$ of $G$ that is not covered by ${\cal W}_i$. By the inductive hypothesis and since ${\cal W}_{i-1}\subseteq{\cal W}_i$, there exists a tuple $(v_0,v_1,\ldots,v_{i-1})\in{\cal T}_{i-1}$ such that $\{v_0,v_1,\ldots,v_{i-1}\}\subseteq V(\mathbb{O})$. Consider the iteration of \redIVD\ corresponding to this tuple, and denote $U=\{v_0,v_1,\ldots,v_{i-1}\}$. In that iteration, \appIVD\ was called with $\copyg(G,U,6\ell)$ as input, and returned a set $A$ of size at most $6\ell$. By  \cref{lem:approxSucceeds}, $|V(\mathbb{O})\cap U|+1\leq |V(\mathbb{O})\cap (U\cup (A\cap V(G)))|$. Thus, there exists $v_i\in (A\cap V(G))\setminus U$ such that $U\cup\{v_i\}\subseteq V(\mathbb{O})$. However, by the specification of \appIVD, this means that there exists $(v_0,v_1,\ldots,v_i)\in{\cal T}_i$ such that $\{v_0,v_1,\ldots,v_i\}\subseteq V(\mathbb{O})$.\qed
	\end{proof}
\end{sloppypar}

Towards showing that the output set $M_r$ is ``small'', let us upper bound the sizes of the sets $M_i$ and ${\cal T}_i$.

\begin{observation}\label{obs:redAlgTime}
	Consider a call to \redIVD\ with $(G,\ell,r)$ as input that did not conclude that $(G,\ell)$ is a \no-instance. For all $i\in[r]_0$, $|M_i|\leq\sum_{j=0}^i(6\ell)^{j+1}$, $|{\cal T}_i|\leq (6\ell)^{i+1}$ and every tuple in ${\cal T}_i$ consists of distinct vertices.
\end{observation}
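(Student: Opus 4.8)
The plan is to prove \cref{obs:redAlgTime} by induction on $i$, tracking how the sets $M_i$ and ${\cal T}_i$ grow in each iteration of Step 3 of \redIVD. For the base case $i=0$: by initialization, $M_0 = \appIVD(G)$ has size at most $6\ell$ (since the algorithm did not return a \no-instance in Step 2), which matches $\sum_{j=0}^0 (6\ell)^{j+1} = 6\ell$; and ${\cal T}_0 = \{(v)\mid v\in M_0\}$ has size $|M_0|\leq 6\ell = (6\ell)^{0+1}$, with every (length-1) tuple trivially consisting of distinct vertices.

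For the inductive step, assume the bounds hold for $i-1$. In iteration $i$, the algorithm processes each tuple $(v_0,\ldots,v_{i-1})\in {\cal T}_{i-1}$ exactly once. For each such tuple, Step 3(c) adds to ${\cal T}_{i-1}$ at most $|A\cap V(G)| \leq |A| \leq 6\ell$ new tuples (only in the ``otherwise'' branch), each obtained by appending one vertex $u\in (A\cap V(G))\setminus\{v_0,\ldots,v_{i-1}\}$. Hence $|{\cal T}_i| \leq |{\cal T}_{i-1}| + 6\ell\cdot|{\cal T}_{i-1}|$; but more carefully, ${\cal T}_i$ consists of ${\cal T}_{i-1}$ together with these appended tuples, so I should observe that the length-$(i+1)$ tuples in ${\cal T}_i$ number at most $6\ell\cdot|{\cal T}_{i-1}| \leq 6\ell\cdot(6\ell)^i = (6\ell)^{i+1}$ — here I want to be slightly careful about whether ${\cal T}_i$ is meant to be the \emph{cumulative} set (containing shorter tuples too) or just the new length-$(i+1)$ tuples; reading Step 3(c), it is cumulative, but Condition~\cref{cond2} of \cref{lem:redAlgCor} and the bound $(6\ell)^{i+1}$ only concern the length-$(i+1)$ tuples, so the clean statement is that the number of \emph{maximal-length} tuples added at level $i$ is at most $(6\ell)^{i+1}$. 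I would phrase the bound accordingly (or note $|{\cal T}_i \setminus {\cal T}_{i-1}| \leq (6\ell)^{i+1}$). For $M_i$: Step 3(c) adds at most $6\ell$ vertices per tuple in ${\cal T}_{i-1}$, so $|M_i| \leq |M_{i-1}| + 6\ell\cdot|{\cal T}_{i-1}| \leq \sum_{j=0}^{i-1}(6\ell)^{j+1} + 6\ell\cdot(6\ell)^i = \sum_{j=0}^{i}(6\ell)^{j+1}$, as claimed.

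Finally, for the distinctness claim: each new tuple $(v_0,\ldots,v_{i-1},u)$ is formed by appending a vertex $u\in(A\cap V(G))\setminus\{v_0,\ldots,v_{i-1}\}$ to a tuple $(v_0,\ldots,v_{i-1})\in{\cal T}_{i-1}$ which, by the inductive hypothesis, already consists of distinct vertices; since $u$ is explicitly excluded from $\{v_0,\ldots,v_{i-1}\}$, the extended tuple also consists of distinct vertices. The main (very mild) obstacle is purely bookkeeping: being precise about the indexing of ${\cal T}_i$ versus its length-$(i+1)$ sublayer, and making sure the $6\ell$ upper bound on $|A|$ is invoked correctly — it holds in the ``otherwise'' branch of Step 3(c) precisely because that branch is only entered when $|A|\leq 6\ell$. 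No deeper difficulty is expected; the structural lemmata \cref{lem:addCliquTwins}, \cref{lem:approxFails}, \cref{lem:approxSucceeds} are not even needed here, only the combinatorics of the algorithm's update rules.
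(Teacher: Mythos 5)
Your proof is correct and follows essentially the same induction as the paper's: bound $|M_i|$ by $|M_{i-1}|+6\ell|{\cal T}_{i-1}|$, bound $|{\cal T}_i|$ by $6\ell|{\cal T}_{i-1}|$, and get distinctness from the fact that the appended vertex $u$ is drawn from $(A\cap V(G))\setminus\{v_0,\ldots,v_{i-1}\}$. The one place where you add genuine value is your observation about the cumulative reading of ${\cal T}_i$: the pseudocode writes ${\cal T}_i := {\cal T}_{i-1}\cup\{\ldots\}$, under which $|{\cal T}_i|\leq(1+6\ell)|{\cal T}_{i-1}|$ and the stated bound $(6\ell)^{i+1}$ is literally off (e.g., at $i=1$ it gives $(1+6\ell)\cdot 6\ell > (6\ell)^2$), whereas the paper's proof simply asserts $|{\cal T}_i|\leq 6\ell|{\cal T}_{i-1}|$, implicitly treating ${\cal T}_i$ as consisting only of the freshly generated length-$(i+1)$ tuples. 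Your resolution — read the bound as applying to ${\cal T}_i\setminus{\cal T}_{i-1}$, equivalently to the tuples of maximal length $i+1$, which are the only ones Condition~2 of \cref{lem:redAlgCor} refers to anyway — is exactly the right way to reconcile the pseudocode with the claimed bound, and it does not affect the downstream use of the observation (the polynomial size of $M_r$). One further small improvement: you attribute the distinctness of the last vertex to the set difference in Step~3(c) rather than to ``the specification of \appIVD'' as the paper does; your phrasing is the more accurate one, since it is the algorithm's update rule, not the approximation subroutine, that guarantees $u\notin\{v_0,\ldots,v_{i-1}\}$.
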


\begin{proof}
	The proof is by induction on $i$.  In the base case, where $i=0$, the correctness follows as \appIVD\ returned a set of size at most $6\ell$. Now, suppose that the claim is true for $i-1\geq 0$, and let us prove it for $i$. By the specification of the algorithm and inductive hypothesis, we have that $|M_i|\leq |M_{i-1}| + 6\ell|{\cal T}_{i-1}|\leq \sum_{j=1}^{i+1}(6\ell)^j$ and $|{\cal T}_i|\leq 6\ell|{\cal T}_{i-1}|\leq (6\ell)^{i+1}$. Moreover, by the inductive hypothesis, for every tuple in ${\cal T}_i$, the first $i$ vertices are distinct, and by the specification of \appIVD, the last vertex is not equal to any of them.\qed
\end{proof}

By the specification of \redIVD, as a corollary to  \cref{lem:redAlgCor} and  \cref{obs:redAlgTime}, we directly obtain the following result.

\begin{corollary}\label{cor:redAlg}
	Consider a call to \redIVD\ with $(G,\ell,r)$ as input that did not conclude that $(G,\ell)$ is a \no-instance. For all $i\in[r]_0$, ${\cal W}_i$ is an $\ell$-necessary family and $M_i$ is a $\sum_{j=0}^{i}(6\ell)^{j+1}$-solution that is $i$-redundant with respect to ${\cal W}_i$.
\end{corollary}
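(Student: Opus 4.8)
The plan is to read off all three assertions from the two facts already established about \redIVD, namely \cref{lem:redAlgCor} (correctness) and \cref{obs:redAlgTime} (size and distinctness); no genuinely new argument is needed, only an unwinding of \cref{def:redundant} and of the definition of ``$t$-necessary''. So fix a run of \redIVD$(G,\ell,r)$ that did not report ``\no-instance'', and fix $i\in[r]_0$. The statement that ${\cal W}_i$ is $\ell$-necessary is a verbatim reformulation of Condition~\cref{cond1} of \cref{lem:redAlgCor}: by definition a family is $\ell$-necessary exactly when every $\ell$-solution hits each of its members, and Condition~\cref{cond1} asserts precisely that every solution $S$ with $|S|\le\ell$ meets every $W\in{\cal W}_i$.

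For the second assertion, the size bound $|M_i|\le\sum_{j=0}^i(6\ell)^{j+1}$ is exactly \cref{obs:redAlgTime}; what remains is to check that $M_i$ is genuinely a solution, i.e.\ that $G-M_i$ is a \phcag. Here I would use that $M_0=\appIVD(G)$ is a $6$-approximate solution, so $G-M_0$ is a \phcag, and that by the description of \redIVD\ we have $M_0\subseteq M_1\subseteq\cdots\subseteq M_i$; hence $G-M_i$ is an induced subgraph of $G-M_0$, and since the class of \phcags is hereditary (it is characterised by forbidden induced subgraphs, cf.\ \cref{fig:obs}), $G-M_i$ is a \phcag as well. Consequently $M_i$ is a $\sum_{j=0}^i(6\ell)^{j+1}$-solution.

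For the third assertion I would combine Condition~\cref{cond2} of \cref{lem:redAlgCor} with the distinctness clause of \cref{obs:redAlgTime}, together with one bookkeeping invariant that is immediate from Step~3 of \redIVD, namely that every vertex occurring in a tuple of ${\cal T}_i$ belongs to $M_i$. (This holds for ${\cal T}_0=\{(v)\mid v\in M_0\}$ since $M_0$ contains those vertices; and whenever a tuple $(v_0,\dots,v_{i-1},u)$ is appended to ${\cal T}_i$, the new vertex $u$ is placed into $M_i$ in the same step, while $v_0,\dots,v_{i-1}$ already lie in $M_{i-1}\subseteq M_i$ by induction.) Then, given any obstruction $\mathbb{O}$ of $G$ not covered by ${\cal W}_i$, Condition~\cref{cond2} supplies a tuple $(v_0,\dots,v_i)\in{\cal T}_i$ with $\{v_0,\dots,v_i\}\subseteq V(\mathbb{O})$; these $i+1$ vertices are pairwise distinct by \cref{obs:redAlgTime} and all lie in $M_i$ by the invariant, so $|M_i\cap V(\mathbb{O})|\ge i+1>i$, which is exactly $i$-redundancy of $M_i$ with respect to ${\cal W}_i$.

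I do not expect any real obstacle: the corollary is purely a packaging step. The only points that are not literally spelled out in the two cited results are the invariant ``the vertices of the tuples in ${\cal T}_i$ lie in $M_i$'' and the hereditariness of \phcags (used to upgrade the mere size bound on $M_i$ to the statement that $M_i$ is a bounded-size solution); both are one-line verifications.
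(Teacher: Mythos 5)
Your proposal is correct and takes the same route the paper intends: the paper simply asserts that \cref{cor:redAlg} follows ``directly'' from \cref{lem:redAlgCor} and \cref{obs:redAlgTime}, and you have supplied exactly the small bookkeeping facts (hereditariness of \phcags, the invariant that tuple entries of ${\cal T}_i$ lie in $M_i$, and the distinctness clause) needed to make that unwinding rigorous.
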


\begin{lemma}\label{lem:redundant}
	Let $r\in\mathbb{N}$ be a fixed constant, and $(G,\ell)$ be an instance of \phcad. In polynomial-time, it is possible to either conclude that $(G,\ell)$ is a \no-instance, or compute an $\ell$-necessary family ${\cal W}\subseteq 2^{V(G)}$ and a set $M\subseteq V(G)$, such that ${\cal W} \subseteq 2^M$ and $M$ is a $(r+1)(6\ell)^{r+1}$-solution that is $r$-redundant with respect to~$\cal W$.
\end{lemma}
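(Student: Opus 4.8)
The plan is to apply the algorithm \redIVD\ with input $(G,\ell,r)$ and take $({\cal W},M) := ({\cal W}_r, M_r)$ to be its output. The bulk of the work has already been done: \cref{lem:redAlgCor} and \cref{obs:redAlgTime} establish, by a joint induction on $i$, exactly the invariants we need, and \cref{cor:redAlg} packages them. So the proof is essentially a matter of checking that (a) the algorithm runs in polynomial time, (b) when it does not report a \no-instance, its output satisfies the three requirements in the statement, and (c) when it does report a \no-instance, the conclusion is justified.

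First I would address running time. The algorithm makes one initial call to \appIVD\ and then, for each $i \in [r]$ and each tuple in ${\cal T}_{i-1}$, one call to \appIVD\ on a \copyg\ graph. By \cref{obs:redAlgTime} we have $|{\cal T}_{i-1}| \leq (6\ell)^i$, so the total number of \appIVD\ calls is at most $1 + \sum_{i=1}^{r}(6\ell)^i$. Each \copyg$(G,U,6\ell)$ graph has $|V(G)| + 6\ell|U| \leq |V(G)| + 6\ell r$ vertices, so is polynomial in the input size (here $r$ is a fixed constant). Since \appIVD\ runs in polynomial time by \cref{prop:approxAlg}, and $r$ is constant, the whole procedure is polynomial-time. (This is where it matters that the algorithm iterates over ${\cal T}_{i-1}$ rather than over all size-$i$ subsets of $M_{i-1}$, as the authors already remarked — otherwise the count would blow up with $|M_{i-1}|$.)

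Next I would verify the output guarantees in the non-\no\ case. That ${\cal W} = {\cal W}_r \subseteq 2^M$ with $M = M_r$ follows from the specification of \redIVD: every set added to some ${\cal W}_i$ is of the form $\{v_0,\dots,v_{i-1}\}$ for a tuple $(v_0,\dots,v_{i-1}) \in {\cal T}_{i-1}$, and by \cref{obs:redAlgTime} combined with the construction of ${\cal T}_{i}$ and $M_i$, all vertices appearing in tuples of ${\cal T}_{i-1}$ lie in $M_{i-1} \subseteq M_r$. By \cref{cor:redAlg} (with $i = r$), ${\cal W}_r$ is $\ell$-necessary and $M_r$ is a $\sum_{j=0}^r (6\ell)^{j+1}$-solution that is $r$-redundant with respect to ${\cal W}_r$. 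It remains only to bound $\sum_{j=0}^r (6\ell)^{j+1} \leq (r+1)(6\ell)^{r+1}$, which is immediate since each of the $r+1$ terms is at most $(6\ell)^{r+1}$. That $M_r$ being $r$-redundant means, by \cref{def:redundant}, that every obstruction not covered by ${\cal W}_r$ meets $M_r$ in more than $r$ vertices — so it is a $(r+1)(6\ell)^{r+1}$-solution that is $r$-redundant, exactly as claimed.

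Finally, the \no-instance case: \redIVD\ concludes $(G,\ell)$ is a \no-instance only at Step 2, when $|M_0| = |\appIVD(G)| > 6\ell$. By \cref{prop:approxAlg}, \appIVD\ is a factor-$6$ approximation, so any optimal solution has size more than $\ell$, and thus $(G,\ell)$ is indeed a \no-instance; this conclusion is sound. The one place I would be slightly careful is the precise interplay between the statement ``$M$ is a $(r+1)(6\ell)^{r+1}$-solution'' and the fact that we only know $|M_r| \leq \sum_{j=0}^r(6\ell)^{j+1}$ together with $G - M_0$ being a \phcag; since $M_0 \subseteq M_r$ and adding vertices to a solution keeps it a solution, $M_r$ is a solution, and the size bound is as computed — so this is the only genuinely new (if trivial) observation beyond invoking the earlier lemmas. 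I do not anticipate a real obstacle here; the statement is a clean corollary of the machinery built up in \cref{lem:redAlgCor}, \cref{obs:redAlgTime}, and \cref{cor:redAlg}, with $\ell = \ell$ and the constant $r$ fixed in advance.
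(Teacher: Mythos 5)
Your proposal is correct and follows the paper's own route: run \redIVD$(G,\ell,r)$, output $({\cal W}_r,M_r)$, invoke \cref{cor:redAlg}, and finish with the elementary bound $\sum_{j=0}^r(6\ell)^{j+1}\leq (r+1)(6\ell)^{r+1}$. You spell out the polynomial-time count and the inclusion ${\cal W}_r\subseteq 2^{M_r}$ a bit more explicitly than the paper, but it is the same argument in the same order.
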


\begin{proof}
	Clearly, \redIVD\ runs in polynomial-time (as $r$ is a fixed constant), and by the correctness of \appIVD, if it concludes that $(G,\ell)$ is a \no-instance, then this decision is correct. Thus, since $\sum_{i=0}^r(6\ell)^{i+1}\leq (r+1)(6\ell)^{r+1}$, the correctness of  \cref{lem:redundant} now directly follows as a special case of  \cref{cor:redAlg}. Thus, our proof of \cref{lem:redundant} is complete.\qed
\end{proof}

In light of  \cref{lem:redundant}, from now on, we suppose that we have an $\ell$-necessary family ${\cal W}\subseteq 2^{V(G)}$ along with a $(r+1)(6\ell)^{r+1}$-solution $M$ that is $r$-redundant with respect to $\cal W$ for $r=\redundC$.
Let us note that, any obstruction in $G$ that is not covered by $\calW$ intersects $M$ in at least six vertices.
We have the following reduction rule that follows immediately from \cref{lem:redAlgCor}.

\begin{reduction rule}\label{rr:w-size-not-one} Let $v$ be a vertex such that $\{v \} \in \calW$. Then, output the instance $(G-\{v\},k-1)$.
\end{reduction rule}

From here onwards we assume that each set in $\cal W$ has a size at least $2$. 

\smallskip

\noindent
\textbf{Nice Modulator.} Once we construct both the \emph{efficient modulator} $T_1$ and \emph{redundant solution} $M$, we take their union and consider that set of vertices as a modulator, we called it as \emph{nice modulator}. 

From here onwards, for the remaining sections,  we assume that
\medskip

\noindent\fbox{
	\begin{minipage}{0.96\textwidth}\label{nicemod}
		We have a \emph{nice modulator}  $T \subseteq V(G)$ along with $(k+2)$-necessary family $\calW \subseteq 2^T$  satisfying the following:
		\vspace{-2mm}
		\begin{itemize}
			\item[$\bullet$] $G -T$ is a \phcag.
			\item[$\bullet$]$|T| \leq \OO(k^{12})$.
			\item[$\bullet$] For any \emph{large} obstruction $\obs$  containing no  $W\in \calW$, we have $|V(\obs) \cap T| \geq 6$.
		\end{itemize} 
	\end{minipage}
}

\section{Bounding the Size of  each Clique}

In this section, we consider a \emph{nice modulator} $T$ of $G$ obtained in the previous section and we bound the size of each clique in a \emph{nice clique partition} $\mathcal{Q}$ of $G-T$  in polynomial time.  If there is a \emph{large} clique in $\mathcal{Q}$  of size more than $\mathcal{O}(k^{12})  $, we can safely find and remove an \emph{ irrelevant vertex} from the clique, thus reducing its size. Next, we prove a simple result that will later be used to bound the size of each clique in $ G -T $.

\begin{lemma}\label{lem:small_obs}
	Let $H$ be an induced path in $G$. Consider a vertex $v \in V(G) \setminus V(H)$. If $ v $ has more than four neighbors in $ V(H) $ then $ G[V(H) \cup \{v\}] $  contains a \emph{small} obstruction (claw).	
\end{lemma}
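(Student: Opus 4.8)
The plan is to exploit the fact that $H$ is an induced path $v_1, v_2, \ldots, v_\ell$, so the only edges among $V(H)$ are the consecutive ones $v_iv_{i+1}$. Suppose $v \in V(G)\setminus V(H)$ has at least five neighbors in $V(H)$; let these be $v_{i_1}, v_{i_2}, \ldots, v_{i_m}$ with $i_1 < i_2 < \cdots < i_m$ and $m \geq 5$. The key observation is that among any three of these indices that are pairwise ``far apart'' (non-consecutive on the path), the corresponding path vertices form an independent set, and together with $v$ they induce a claw $\overline{C_3^*}$ (i.e.\ $K_{1,3}$), which is a small obstruction by the characterization in \cref{fig:obs} (since $\cyclebound = 12$ and a claw has $4 < 12$ vertices).

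First I would make precise the combinatorial selection step: from the five (or more) neighbor-indices $i_1 < \cdots < i_m$, I want to pick three of them, say $i_a < i_b < i_c$, such that no two of $v_{i_a}, v_{i_b}, v_{i_c}$ are adjacent in $H$, equivalently $i_b \geq i_a + 2$ and $i_c \geq i_b + 2$. A clean way: the indices $i_1, i_3, i_5$ already work, because $i_3 \geq i_1 + 2$ and $i_5 \geq i_3 + 2$ (consecutive elements of a strictly increasing integer sequence differ by at least $1$, so elements two apart differ by at least $2$). Hence $v_{i_1}, v_{i_3}, v_{i_5}$ are pairwise non-consecutive on $H$; since $H$ is induced, the only possible edges among them would be path edges, so they form an independent set in $G$. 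Then $G[\{v, v_{i_1}, v_{i_3}, v_{i_5}\}]$ has exactly the three edges $vv_{i_1}, vv_{i_3}, vv_{i_5}$ and no others, i.e.\ it is an induced claw, which is a small obstruction.

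The main (and essentially only) obstacle is the bookkeeping of why four neighbors would not suffice and five do: with four neighbors $i_1 < i_2 < i_3 < i_4$ one can only guarantee $i_1, i_3$ are non-consecutive, not a triple, whereas taking every other index out of five guarantees the needed gap-$2$ triple $i_1, i_3, i_5$. I would state this pigeonhole-style argument explicitly but briefly. Everything else is immediate from the definition of an induced path and of the claw obstruction. Concluding, $G[V(H)\cup\{v\}]$ contains the claw on $\{v, v_{i_1}, v_{i_3}, v_{i_5}\}$ as an induced subgraph, which is a small obstruction, completing the proof.
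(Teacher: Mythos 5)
Your proof is correct and follows essentially the same approach as the paper: both pick the first, third, and fifth neighbors of $v$ along $H$, note that since $H$ is induced these three vertices are pairwise non-adjacent, and conclude that together with $v$ they induce a claw. The paper only explicitly notes $(v_1,v_3),(v_3,v_5)\notin E(G)$, while you more carefully observe that all three pairs are non-adjacent; that is a minor improvement in presentation, not a different argument.
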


\begin{proof}
	Assume that $\card{N(v) \cap V(H)} \geq 5$. Let $H$ be an induced path from $x$ to $y$ for some $x, y \in V(G)$. Let $v_1, v_2, v_3, v_4, v_5 \in V(H)$ be the first $5$ neighbours of $v$ that appear as we traverse $H$ from $x$ to $y$. Since the path is induced so $ (v_1, v_3), (v_3, v_5) \notin E(G) $. So $\set{v, v_1, v_3, v_5}$ induces a $ \overline{C_3^*} $ (claw), which is a \emph{small} obstruction.\qed
\end{proof}

\paragraph{Marking Scheme.} We start with the following marking procedure, which marks $k^{\OO(1)}$ vertices in each clique  $\cl \in \mathcal{Q}$.

We will now bound the size of the set $ T(\cl)$.

\begin{remark}\label{rem:Mark1}
	Observe that the procedure \markone can be executed in polynomial time. Also, note that $\card{T(\cl)} \leq 2(k+1) \card{T}^4$. 
\end{remark}

\begin{reduction rule}\label{rule:bound_clique}
	If there exists a vertex $v \in \cl \setminus  T(\cl)$ for some clique  $\cl  \in \mathcal{Q} \subseteq V(G) \setminus T$, then delete $v$. 
\end{reduction rule}

\begin{lemma}\label{lem:cliqueRed}
	Reuction Rule \ref{rule:bound_clique} is safe. 
\end{lemma}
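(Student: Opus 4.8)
The goal is to show that deleting a vertex $v \in \cl \setminus T(\cl)$ produces an equivalent instance, i.e., $(G,k)$ is a \yes-instance iff $(G - \{v\}, k)$ is a \yes-instance. One direction is immediate: if $S$ is a solution for $G$ of size at most $k$, then $S \setminus \{v\}$ is a solution for $G - \{v\}$ (induced subgraphs of \phcags are \phcags, and removing a vertex only removes obstructions), and it has size at most $k$. So the real content is the forward direction: given a solution $S$ for $G - \{v\}$ with $|S| \le k$, we must show $S$ is also a solution for $G$ — equivalently, that $G - S$ is a \phcag, i.e., contains none of the seven obstruction types. Since $G - \{v\} - S$ is already a \phcag, any obstruction in $G - S$ must use $v$. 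So the task reduces to: if $v$ survives in $G - S$ and $v$ lies in an induced obstruction $\obs$, derive a contradiction with the fact that $v$ was not marked by \markone (hence was deemed irrelevant).

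The key structural step is to exploit the \emph{nice clique partition} $\mathcal{Q}$ of $G - T$ together with \Cref{lem:small_obs}. Because $\cl$ is a clique, all other vertices of $\cl$ outside $S$ are twins of $v$ with respect to $V(G) \setminus \cl$ up to adjacency within the clique (they are pairwise adjacent), and by the nice-clique-partition property a vertex of $\cl$ only sees vertices in $Q_{i-1}, Q_i, Q_{i+1}$ inside $G - T$, plus arbitrary vertices of $T$. The plan is to case-split on where the obstruction $\obs$ containing $v$ lives. For a \emph{small} obstruction: it has fewer than $12$ vertices; I would argue that the marking scheme \markone marked enough vertices of $\cl$ — for every small-obstruction "pattern" determined by a bounded-size subset of $T$ and a bounded-size choice of clique-vertices from the (already bounded) adjacent cliques — that $S$ could not have destroyed the marked witnesses while leaving $v$; more precisely, if $v$ participates in a small obstruction $\obs$, then swapping $v$ for a marked twin $v' \in T(\cl) \setminus S$ (which exists by the counting in \Cref{rem:Mark1}: $|T(\cl)|$ exceeds $k$ plus the bounded number of obstruction-roles) yields an obstruction in $G - \{v\} - S$, contradicting that $G - \{v\} - S$ is a \phcag. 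For a \emph{large} obstruction (a \monad $C_\ell^*$, $\ell \ge 12$): here $\obs$ contains an induced hole of length $\ge 11$ plus a centre; restricting the hole to a long induced subpath $H$ avoiding $T$-neighbours appropriately and invoking \Cref{lem:small_obs}, any vertex of $T$ adjacent to many hole-vertices already forces a claw, so large obstructions through $v$ are controlled by the same marking — and again a marked twin $v'$ of $v$ can replace $v$ inside the hole.

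Concretely, the steps in order: (1) state and prove the trivial backward direction; (2) fix a hypothetical solution $S$, $|S|\le k$, for $G-\{v\}$, assume for contradiction $G - S$ is not a \phcag, and fix an induced obstruction $\obs$ in $G - S$; (3) observe $v \in V(\obs)$ and $|V(\obs)|$-bounded or a \monad; (4) use the nice-clique-partition and the fact that the cliques adjacent to $\cl$ have already been bounded in earlier reductions (or are handled by the marking), so that the "type" of $v$'s role in $\obs$ is captured by a bounded-size trace $\tau$ consisting of $v$'s obstruction-neighbours in $T$ and in the neighbouring cliques; (5) note \markone marks, for each such trace, more than $k$ candidate vertices of $\cl$ that could play $v$'s role — so some marked $v' \in T(\cl) \setminus S$, $v' \ne v$, exists; (6) verify that replacing $v$ by $v'$ in $\obs$ gives an induced obstruction of the same type inside $G - \{v\} - S$ (using that $v,v'$ are true twins in $\cl$ and hence have identical adjacency to every vertex of $\obs - v$), contradicting that $G - \{v\} - S$ is a \phcag.

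The main obstacle I anticipate is step (4)–(6): making precise that the marking scheme \markone (whose definition is referenced but not shown in this excerpt) really does mark a witness for \emph{every} possible way $v$ could sit in an obstruction, including \monads whose holes may wind around many consecutive cliques of the partition. One must argue that only a bounded portion of any obstruction can touch $\cl$ and its immediate neighbours — which is where \Cref{lem:small_obs} (a vertex with $\ge 5$ neighbours on an induced path forces a claw) does the heavy lifting, bounding how many vertices of a single clique any induced obstruction can contain, and hence bounding the relevant "trace" to size polynomial in $|T|$, matching the $2(k+1)|T|^4$ bound of \Cref{rem:Mark1}. Getting this counting to line up exactly with what \markone marks is the crux; once the twin-replacement is justified, the contradiction and hence safety follow immediately.
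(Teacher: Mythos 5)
There are two genuine gaps. First, your handling of \emph{small} obstructions is not how the argument can work: the marking scheme $\markone$ only records adjacency/non-adjacency to subsets $A,B\subseteq T$ of size at most $2$, so it cannot provide a ``witness for every small-obstruction pattern'' involving vertices of the neighbouring cliques $Q_{i\pm 1}$ or more than two vertices of $T$. The paper instead rules small obstructions out entirely before any marking is used: since $X$ hits all small obstructions of $G[T]$ and $T$ contains the efficient modulator of \cref{lem:minimal_hitting}, a minimal sub-hitting-set $Y\subseteq X$ already kills every small obstruction of $G$, so the obstruction $\obs$ through $v$ in $G-X$ must be a \monad\ of length $>12$; moreover this yields the stronger fact that \emph{no} small obstruction of $G-X$ contains $v$, which the paper then uses repeatedly to close subcases. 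Your proposal never invokes \cref{lem:minimal_hitting}, nor the redundancy property of $M$ (via $\calW$), which is what guarantees the existence of a vertex $u\in V(\obs)\cap T$ non-adjacent to $v$ — the anchor for the set $B$ in the relevant $\markk_i[A,B]$.

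Second, step (6) rests on the claim that $v$ and a marked $v'\in\cl$ are ``true twins \dots\ with identical adjacency to every vertex of $\obs-v$.'' This is false: two vertices of the same clique in a nice clique partition can differ arbitrarily in their adjacency to $T$ and to $Q_{i\pm1}$, and $\markone$ only aligns $v'$ with $v$ on the at most four designated vertices of $T$. Consequently a direct swap of $v$ by $v'$ need not preserve the \monad. The paper's proof does something genuinely harder: it takes the non-neighbour $u\in V(\obs)\cap T$ of $v'$, finds the two hole-vertices $u_1,u_2$ closest to $u$ that $v'$ \emph{does} see, and reroutes the \hole\ through the induced path $P$ from $u_1$ to $u_2$ via $u$ together with $v'$ (or $v'_1,v'_2$ taken from the leftmost/rightmost marked blocks when $\prv(v)$ or $\nxt(v)$ lies outside $T$), producing a new \monad\ in $(G-v)-X$; in the remaining subcases it instead exhibits a claw or net containing $v$ in $G-X$, contradicting the small-obstruction exclusion above. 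You correctly identify this as the crux but leave it unresolved, so the proposal as written does not establish safety.
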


\begin{proof}
	Consider an application of reduction rule \ref{rule:bound_clique} in which a vertex, say $v \in \cl \setminus  T(\cl)$ was deleted from some clique  $\cl  \in \mathcal{Q}$. we claim the following.
	\begin{claim}
		$(G,k)$ is a \yes-instance of \phcad if and only if $(G-v, k)$ is a \yes-instance of \phcad.
	\end{claim}
	
	\noindent
	$ (\Rightarrow) $ If $(G, k)$ is a \yes-instance, then so is $(G-v, k)$, since $G-v$ is an induced subgraph of $G$. 
	
	\noindent $ (\Leftarrow) $ To prove the other direction we use contradiction. 
	Suppose $(G-v,k)$ is a \yes-instance but $ (G,k) $ is not. And let $X \subseteq V(G-v)$ be a solution of size at most $k$. That is $(G - v) - X$ is a \phcag. Since $(G, k)$ is a \no-instance, $G-X$ can not be a \phcag. Hence $G - X$ must contain an obstruction, say, $ \obs $. 	Clearly, $v$ must be a vertex in  $ V(\obs)$, otherwise, $\obs$ would also be an obstruction in $(G-v) - X$, which contradicts the fact that $(G - v) - X$  is a \phcag. 
	
	We first claim that $ \obs $ is a \emph{large} obstruction.  Suppose it is not, i.e. $ \obs $ is a \emph{small} obstruction. Note that $X$ hits all obstructions in $G - v$, and  $G[T]$ is a subgraph of $G - v$ as $v \notin T$. So $X$ also hits all obstructions in $G[T]$, in particular, also all \emph{small} obstructions in $G[T]$. Let $Y \subseteq X$ be a minimal hitting set for all \emph{small} obstructions in $G[T]$. Then, by the definitions of $T$ and $Y$, we can conclude that $Y$ hits all \emph{small} obstructions in $G$ as well (using \cref{lem:minimal_hitting}). But since $ \obs $ is an obstruction contained in $G - X$ and $Y \subseteq X$, $\obs$ has to be a  \emph{large} obstruction in $G - Y$, a contradiction. 	Thus, $ \obs$ being a \emph{large} obstruction in $G - X$ must be ${C_{\ell}}^*$ (\monad) where $\ell>12$. Also $ v\in V(\obs) $. So there is no \emph{small} obstruction containing $v$ in $G-X$.
	
	Next, we claim that such an obstruction
	$ \obs $ can not contain any  $W\in\calW$. As $X\cup \{v\}$ is a $(k+1)$-solution for $G$, $X\cup \{v\}$ is a hitting set for the $(k+2)$-necessary family $\calW$. But $\obs\cap(X\cup \{v\}) \neq \emptyset$. This implies that $\obs$ does not contain any $W\in\calW$. So, $M$ and hence $T$ contains at least five vertices from $\obs$ i.e. $|V(\obs)\cap T|\geq 5$.

	To show equivalence between the instances $(G,k)$ and $(G-v,k)$, we either find an obstruction $\obs'$ in $(G-v) - X$ or we show that $v$ is a part of \emph{small} obstruction in $G-X$.  We argue for  the following two cases  depending on the nature of $ v $ in $ V(\obs) $. We use $ \prv(v) $ and $\nxt(v)$ to denote the adjacent vertices of $ v $ in  $ V(\obs) $ (here the selection is arbitrary). Let $ \glb $ denote the \centre  of this \monad $ \obs $.

	\begin{description}
		\item[Case A]  Here we consider the case when vertex $ v $ is not the \centre of the \monad $ \obs:= C^*_{\ell}$, i.e., $ v \neq \glb $.	We argue for all the following eight cases  depending on whether the vertices $ \prv(v), \nxt(v)$, and  $ \glb $ of $ \obs $ belong to $T $ or not. 	Notice that $ v $ was deleted because it was an irrelevant (unmarked) vertex. From the redundant solution property (\cref{lem:redundant}), we know that $ \obs $ has at least \textbf{five}  vertices from $ T $ and $ v $ is adjacent to at most two of them while non-adjacent to the rest.
	\end{description}

	\begin{enumerate}
		
		\item $ \boldsymbol{\prv(v) \in T,~\nxt(v) \in T,~\glb \in T} $.  Let $ u $ be a vertex in $ \obs \cap T $ such that $ (u,v) \notin E(G) $. Note that such a vertex $ u $  always exists because of the redundant  solution property. During Procedure \markone, we have added a set $ S: =\markk_i[A,B]$ of at least $2(k+1)$ vertices from $ \cl $ where $ A= \{  \prv(v), \nxt(v)\} $ and $ B= \{ u , \glb\} $. Otherwise, we would have added $v$  to $ T(\cl) $. So each vertex in $ S $ is  non-adjacent  to both $ u $ and $\glb  $ and adjacent to both $ \prv(v) $ and $\nxt(v) $.  Since $ |S| > k $, there  exists a vertex in $ S $ which is not in $ X $. Let $ v' $  be such a vertex (arbitrarily chosen) from $ S \setminus X $. 		
		Assume that $u_1$ and $u_2$ are the two closest vertices of $u$ along the clockwise and anti-clockwise directions, respectively in $V(\obs)$, which are  also adjacent to $v'$. Notice that there is an induced path  $ P $  between $ u_1 $ and $ u_2 $ passing through $ u $ such that    $ V(P) \subseteq V(\obs) $ and $ N(v') \cap V(P -u_1 -u_2) = \emptyset$. Clearly, $ v \notin P $. Let $ C $ be the cycle induced by the vertices $ V(P) \cup \{v'\} $. Since $ v' \notin X $ and $ P \cap X =\emptyset$, $C \cup \{ \glb\}$ forms an obstruction that is contained in $(G-v)-X$.  And this contradicts the fact that $X$ is a solution to $G-v$ for \phcad.
		\smallskip
		
		\begin{sloppypar}
			
			\item $\boldsymbol{ \prv(v) \in T,~\nxt(v) \in T,~\glb \notin T }$. Let $ u $ be a vertex in $ \obs \cap T $ such that $ (u,v) \notin E(G) $. Recall that  during Procedure \markone, we have added a set $ S: =\markk_i[A,B]$ of at least $2(k+1)$ vertices from $ \cl $ where $ A= \{  \prv(v), \nxt(v)\} $ and $ B= \{ u\} $. Otherwise, we would have added $v$  to $ T(\cl)$. But each vertex  in $ S $ is adjacent to both $ \prv(v) $ and $\nxt(v) $.   	
			Since $ |S| > k $, there exists a vertex in $ S $ which is not in $ X $. Let $ v' $ be such a vertex (arbitrarily chosen) from $ S \setminus X $. \textbf{If}  $ (v', \glb) \notin E(G) $,  using similar arguments as in case A.1.~we can find an induced subgraph $\obs'$ in $ G $ which  is an obstruction in $(G-v)-X$, a contradiction. \textbf{Else}, $ (v', \glb) \in E(G) $. In this case, we can a find a $ \overline{C_3^*} $, in  $(G-v)-X$ induced by the vertices $\{  v', \prv(v), \nxt(v), \glb \}$, which is an obstruction in $(G-v)-X$, again a contradiction.	
			
		\end{sloppypar}
		
		\smallskip
		
		\item $ \boldsymbol{\prv(v) \in T,~\nxt(v) \notin T,~\glb \in T} $. Let $ u $ be a vertex in $ \obs \cap T $ such that $ (u,v) \notin E(G) $. Note that such a vertex $ u $  always exists because of the redundant solution property. Recall that  during Procedure \markone, we have added a set $ S: =\markk_i[A,B]$ of at least $2(k+1)$ vertices from $ \cl $ where $ A= \{  \prv(v)\} $ and $ B= \{ u, \glb\} $. Otherwise, we would have added $v$  to $ T(\cl)$. 	In $ S $, we must have added the left most $ (k+1) $ vertices, say $ S_1 $ and the right most $ (k+1) $ vertices, say $ S_2 $ in $ \cl$. Since $ |S_1| > k $ and $ |S_2|>k $, there  exists some vertices in each of $ S_1 $ and $ S_2 $ which are not in $ X $. Let $ v'_1 $ and $ v'_2 $ be a pair of such vertices  (arbitrarily chosen) from $ S_1 \setminus X $ and $ S_2 \setminus X $, respectively. 	Since $ v, v'_1, v'_2 \in \cl $ and $ (v,\nxt(v)) \in E(G)$ , either $ v'_1 $ or $ v'_2 $ must be adjacent to $ \nxt(v) $. Without loss of generality, we assume that $ (v'_1,\nxt(v) )  \in E(G)$. Since $ \{\prv(v), \nxt(v) \} \subseteq N(v'_1) $ and  $v'_1 $ is non-adjacent to both $ u $ and $ \glb $, so there must exist  two distinct vertices $ u_1 $ and $ u_2 $ in $ \obs $ such that $ \{u_1,u_2\} \subseteq N(v'_1) $ with an induced path  $ P $  between them passing through $ u $ where   $ V(P) \subseteq V(\obs) $ and $ N(v'_1) \cap V(P -u_1 -u_2) = \emptyset$. Clearly, $ v \notin V(P) $, as $ (v, v'_1) \in E(G)$.  Let $ C $ be the cycle induced by the vertices $ V(P) \cup \{v'_1\}$. Now  $ P \cap X =\emptyset$ and $ v'_1 \notin X $ together imply that $\obs':= C \cup \{ \glb\}$ is an obstruction in $(G-v)-X$, a contradiction.
		
		\smallskip
		
		\item $ \boldsymbol{ \prv(v) \notin T,~\nxt(v) \in T,~\glb \in T }$. Since the selection of $ \prv(v) $ and $\nxt(v)$ is arbitrary, so arguments for this case are similar to that of case A.3.

		\smallskip
		
		\item $ \boldsymbol{ \prv(v) \in T,~\nxt(v) \notin T,~\glb \notin T}$. Let $ u $ be a vertex in $ \obs \cap T $ such that $ (u,v) \notin E(G) $. Note that such a vertex $ u $  always exists because of the redundant solution property. Recall that  during Procedure \markone, we have added a set $ S: =\markk_i[A,B]$ of at least $2(k+1)$ vertices from $ \cl $ where $ A= \{  \prv(v)\} $ and $ B= \{u\} $.  Otherwise, we would have added $v$  to $ T(\cl)$. In $ S $, we must have added the left most $ (k+1) $ vertices, say $ S_1 $ and the right most $ (k+1) $ vertices, say $ S_2 $ in $ \cl$. Since $ |S_1| > k $ and $ |S_2|>k $, there  exist some vertices in each of $ S_1 $ and $ S_2 $ which are not in $ X $. Let $ v'_1 $ and $ v'_2 $ be a pair of such vertices  (arbitrarily chosen) from $ S_1 \setminus X $ and $ S_2 \setminus X $, respectively. As	$ v, v'_1, v'_2 \in \cl $ and $ (v,\nxt(v)) \in E(G)$ , either $ v'_1 $ or $ v'_2 $ must be adjacent to $ \nxt(v) $. Without loss of generality, we assume that $ (v'_1,\nxt(v) )  \in E(G)$. \textbf{If}  $ (v', \glb) \notin E(G) $  using arguments similar to that in case A.3.~, we can find an induced subgraph $\obs'$ in $ G $ which is an obstruction in $(G-v)-X$, a contradiction. \textbf{Else}, $ (v', \glb) \in E(G) $. In this case, we can a find a $ \overline{C_3^*} $, in  $(G-v)-X$ induced by the vertices $\{  v', \prv(v), \nxt(v), \glb \}$, which is an obstruction in $(G-v)-X$, again a contradiction.

		\smallskip
		
		\item $ \boldsymbol{ \prv(v) \notin T,~\nxt(v) \in T,~\glb \notin T }$. Since selection of $ \prv(v) $ and $\nxt(v)$ is arbitrary, so arguments for this case is similar to that of case A.5.
		
		\smallskip
		
		\item $ \boldsymbol{\prv(v) \notin T,~\nxt(v) \notin T,~\glb \in T} $. Let $ u $ be a vertex in $ \obs \cap T $ such that $ (u,v) \notin E(G) $. Note that such a vertex $ u $  always exists because of the redundant solution property. Recall that  during Procedure \markone, we have added a set $ S: =\markk_i[A,B]$ of at least $2(k+1)$ vertices from $ \cl $ where $ A= \emptyset $ and $ B= \{u, \glb\} $. Otherwise, we would have added $v$ as well to $ T(\cl)$. 	\textbf{If} there exists a vertex $ v' \in S \setminus X$ such that $(v',\prv(v)) \notin E(G)$ and $(v',\nxt(v)) \notin E(G)$, then we get an induced subgraph $ \overline{C_3^*} $, in  $G-X$ induced by the vertices $\{  v, v',  \prv(v), \nxt(v)\}$, which is a \emph{small} obstruction in $G-X$. This is  a contradiction to the fact that there is no \emph{small} obstruction containing $ v  $ in $ G -X$. \textbf{If} there exists a vertex $ v' \in S \setminus X$ such that $(v',\prv(v)) \in E(G)$ and $(v',\nxt(v)) \in E(G)$, then using the same procedure as in case A.1.~, we can find an induced subgraph $\obs'$ in $ G $ which  is an obstruction in $(G-v)-X$, a contradiction. \textbf{Else}, each vertex in $ S $ is adjacent to exactly one of  $ \prv(v) $ and  $ \nxt(v) $. During the procedure \texttt{Mark-1}, in $ S $ we must have added the left most $ (k+1) $ vertices, say $ S_1 $ and the right most $ (k+1) $ vertices, say $ S_2 $ in $ \cl$. Since $ |S_1| > k $ and $ |S_2|>k $, there exist some vertices in each of $ S_1 $ and $ S_2 $ which are not in $ X $. Let $ v'_1 $ and $ v'_2 $ be a pair of such vertices  (arbitrarily chosen) from $ S_1 \setminus X $ and $ S_2 \setminus X $, respectively.  Without loss of generality,  we assume that $ (v'_1,\prv(v) )  \in E(G)$ and $ (v'_2,\nxt(v) )  \in E(G)$. Clearly, $ (v'_2,\prv(v) )  \notin E(G)$ and $ (v'_1,\nxt(v) )  \notin E(G)$. Since $ u$ is non-adjacent to both $ v'_1$ and $v'_2$, there exists a pair of vertices $ u_1 $ and $ u_2 $ 	in $ \obs $ such that $ (u_1,v'_1) \in E(G) $, $ (u_2,v'_2) \in E(G) $ with an induced path  $ P \subseteq \obs $ between $ u_1 $ and $ u_2 $ passing through $ u $ where $ N(\{v'_1, v'_2 \}) \cap V(P -u_1 -u_2) = \emptyset$. Clearly, $ v \notin P $, as $ (v, v'_1), (v, v'_2) \in E(G)$.  Let $ C $ be the cycle induced by the vertices $ V(P) \cup \{v'_1, v'_2\} $. Now  $ P \cap X =\emptyset$, $ v'_1, v'_2 \notin X $, $ N(\glb) \cap \{v'_1, v'_2\} = \emptyset $ together imply that $\obs':= C \cup \{ \glb\}$ is an obstruction in $(G-v)-X$, a contradiction.
		
		\smallskip
		
		\item $ \boldsymbol{\prv(v) \notin T,~\nxt(v) \notin T,~\glb \notin T }$.  Let $ u $ be a vertex in $ \obs \cap T $ such that $ (u,v) \notin E(G) $. Note that such a vertex $ u $  always exists because of the redundant solution property. But during Procedure \markone, we have added a set $ S: =\markk_i[A,B]$ of at least $2(k+1)$ vertices from $ \cl $ where $ A= \emptyset $ and $ B= \{u\} $. Otherwise,  $v$ would have been added  to $ T(\cl)$. 
		
		\textbf{If} there exists a vertex $ v' \in S \setminus X$ such that  $ v' $ is non-adjacent to both  $  \prv(v)$ and $ \nxt(v)$, then we can a find a $ \overline{C_3^*} $ in  $G-X$ induced by the vertices $\{ v, v', \prv(v), \nxt(v)\}$, which is a \emph{small} obstruction  containing $v$ in $G-X$, a contradiction.	
		
		\textbf{If} there exists a vertex $ v' \in S \setminus X$ such that $ (v',\glb) \in E(G) $ and $ v' $ is adjacent to both $  \prv(v)$ and $ \nxt(v)$, then we can a find a $ \overline{C_3^*} $ induced by the vertices $\{  v', \prv(v), \nxt(v), \glb \}$, which is an obstruction in $(G-v)-X$, a contradiction.

		\textbf{If} there exists a vertex $ v' \in S \setminus X$ such that $ (v',\glb) \in E(G) $ and $ v' $ is adjacent to exactly one of  $  \prv(v)$ and $ \nxt(v)$, then we argue as follows. Without loss of generality, we assume that $ (v',\prv(v) )  \in E(G)$. Suppose $ v^*$ is the adjacent vertex of $ \prv(v) $ other than $ v $ in the obstruction.  When $ (v', v^*) \notin E(G) $, we  can a find a $ \overline{S_3} $, in  $G-X$ induced by the vertices $\{ v, v', \prv(v), v^*, \glb, \nxt(v)  \}$, which is a \emph{small} obstruction  containing $v$, a contradiction.	 Else, when $ (v', v^*) \in E(G) $, we	 can a find a $ \overline{C_3^*} $, in  $G-X$ induced by the vertices $\{ v', v, v^*, \glb  \}$, which is a \emph{small} obstruction  containing $v$  in $G-X$, a contradiction.	
		
		
		\textbf{If} there exists a vertex $ v' \in S \setminus X$ such that $ (v',\glb) \notin E(G) $ and $ v' $ is adjacent to both $  \prv(v)$ and $ \nxt(v)$, then using the same arguments as in case A.1.~, we can find an induced subgraph $\obs'$ in $ G $ which  is an obstruction in $(G-v)-X$, a contradiction.

		\textbf{Else}, each vertex in $ S $ that is non-adjacent to $ \glb $, must be adjacent to exactly one of  $ \prv(v) $ and  $ \nxt(v) $. During the procedure \texttt{Mark-1}, in $ S $, we would have added the left most $ (k+1) $ vertices, say $ S_1 $ and the right most $ (k+1) $ vertices, say $ S_2 $ from $ \cl$. Since $ |S_1| > k $ and $ |S_2|>k $, there  exist some vertices in each of $ S_1 $ and $ S_2 $, that are not in $ X $. Let $ v'_1 $ and $ v'_2 $ be a pair of such vertices  (arbitrarily chosen) from $ S_1 \setminus X $ and $ S_2 \setminus X $, respectively.  Without loss of generality, let $ (v'_1,\prv(v) )  \in E(G)$ and $ (v'_2,\nxt(v) )  \in E(G)$. 
		Since $ u$ is non-adjacent to both $ v'_1$ and $v'_2$, there exists a pair of vertices $ u_1 $ and $ u_2 $ 	in $ \obs $ such that $ (u_1,v'_1) \in E(G) $, $ (u_2,v'_2) \in E(G) $ with an induced path  $ P $ in the obstruction $\obs $ between $ u_1 $ and $ u_2 $ passing through $ u $ where $ N(\{v'_1, v'_2 \}) \cap V(P -u_1 -u_2) = \emptyset$. Clearly, $ v \notin P $, as $ (v, v'_1), (v, v'_2) \in E(G)$.  Let $ C $ be the cycle induced by the vertices $ V(P) \cup \{v'_1, v'_2\} $. Since $ P \cap X =\emptyset$, $ v'_1, v'_2 \notin X $, $ N(\glb) \cap \{v'_1, v'_2\} = \emptyset $, $\obs':= C \cup \{ \glb\}$ is an obstruction in $(G-v)-X$, a contradiction.
	\end{enumerate}

	\begin{description}
		\item[Case B]  Here we consider the case when $ v $ is the \centre in the \monad $ \obs:= C^*_{\ell}$, i.e., $ v= \glb $. So there exists a cycle $H$ in $ \obs $ such that $\glb$ is not adjacent to any vertex of $V(H)$. We have deleted the vertex $v$ since it was an irrelevant (unmarked)  vertex. Let $ u $ be a vertex in $ \obs \cap T $. Clearly, $ (u,v) \notin E(G) $. Let $ \cl $ be a clique in $ \mathcal{Q} $ containing $ v $. During  Procedure \markone, we have added a set $ S: =\markk_i[A,B]$ of at least $2(k+1)$ vertices from $ \cl $ where $ A= \emptyset $ and $ B= \{u\} $. Otherwise, we would have added $v$  to $ T(\cl)$.  Since $ |S| > k $, there  exists some vertex in $ S $ which is not in  $ X $. Let $ v' $ be such a vertex (arbitrarily chosen) from $ S \setminus X $.
		\begin{itemize}
			\item \textbf{If} the vertex $ v' $ has no neighbour in $ H $, then $\obs':= H \cup \{ v'\}$ is an obstruction in $(G-v)-X$, a contradiction.
			
			\item \textbf{If} the vertex $ v' $ has  exactly one neighbour, say $ u' $ in $ H $, then  we can a find a $ \overline{C_3^*} $, in  $(G-v)-X$, induced by the vertices $\{ u',  \prv(u'), \nxt(u'), v' \}$, which is an obstruction in $(G-v)-X$, a contradiction. 
			
			\item \textbf{If} the vertex $ v' $ has  exactly two neighbours, say $ u' $ and $ u'' $ in $ H $, then  we argue as following. \textbf{When} $ (u', u'') \notin E(G) $, we can  find a $ \overline{C_3^*} $, in  $G-X$ induced by the vertices $\{ v',  u', u'', \glb \}$, a contradiction.			\textbf{When} $  (u' , u'') \in E(G)  $, then 	we can find a $ \overline{S_3} $, in  $G-X$ induced by the vertices $\{ v', u', u'',   \prv(u'), \nxt(u''), \glb \}$. For both these sub-cases, we are able to find \emph{small} obstructions in $G-X$ containing $v$, which is  a contradiction. 
			\item \textbf{Else} the vertex $ v' $ has  at least three neighbours say $ u_1, u_2 $ and $ u_3 $ in $ H $ where $ (u_1, u_3) \notin E(G) $ (note that such $ u_1, u_3 $ always exist as $H$ is large). Then   we can  find a $ \overline{C_3^*} $, in  $G-X$ induced by the vertices $\{ v',  u_1, u_3, \glb\}$, which is a \emph{small} obstruction in $G-X$ containing $v$, a contradiction. 
		\end{itemize} 
\end{description}This completes the proof. \qed \end{proof}

\begin{tcolorbox}[colback=gray!5!white,colframe=gray!75!black]
	\vspace{-1mm}
	\noindent{\bf Procedure \markone.} Let $ \cl $ be a clique.   For  a pair of disjoint subsets $ A, B \subseteq T $, where $ |A| \leq 2 $ and $ |B| \leq 2 $, let  $\markk_i[A,B]$ be the set defined by  $\{ v \in \cl ~|~ A \subseteq N(v), ~B \cap N(v) = \emptyset\}$. We
initialize $ T(\cl)= \emptyset $, and do as follows: 

\begin{itemize}
	\item If $|\markk_i[A,B]|\leq 2(k+1)$, we add all vertices from the set $\markk_i[A,B]$ to $ T(\cl) $.
	
	\item Else, we add the left most $ (k+1) $ vertices (clockwise order of vertices according to their corresponding arc representation) and the right most $ (k+1) $ vertices (anticlockwise order) in $ \markk_i[A,B]$ to $ T(\cl) $.
	
\end{itemize}
\end{tcolorbox}

With the help of 	reduction rule \ref{rule:bound_clique}, after deleting all unmarked vertices from each $\cl \in \mathcal{Q} $, size of each clique $ \cl $ is reduced to $ k^{\OO(1)} $. Therefore, we have the following result. Notice that $\mathcal{Q}$ (with the reduced cliques) is also a \emph{nice clique partition} of $G'-T$ in the reduced instance $(G',k)$.

\begin{lemma}
	\label{lem-clique_size}
	Given an instance $(G, k)$ of \phcad and a nice modulator $T \subseteq V(G)$ of size $k^{\OO(1)}$, in polynomial time, we can construct an equivalent instance $(G', k)$ such that $T \subseteq V(G')$ and there exists a nice clique partition $\mathcal{Q}$ of $G'-T$ such that the size of each clique in $\mathcal{Q}$ is bounded by $ k^{\OO(1)}$. 
\end{lemma}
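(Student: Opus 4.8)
The plan is to combine the Marking Scheme (Procedure \markone) with the safety of Reduction Rule~\ref{rule:bound_clique} (\cref{lem:cliqueRed}) and the size bound recorded in \cref{rem:Mark1}. First I would start from the instance $(G,k)$ together with the given nice modulator $T$ with $|T|\le k^{\OO(1)}$, and invoke \cref{lem-efficient} and \cref{lem:redundant} implicitly through the ``nice modulator'' assumption boxed above, so that $G-T$ is a \phcag and hence (being in particular a proper circular-arc graph) admits a nice clique partition $\mathcal{Q}=(Q_1,\dots,Q_{|\mathcal{Q}|})$ computable in polynomial time, as noted in \cref{sec:clique}. Fix such a partition once and for all.

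Next, for each clique $\cl\in\mathcal{Q}$ I would run Procedure \markone, which for every ordered pair of disjoint sets $A,B\subseteq T$ with $|A|,|B|\le 2$ marks a set $T(\cl)$ of size at most $2(k+1)$ per pair. Since the number of such pairs $(A,B)$ is at most $\card{T}^4$, \cref{rem:Mark1} gives $\card{T(\cl)}\le 2(k+1)\card{T}^4 = k^{\OO(1)}$, and the procedure runs in polynomial time. Then I would exhaustively apply Reduction Rule~\ref{rule:bound_clique}: as long as some clique $\cl\in\mathcal{Q}$ contains a vertex $v\in\cl\setminus T(\cl)$, delete it. By \cref{lem:cliqueRed} each such deletion preserves equivalence of the instance, so after polynomially many deletions we reach an equivalent instance $(G',k)$ in which every clique $\cl\in\mathcal{Q}$ satisfies $\cl\subseteq T(\cl)$, hence $|\cl|\le 2(k+1)\card{T}^4 = k^{\OO(1)}$.

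The final step is to observe that $T\subseteq V(G')$ (no vertex of $T$ is ever deleted, since Rule~\ref{rule:bound_clique} only deletes vertices of $V(G)\setminus T$) and that $\mathcal{Q}$, restricted to the surviving vertices, remains a partition of $V(G'-T)=V(G-T)\setminus(\text{deleted vertices})$ into cliques: conditions (i)--(iii) in the definition of a nice clique partition are inherited under taking induced subgraphs, since deleting vertices neither destroys the union covering, nor creates overlaps, nor creates edges between nonadjacent cliques. Thus $\mathcal{Q}$ is a nice clique partition of $G'-T$ with each clique of size $k^{\OO(1)}$, as claimed. The only genuinely substantive part is the safety of Reduction Rule~\ref{rule:bound_clique}, which has already been established in \cref{lem:cliqueRed} through the lengthy case analysis on the role of the deleted vertex $v$ in a putative large obstruction; everything else here is bookkeeping about running times and about the inheritance of the nice-clique-partition structure under vertex deletion, so I would not expect any obstacle beyond carefully citing \cref{lem:cliqueRed}, \cref{rem:Mark1}, and the polynomial bound on the number of pairs $(A,B)$.
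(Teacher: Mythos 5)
Your proposal matches the paper's own argument essentially verbatim: the paper also obtains this lemma by computing the nice clique partition, running Procedure \markone\ once per clique, deleting all unmarked vertices via Reduction Rule~\ref{rule:bound_clique} (whose safety is the content of \cref{lem:cliqueRed}), and invoking the bound $|T(\cl)|\le 2(k+1)\card{T}^4$ from \cref{rem:Mark1}, noting as you do that $T$ survives and the restricted partition remains nice. No substantive differences; your write-up is, if anything, more explicit than the paper's two-sentence justification.
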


\section{Bounding the Size of each Connected Component}\label{sec-size_component}

From \cref{lem-clique_size}, we can assume that the size of every clique in the \emph{nice clique partition} $\cd=(Q_1,\ldots)$ of $G-T$ for a given instance $(G,k)$ is bounded by $k^{\cO(1)}$.  In this section, we will bound the size of each connected component in $G-T$. For this purpose, it is sufficient to bound the number of cliques $Q_i$'s from $\cd$ appearing in each connected component. 

Let $\mathcal{C}$ be such a connected component. Without loss of generality, we assume that $\mathcal{C}=\bigcup_{i}(Q_i)$ i.e. in the \emph{nice clique partition} $\cd$, in the connected component $\mathcal{C}$, the cliques appear in clockwise direction starting from $Q_1$ as $Q_1, Q_2, \ldots$ etc. We denote $(Q_1,Q_2,\ldots)$  from $\mathcal{C}$ by $\cdc$.

\begin{reduction rule}\label{rr1}
Let $v$ be a vertex in $T$. If $v$ is contained in at least $k+1$ distinct claws $(v,a_i,b_i,c_i)$ intersecting exactly at $\{v\}$, where $a_i,b_i,c_i\in V(G)\setminus T$ then delete $v$ from $G$, and reduce $k$ by $1$. The resultant instance is $(G-v,k-1)$.
\end{reduction rule}

The correctness of the above reduction rule is easy to see as every solution to $(G,k)$ of \phcad must contain the vertex $v$. From here onward we assume that the reduction rule  \ref{rr1} is no longer applicable.

\begin{reduction rule}\label{rr2}
Let $v$ be a vertex in $T$. If $v$ has neighbors in more than $6(k+1)$ different $Q_i$'s ($a_i$'s being the corresponding neighbors), then remove $v$ from $G$ and reduce $k$ by $1$. The resultant instance is $(G-v,k-1)$.
\end{reduction rule}

\begin{lemma}\label{red:Boundnumber}
Reduction Rule \ref{rr2} is safe.
\end{lemma}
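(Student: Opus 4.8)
The plan is to show that if a vertex $v\in T$ has neighbours in more than $6(k+1)$ distinct cliques $Q_i$ of the nice clique partition, then $v$ belongs to every $k$-solution, so deleting $v$ and decreasing $k$ is safe. As always, the forward direction ($(G,k)$ \yes\ implies $(G-v,k-1)$ \yes) is immediate once we know $v$ lies in every solution; the content is the claim that $v$ is forced. Suppose for contradiction that $X$ is a solution of size at most $k$ with $v\notin X$. Since $|X|\le k$, at most $k$ of the witnessing cliques $Q_i$ contain a vertex of $X$, so after discarding those we still have more than $6(k+1)-k>5k$ (in fact we only need more than a constant times $k$) cliques $Q_i$, each containing a neighbour $a_i$ of $v$ and with $Q_i\cap X=\emptyset$; moreover, since $G-T$ is a \phcag\ with clique partition $\cd$ and cliques with indices differing by more than one are non-adjacent, these surviving cliques can be chosen pairwise ``far apart'' (index gap at least $3$, say), so that the chosen neighbours $a_i$ together with $v$ form three pairwise non-adjacent vertices around $v$.

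First I would extract from the surviving cliques three indices $i_1<i_2<i_3$ with $i_{j+1}-i_j\ge 2$ (possible because we have far more than a constant number of surviving cliques); then $a_{i_1},a_{i_2},a_{i_3}$ lie in $Q_{i_1},Q_{i_2},Q_{i_3}$ respectively, hence are pairwise non-adjacent by property~(iii) of the nice clique partition, and each is adjacent to $v$. Thus $\{v,a_{i_1},a_{i_2},a_{i_3}\}$ induces a claw $\overline{C_3^*}$ in $G-X$ (since none of $a_{i_1},a_{i_2},a_{i_3}$ nor $v$ — recall $v\notin X$ — is in $X$), contradicting the fact that $G-X$ is a \phcag\ and therefore claw-free (Proposition from \cite{DBLP:journals/dam/LinSS13}). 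Hence no solution of size at most $k$ can avoid $v$, so $v$ is in every $k$-solution; consequently $(G,k)$ is a \yes-instance if and only if $(G-v,k-1)$ is, and the reduction rule is safe.

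The only real point to be careful about is guaranteeing the three chosen neighbours are pairwise non-adjacent rather than merely living in distinct cliques — this is exactly why the bound is $6(k+1)$ rather than something smaller: we must survive the deletion of up to $k$ ``spoiled'' cliques and still have enough distinct cliques left to pick three that are mutually at index-distance $\ge 2$ (choosing every other clique costs at most a factor of $2$, and being safe with the constants is what the $6(k+1)$ buys). I expect the main obstacle, such as it is, to be writing the counting argument cleanly: one should note that Reduction Rule~\ref{rr1} being inapplicable is not actually needed here (that rule handles claws all sharing only $v$ but with $a_i,b_i,c_i$ arbitrary), whereas here the three legs of the claw are automatically forced to be non-adjacent by the clique-partition structure, which is the cleaner route. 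An alternative, also valid, is to invoke Reduction Rule~\ref{rr1}: more than $6(k+1)$ distinct neighbour-cliques yield more than $k+1$ internally-disjoint claws through $v$, so Rule~\ref{rr1} would already have fired — but since we are presenting Rule~\ref{rr2} as a separate rule it is cleaner to argue directly as above.
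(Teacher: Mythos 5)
Your proof is correct and is essentially the paper's argument, presented as a direct contradiction instead of as a disjoint-claw count. The paper applies pigeonhole to extract $3(k+1)$ pairwise non-consecutive cliques among the $>6(k+1)$ witnesses, groups them into $k+1$ triples, and observes that the corresponding $a_i$'s give $k+1$ claws meeting only at $v$; hence every $k$-solution must contain $v$. You instead fix a hypothetical $k$-solution $X$ avoiding $v$, discard the at most $k$ witness cliques hit by $X$, and from the abundant survivors pick three cliques pairwise at index-distance $\ge 2$ so that the chosen neighbours of $v$ are pairwise non-adjacent and none is in $X$, giving a claw in $G-X$ and a contradiction. The two formulations are equivalent; the paper's version is exactly the ``invoke the $k+1$ disjoint claws / Rule~\ref{rr1} logic'' alternative you mention at the end. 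One small point worth being careful about in a polished write-up: since the nice clique partition is circular (indices are taken modulo $|\cd|$), ``index gap $\ge 2$'' should be read as circular distance $\ge 2$, which requires a one-line check that three such survivors exist; your slack of $>5k$ survivors makes this immediate, and the paper's proof sketch has the same implicit step.
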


\begin{proof}
	By the pigeonhole principle, there are at least $3(k+1)$ non-consecutive cliques that have neighbors of $v$. Let these non-consecutive cliques be denoted by $(Q_1',Q_2',\ldots)$. Now we can construct a set of $k+1$ mutually distinct claws formed by $\{v,a_i,a_{i+1},a_{i+2}\}$ intersecting exactly and only at $\{v\}$ where each $a_j\in Q_j'$. But this implies that any solution to $(G,k)$ of \phcad must contain the vertex $v$. \qed
\end{proof}

 From now on, we assume that the reduction rules \ref{rr1} and \ref{rr2} are no longer applicable i.e.~every vertex $v\in T$ has neighbors in at most $6(k+1)$ different $Q_i$' from $\cdc$. And we have the following result. 

\begin{lemma} \label{lem:seventkplus1}
Let $\cal C$ be a connected component in $G-T$. Then there are at most $6(k+1)|T|$ many distinct cliques $Q_i$'s from $\cdc$ such that $N(T)\cap Q_i \neq \emptyset$. 
\end{lemma}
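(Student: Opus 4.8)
The plan is to bound the number of "good" cliques---those cliques $Q_i$ in $\cdc$ that contain a neighbor of some vertex of $T$---by summing over the individual vertices of $T$. Since the reduction rules \ref{rr1} and \ref{rr2} are no longer applicable, each single vertex $v \in T$ has neighbors in at most $6(k+1)$ distinct cliques $Q_i$ of $\cdc$. Therefore the set of cliques that contain a neighbor of $v$ has size at most $6(k+1)$, and a clique $Q_i$ satisfies $N(T) \cap Q_i \neq \emptyset$ only if it contains a neighbor of at least one vertex $v \in T$. Taking the union over all $v \in T$ gives at most $\sum_{v \in T} 6(k+1) = 6(k+1)|T|$ cliques with this property.

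In more detail, first I would observe that for a fixed $v \in T$, the condition that $Q_i$ contains a neighbor of $v$ is exactly the condition used in Reduction Rule \ref{rr2}; since that rule is not applicable, the number of distinct $Q_i \in \cdc$ with $N(v) \cap Q_i \neq \emptyset$ is at most $6(k+1)$. Next I would note that if $N(T) \cap Q_i \neq \emptyset$, then by definition there is some vertex $w \in Q_i$ and some $v \in T$ with $vw \in E(G)$, so $Q_i$ is counted among the (at most $6(k+1)$) cliques "charged" to $v$. Hence the collection $\{Q_i \in \cdc : N(T) \cap Q_i \neq \emptyset\}$ is contained in $\bigcup_{v \in T} \{Q_i \in \cdc : N(v) \cap Q_i \neq \emptyset\}$, a union of $|T|$ sets each of size at most $6(k+1)$, which yields the claimed bound of $6(k+1)|T|$.

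This is essentially a counting argument and there is no real obstacle; the only thing to be careful about is that Reduction Rule \ref{rr2} is phrased in terms of neighbors in the $Q_i$'s of $\cdc$ (i.e.\ restricted to the connected component $\cal C$), which is exactly what we need here, so no loss of generality arises. One should also keep in mind that the bound is stated per connected component $\cal C$, but since each $v \in T$ individually has at most $6(k+1)$ good cliques \emph{in total} across all of $G-T$ (in particular within $\cal C$), the per-component bound follows immediately. Thus the proof is a one-line application of the pigeonhole-free direct count, combining the non-applicability of Reduction Rule \ref{rr2} with the trivial fact that every good clique must be good ``because of'' at least one vertex of $T$.
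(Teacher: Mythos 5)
Your proof is correct, and it is precisely the one-line counting argument the paper itself leaves implicit: the lemma is stated without proof immediately after the remark that, once Reduction Rules \ref{rr1} and \ref{rr2} are exhausted, every $v \in T$ has neighbors in at most $6(k+1)$ cliques of $\cdc$, and your union-over-$v\in T$ charging argument is exactly the intended justification.
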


If  $\cdc$ has more than $300|T|k(k+1)$ cliques, then by the pigeonhole principle and  \cref{lem:seventkplus1}, there are at least $50k$ consecutive cliques  that do not contain any vertex from $N(T)$. Let $Q_1, Q_2, \ldots, Q_{50k}$ 
be the set of $50k$ such consecutive cliques in $\cdc$ which are disjoint from $N(T)$.  
Let $\mathcal{D}_L=\{Q_i \mid i\in[15k,20k]\}$ , $\mathcal{D}_R=\{Q_i \mid i\in[30k,35k]\}$, $F=\{Q_i \mid i\in[20k+1,30k-1]\}$
and $Z=\mathcal{D}_L\cup \mathcal{D}_R\cup F$. Observe that, 
for a vertex $v\in Z$ and a vertex $u\in T$, $\dist_G(u,v) \geq 15k$. And hence there can not be any \emph{small} obstruction containing vertices from $Z$  (\cref{ob1}) which we will use to our advantage in many proofs throughout the current section. Let $\tau$ be the size of minimum $(Q_{20k}-Q_{30k})$  cut in $\cdc$.

\begin{reduction rule}\label{rr3}
Let $F$ be as defined above. Delete all the vertices of $F$ from $G$. Introduce a new clique $S$ of size $\tau$. Also, add edges such that $G[V(Q_{20k})\cup S]$  and $G[V(Q_{30k})\cup S]$ are complete graphs. The cliques appear in the order $Q_{20k},S,Q_{30k}$.
\end{reduction rule}

\begin{figure}[t!]
	\begin{center}
		\includegraphics[width=.7\textwidth]{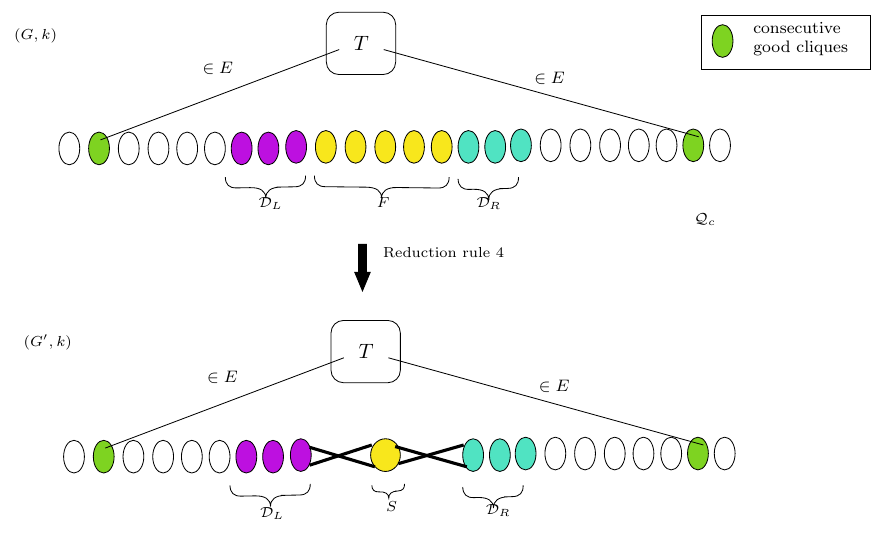}
	\end{center}
	\caption{Description of reduction rule \ref{rr3}.}
	\label{fig:rule5}
\end{figure}

Let $G'$ be the reduced graph after application of the reduction rule \ref{rr3}. For an illustration, see \cref{fig:rule5}. Notice that $G'-T$ is a \phcag~by construction.

\begin{observation}\label{ob1}
There are no \emph{small} obstructions containing any vertices from $ \mathcal{D}_L \cup F \cup \mathcal{D}_R$ in $G$. 
Similarly, there are no \emph{small} obstructions containing vertices of $ \mathcal{D}_L \cup S \cup  \mathcal{D}_R$ in $G'$.
\end{observation}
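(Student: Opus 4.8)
The plan is to argue directly from the distance bound established just before the statement of Reduction Rule \ref{rr3}, namely that every vertex in $Z = \mathcal{D}_L\cup F\cup\mathcal{D}_R$ is at distance at least $15k$ (in particular, at least $\cyclebound$) from every vertex of $T$, and then to combine this with the structure of a nice clique partition and the fact that $G-T$ is a \phcag. First I would recall that every small obstruction has fewer than $\cyclebound$ vertices, so if a small obstruction $H$ contained a vertex $w\in Z$, then all of $V(H)$ would lie within distance $\cyclebound-1 < 15k$ of $w$ in the graph $G[V(H)]$, and hence within distance $15k$ of $w$ in $G$; by the triangle inequality no vertex of $H$ can then lie in $T$. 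Thus $H\subseteq V(G)\setminus T = V(G-T)$, which is impossible because $G-T$ is a \phcag\ and therefore contains no obstruction at all. This proves the first assertion.

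For the second assertion I would do exactly the same thing in $G'$. After applying Reduction Rule \ref{rr3}, the component of $G'-T$ still has a nice clique partition in which the cliques of $\mathcal{D}_L$ and $\mathcal{D}_R$ are separated from each other only by the single new clique $S$, and by construction the new vertices of $S$ have no neighbours in $T$ (they are added only inside $V(Q_{20k})\cup S$ and $V(Q_{30k})\cup S$, neither of which meets $T$). Since $Q_{20k}$ and $Q_{30k}$ are themselves at distance at least $15k$ from $T$ in $G$, and distances along the "spine" of the nice clique partition between these cliques and $T$ are not shortened by the modification (the deleted part $F$ is replaced by a path of cliques, which cannot create a shorter route to $T$), every vertex of $\mathcal{D}_L\cup S\cup\mathcal{D}_R$ is again at distance at least $\cyclebound$ from $T$ in $G'$. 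The same contradiction as before — a small obstruction through such a vertex would be entirely contained in the \phcag\ $G'-T$ — completes the argument.

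The only point that needs a little care, and which I expect to be the main (mild) obstacle, is justifying that the distance from $Q_{20k}$ and $Q_{30k}$ to $T$ is not decreased by Reduction Rule \ref{rr3}: one must check that replacing the block $F$ by the clique $S$ does not open up a new short path from a vertex of $\mathcal{D}_L\cup S\cup\mathcal{D}_R$ to $T$. This follows because $S$ is inserted strictly between $Q_{20k}$ and $Q_{30k}$ with $E(Q_{20k},S)$ and $E(S,Q_{30k})$ the only new edges, and $S$ is disjoint from $T$ with no edges to $T$; so any path from $\mathcal{D}_L\cup S\cup\mathcal{D}_R$ to $T$ in $G'$ must exit through $Q_{20k}$ or $Q_{30k}$ and thereafter lies in $G$ (using only old vertices and old edges), whence its length is bounded below using the $\dist_G$ estimate. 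With this observation in hand the rest is immediate from $|V(H)|<\cyclebound$ for every small obstruction $H$ and the fact that $G-T$ and $G'-T$ are \phcags. \qed
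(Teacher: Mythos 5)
Your argument is exactly the one in the paper: a small obstruction has at most $\cyclebound$ vertices, every vertex of $Z$ (resp.\ $\mathcal{D}_L\cup S\cup\mathcal{D}_R$) is at distance at least $15k$ from $T$, so such an obstruction would be entirely contained in $G-T$ (resp.\ $G'-T$), contradicting that this is a \phcag. The extra care you take to verify that Reduction Rule~\ref{rr3} does not shorten distances to $T$ is a reasonable addition, but the substance and route of the proof match the paper's.
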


\begin{proof}
	Let $\obs$ be a \emph{small} obstruction in $G$ such that $V(\obs)\cap  (\mathcal{D}_L \cup F \cup \mathcal{D}_R)\neq \emptyset$.
	Since for any vertex $v\in \mathcal{D}_L \cup F \cup \mathcal{D}_R$ and a vertex $u\in T$, $\dist_G(u,v) \geq 15k$ and $|\obs|\leq \cyclebound$, hence
	$V(\obs)\cap T=\emptyset$. But this is a contradiction, since $G- T$ has no obstructions. So 
	there are no \emph{small} obstructions containing any vertices from $ \mathcal{D}_L \cup F \cup \mathcal{D}_R$.
	
	Let $\obs'$ be a \emph{small} obstruction in $G'$ such that $V(\obs')\cap  (\mathcal{D}_L \cup S \cup \mathcal{D}_R)\neq \emptyset$,
	Since for any vertex $v\in \mathcal{D}_L \cup S \cup \mathcal{D}_R$ and a vertex $u\in T$, $\dist_{G'}(u,v) \geq 15k$ and $|\obs'|\leq \cyclebound$, hence
	$V(\obs')\cap T=\emptyset$. But this is a contradiction, since $G'- T$ has no obstructions. So
	there are no \emph{small} obstructions containing any vertices from $ \mathcal{D}_L \cup S \cup \mathcal{D}_R$.\qed
\end{proof}

 \begin{observation}\label{obs:inGintZ} Any \hole $H$ of a \monad with a \centre $v$  in $G$ which contains a vertex from $\mathcal{D}_L \cup F \cup \mathcal{D}_R$, intersects all cliques in $\mathcal{D}_L \cup F \cup \mathcal{D}_R$. And such an $H$ has size at least $20k$. 
\end{observation}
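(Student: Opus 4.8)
\textbf{Proof proposal for \cref{obs:inGintZ}.}
The plan is to argue about the structure of a \mhole $H$ of a \monad (with \centre $v$) inside the \phcag $G-T$, exploiting that $H$ is an induced cycle of $G$ that avoids $T$. First I would note that since $H$ contains a vertex of $\mathcal{D}_L\cup F\cup\mathcal{D}_R$ and these cliques are far from $T$, by \cref{ob1} (i.e.\ \cref{obs:inGintZ}'s companion observation that no small obstruction meets $Z$) $H$ cannot be a small \mhole; but more importantly $H$ lives entirely within the connected component $\mathcal{C}$, and thus within $\cdc$. The key fact I would use is that in a \emph{nice clique partition} every edge lies inside one clique $Q_i$ or between consecutive cliques $Q_i,Q_{i+1}$. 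Hence, as one traverses the cycle $H$, the index $i$ of the clique containing the current vertex changes by at most $1$ at each step. Therefore the set of indices $\{i : V(H)\cap Q_i\neq\emptyset\}$ is ``interval-like'' in the sense that if $H$ meets $Q_a$ and $Q_b$ with $a<b$, then $H$ must meet every $Q_j$ with $a\le j\le b$ — otherwise the closed walk around $H$ could not get from index $a$ to index $b$ and back without passing through the missing index $j$ (a discrete intermediate value argument on the cyclic sequence of indices).

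Next I would pin down where $H$ can go. Since $v$ is the \centre of the \monad, $v$ is nonadjacent to every vertex of $H$; in particular $v\in V(G)$ and $v$ could be in $T$ or in some clique of $\cdc$, but either way $v$ has no neighbour in $V(H)$. Now suppose for contradiction that $H$ does \emph{not} meet some clique $Q_j$ with $j\in[15k,35k]$ (the index range of $Z=\mathcal{D}_L\cup F\cup\mathcal{D}_R$). By the interval property just established, the indices hit by $H$ all lie on one side of $j$, i.e.\ $H\subseteq \bigcup_{i<j}Q_i$ or $H\subseteq\bigcup_{i>j}Q_i$. But $H$ contains a vertex $w$ of $\mathcal{D}_L\cup F\cup\mathcal{D}_R$, whose clique index is in $[15k,35k]$; so this alone does not yet force a contradiction. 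The extra ingredient is that $H$, being a hole in $G-T$, is an induced cycle of the \phcag $G-T$, so it is a hole in a proper circular-arc graph, and a proper circular-arc graph has at most a bounded chord-free structure — more precisely, a hole of a proper interval-like segment must "wrap", and since $G-T$ restricted to $\cdc$ near $Z$ is a path-like (proper-interval-like) chain of cliques with no attachments to $T$, any induced cycle living in that chain would actually force the chain to close up, which it does not within $[15k,35k]$. Concretely: the subgraph $G[\bigcup_{i\in[15k,35k]}Q_i]$ is a proper interval graph (it is an induced subgraph of $G-T$ and its nice clique partition is a path, not a cycle), and proper interval graphs are chordal, hence have no holes; so any hole $H$ meeting $Z$ must use a vertex outside $\bigcup_{i\in[15k,35k]}Q_i$, i.e.\ it must "leave" the segment on both ends, which by the interval property means it meets every $Q_i$ with $i\in[15k,35k]$, and in particular every clique in $\mathcal{D}_L\cup F\cup\mathcal{D}_R$.

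Finally, for the size bound: once we know $H$ meets each of the $20k$-plus cliques $Q_i$ with $i\in[15k,35k]$ (this index window has length $20k$), and distinct cliques are vertex-disjoint, $H$ contains at least one vertex per clique, so $|V(H)|\ge 20k$, and hence $|V(H)|\ge 20k$ as claimed; the \monad therefore has size at least $20k$ as well. I expect the \textbf{main obstacle} to be making the "$H$ cannot live inside a finite path-segment of cliques" step fully rigorous — i.e.\ correctly invoking that an induced subgraph of $G-T$ supported on a consecutive block of cliques from a nice clique partition is chordal (proper interval) and therefore hole-free, so that the hole is forced to exit the block on both sides and, by the discrete intermediate-value property of clique indices along $H$, must sweep through the entire block $Z$. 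Care is also needed to confirm that $H$ stays within a single connected component and that the clique indices adjacent along $H$ really differ by at most one, both of which follow directly from the definition of a \emph{nice clique partition}.
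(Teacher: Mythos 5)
Your approach is genuinely different from the paper's, and it contains a real gap. The paper argues \emph{locally}: suppose $H$ meets $Q_i$ but misses some $Q_{i+1}\in Z$; then, since $Q_i$ has no neighbours in $T$, the $H$-vertices in $Q_i$ have all their $H$-neighbours in $Q_{i-1}\cup Q_i$, and the induced-cycle structure (a clique can carry at most two consecutive vertices of a hole, and the two neighbours of a hole-vertex are non-adjacent) forces $|H|=4$ with $H\subseteq Q_{i-1}\cup Q_i$. Adding a vertex of $Q_{i-3}$ (which is non-adjacent to $H$ and still far from $T$) then produces a small monad contained in $G-T$, a contradiction. You instead argue globally via chordality of a path-block of cliques plus a discrete intermediate-value argument on the clique-index trace of $H$.

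The genuine gap is that you assume $H$ avoids $T$ (``an induced cycle of $G$ that avoids $T$'', ``lives entirely within the connected component $\mathcal{C}$''). This is not given: the statement is about an \mhole in $G$, and where the paper invokes the observation the hole $H$ is only known to avoid $X\setminus\{v\}$ for some solution $X$, so it may well pass through $T$. Once $H$ can touch $T$, the clique-index trace is undefined at those vertices, and both your intermediate-value step and your chordality step no longer apply to $H$. The paper's local argument sidesteps this entirely: the boundary forces $H\subseteq Q_{i-1}\cup Q_i$, two cliques inside the $50k$-block disjoint from $N(T)$, so $H\cap T=\emptyset$ is a \emph{consequence}, not a hypothesis. (Equivalently, a hole through both $T$ and $Z$ would have $|H|\geq 30k$ since both arcs between such vertices have length at least $15k$, which is incompatible with the forced $|H|=4$.) A secondary, fixable imprecision: the step ``the indices hit by $H$ all lie on one side of $j$, i.e.\ $H\subseteq\bigcup_{i<j}Q_i$ or $H\subseteq\bigcup_{i>j}Q_i$'' and the later ``must leave the segment on both ends'' treat the indices linearly, but $\cdc$ may be cyclic. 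Even granting $H\subseteq G-T$, the right conclusion from ``$H$ misses $Q_j$'' is that $H$ is confined to the \emph{linear} block of cliques $\cdc\setminus\{Q_j\}$, which induces a proper interval (hence chordal, hence hole-free) graph; this gives the contradiction directly and removes the need for the ``both ends'' claim.
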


 \begin{proof}
	Without loss of generality suppose $H$ intersects $Q_{i}$ but does not intersect some $Q_{i+1}\in Z$. Then the clique $Q_{i'}$ where $i'=i-1$ contains at least two vertices from $H$. This is only possible when $H$ has size at most four (since it is a \monad). But then $H$ along with any vertex from $Q_{i-3}$ or $Q_{i+2}$ will form a \emph{small} obstruction completely contained in $G-T$, which is not possible.  Hence $H$ intersects all cliques in $\mathcal{D}_L \cup F \cup \mathcal{D}_R$.\qed
\end{proof}

\begin{observation}\label{obs:inGintZ1}
Any \hole $H$ of a \monad with a \centre $v$ in $G'$ which contains a vertex from $\mathcal{D}_L \cup S \cup \mathcal{D}_R$, intersects all cliques in $\mathcal{D}_L \cup S \cup \mathcal{D}_R$. And such an $H$ has size at least $20k$. 
\end{observation}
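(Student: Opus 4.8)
\textbf{Proof plan for \cref{obs:inGintZ1}.}

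The plan is to mirror the argument of \cref{obs:inGintZ}, transplanting it to the reduced graph $G'$ and to the chain $\mathcal{D}_L, S, \mathcal{D}_R$ of cliques appearing in this order (recall that after \cref{rr3} the cliques of $\cdc$ restricted to this stretch are exactly $\dots, Q_{15k},\dots,Q_{20k}, S, Q_{30k},\dots,Q_{35k},\dots$ forming a nice clique partition of $G'-T$). Let $H$ be an $\mhole$ of a $\monad$ in $G'$ with centre $v$, and suppose $H$ contains a vertex lying in some clique of $\mathcal{D}_L \cup S \cup \mathcal{D}_R$. First I would argue that $H$ is ``large'', i.e., $|V(H)|\geq 20k$: by \cref{ob1} there is no \emph{small} obstruction touching $\mathcal{D}_L\cup S\cup\mathcal{D}_R$ in $G'$, and since an $\mhole$ together with its centre $v$ is an obstruction of size $|V(H)|+1$, if $|V(H)|<\cyclebound-1$ the whole $\monad$ would be a small obstruction meeting this region — contradiction. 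Hence $|V(H)|\ge\cyclebound-1$; I will then bootstrap this to the claimed bound $20k$ using the consecutivity argument below.

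Next I would prove that $H$ intersects every clique in $\mathcal{D}_L\cup S\cup\mathcal{D}_R$. Suppose not; since these cliques occur consecutively in the nice clique partition, there is a clique $Q$ in this set that $H$ meets and an adjacent clique $Q'$ in this set that $H$ misses. Because edges of $G'-T$ only go inside a clique or between consecutive cliques, the vertices of $H$ lying at or ``past'' $Q$ on the side of $Q'$ must actually be crammed into the clique $Q''$ on the far side of $Q$ from $Q'$, so that $Q''$ contains at least two vertices of $H$. But $H$ is an induced cycle, so two of its vertices inside a single clique must be consecutive on $H$, which forces $|V(H)|\le 4$ (the only induced cycle that can fit ``two consecutive vertices in one clique and the rest nearby'' within a proper-interval-like layout is a $C_4$; more precisely, since $H$ is a $\monad$'s hole it is an induced cycle, and one checks the only way it fails to be spread across the consecutive cliques is when $H$ is a $C_4$ sitting in two adjacent cliques). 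Then $H$ together with any vertex of a clique two steps away from $H$'s span on either side — such a vertex exists because we took $50k$ consecutive $N(T)$-free cliques — yields a small obstruction entirely inside $G'-T$, contradicting that $G'-T$ is a \phcag. Therefore $H$ meets all cliques of $\mathcal{D}_L\cup S\cup\mathcal{D}_R$; since $\mathcal{D}_L$ alone spans the $5k$ cliques $Q_{15k},\dots,Q_{20k}$ and each contributes at least one vertex of $H$, we get $|V(H)|\geq 5k\geq$ (and in fact, counting also $\mathcal{D}_R$ and the intermediate stretch, at least) $20k$, establishing the size bound.

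The main obstacle I anticipate is the case analysis establishing that a hole skipping a clique must be a $C_4$: one has to be careful that $H$ is only an \emph{induced} cycle of the $\monad$ (so chords within $H$ are forbidden, but $v$ and vertices of $T$ are irrelevant here since we are inside $G'-T$), and argue cleanly, using property (iii) of the nice clique partition, that if $H$ skips $Q'$ then all of $H$'s vertices on one side of $Q'$ collapse into a single clique $Q''$, producing two consecutive $H$-vertices in $Q''$; this is exactly the step reused verbatim from \cref{obs:inGintZ}, and the only genuine change is swapping $F$ for $S$ and $G$ for $G'$ while invoking the $G'$-part of \cref{ob1} and the fact that $G'-T$ is a \phcag\ by construction of \cref{rr3}. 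Everything else is a routine translation of the previous observation's proof.
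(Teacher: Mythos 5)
Your proof follows the paper's route exactly: the paper simply says the proof of \cref{obs:inGintZ1} is ``similar to the proof for \cref{obs:inGintZ}'', and you correctly transplant that argument (suppose $H$ hits $Q$ but skips an adjacent clique $Q'$; then some nearby clique contains two consecutive $H$-vertices, forcing $H$ to be tiny; combining that tiny $H$ with a vertex two cliques away gives a small obstruction inside $G'-T$, contradicting that $G'-T$ is a \phcag; hence $H$ hits every clique of the stretch). That matches the paper's intent, and the preliminary step invoking \cref{ob1} to rule out a small $\monad$ is a harmless addition.

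There is, however, a concrete arithmetic gap in your justification of the size bound $|V(H)|\ge 20k$. You write that $\mathcal{D}_L$ contributes $\sim 5k$ vertices and that ``counting also $\mathcal{D}_R$ and the intermediate stretch'' gets you to $20k$. That count is correct in $G$, where the intermediate stretch $F$ has $10k-1$ cliques, giving $|\mathcal{D}_L\cup F\cup\mathcal{D}_R|=20k+1$. But in $G'$ the entire $F$ has been collapsed to the single clique $S$ by \cref{rr3}, so $\mathcal{D}_L\cup S\cup\mathcal{D}_R$ has only $5k{+}1+1+5k{+}1=10k{+}3$ cliques, and the ``one vertex per clique'' count gives $|V(H)|\gtrsim 10k$, not $20k$. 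If the $20k$ bound is actually needed downstream, it must come from a different source (e.g.\ the hole of a $\monad$ must wrap the whole circle and therefore meets \emph{all} cliques of the component $\C{C}$, whose count in $G'$ is still $\gg 20k$ since $|\cdc|>300|T|k(k+1)$ and only $10k-2$ cliques were removed), not from counting $\mathcal{D}_L\cup S\cup\mathcal{D}_R$ as you do. You should either weaken the bound to $\sim 10k$ with your current argument or replace the final step with the ``meets all cliques of the component'' reasoning.
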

\begin{proof}
 Proof is similar to the proof for  \cref{obs:inGintZ}.\qed
\end{proof}

\begin{lemma}\label{red:BoundIngComp}
 Reduction Rule \ref{rr3} is safe.
\end{lemma}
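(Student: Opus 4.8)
The plan is to prove that Reduction Rule~\ref{rr3} is safe by showing $(G,k)$ is a \yes-instance if and only if $(G',k)$ is a \yes-instance, where $G'$ is obtained by replacing the ``middle'' cliques $F$ with a fresh clique $S$ of size $\tau$ (the minimum $(Q_{20k}$-$Q_{30k})$ cut in $\cdc$). The key structural fact driving both directions is that, by \cref{obs:inGintZ} and \cref{obs:inGintZ1}, any obstruction meeting the ``buffer zone'' $\mathcal{D}_L\cup F\cup\mathcal{D}_R$ (resp. $\mathcal{D}_L\cup S\cup\mathcal{D}_R$) must be a \monad whose \hole threads through \emph{all} the cliques of that zone in order, and hence has size at least $20k>12$; in particular it is a \emph{large} obstruction. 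Together with the \emph{nice modulator} properties, such a large obstruction that is not covered by $\calW$ must intersect $T$ in at least six vertices, and since these vertices are at distance $\ge 15k$ from the buffer zone, the \hole must actually leave $\mathcal{C}$ entirely and travel around the circle through $T$ — this is the picture we exploit.

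First I would handle the forward direction. Suppose $X$ is a solution of size at most $k$ for $(G,k)$, so $G-X$ is a \phcag. I want to build a solution $X'$ of size at most $k$ for $G'$. The natural choice is to take $X' = (X\setminus F)\cup X_S$ where $X_S$ is the restriction/projection of $X$'s ``cutting behaviour'' into $S$; concretely, since $|S|=\tau\le |Q_{20k}|, |Q_{30k}|$ and $S$ is complete to both $Q_{20k}$ and $Q_{30k}$, I can define $X'\cap S$ by picking $|X\cap F|$ vertices of $S$ (if $|X\cap F|\ge \tau$ we can even take all of $S$, which disconnects $Q_{20k}$ from $Q_{30k}$ in $G'-T$ just as deleting a $(Q_{20k}$-$Q_{30k})$ cut would). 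Actually the cleanest route: if $X$ disconnects $Q_{20k}$ from $Q_{30k}$ inside $\cdc$, then $|X\cap F|\ge \tau$ (it contains a $Q_{20k}$-$Q_{30k}$ cut), so we can afford to put all of $S$ into $X'$, killing every \monad-hole through the buffer in $G'$; if $X$ does not disconnect them, then no large obstruction of $G-X$ passes through $F$ at all (its \hole would need to cross $F$), and one argues symmetrically that $G'-((X\setminus F))$ has no new obstruction — any obstruction of $G'-X'$ either avoids $S\cup\mathcal{D}_L\cup\mathcal{D}_R$, in which case it lies in a region isomorphic to a subgraph of $G-X$, or it is a large \monad threading $S$, which we rule out by lifting it to a corresponding \monad threading $F$ in $G-X$ using the surviving connectivity.

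For the reverse direction I would argue completely symmetrically: given a solution $X'$ for $G'$, set $X=(X'\setminus S)\cup X_F$ where $X_F\subseteq F$ is chosen either as all of $F$ (when $|X'\cap S|=\tau$, i.e. $X'$ cuts $S$) or, when $X'$ does not cut $S$, as a minimum $Q_{20k}$-$Q_{30k}$ cut is not needed and $X_F=\emptyset$ suffices; again any obstruction in $G-X$ is either localized away from the buffer (hence a copy of an obstruction in $G'-X'$) or a large \monad through $F$, which transfers to one through $S$. The bookkeeping that makes this work is: $|F|\ge \tau$ since $F$ itself is a $Q_{20k}$-$Q_{30k}$ separator of size $\ge\tau$ by minimality, and the ``thread all cliques'' observations guarantee that a \hole cannot partially penetrate the buffer, so we only ever deal with holes that cross it completely, which can be rerouted between $F$ and $S$ because both sit between identical complete attachments to $Q_{20k}$ and $Q_{30k}$ and contain a $Q_{20k}$-$Q_{30k}$ cut of the right size.

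The main obstacle I expect is the case analysis for obstructions in $G'-X'$ (and $G-X$) that are \emph{large} \monads whose \centre or whose \hole uses vertices of $T$: one must verify that replacing the segment of the \hole lying in $F$ by a segment in $S$ (or vice versa) yields a genuine induced cycle — i.e. no unwanted chords appear between $S$ and the $T$-part or the $\mathcal{D}_L,\mathcal{D}_R$-parts of the \hole — and that the \centre remains non-adjacent to the rewired \hole. This is where the facts that $S$ is complete \emph{only} to $Q_{20k}$ and $Q_{30k}$, that $\mathcal{D}_L,\mathcal{D}_R$ survive untouched, and that $\dist\ge 15k$ kills all short interactions, all get used; I would isolate this into a sub-claim (``a \monad of $G'-X'$ through $S$ induces a \monad of $G-X$ through $F$ of the same or larger size, and conversely'') and prove it by the minimum-cut/Menger argument, after which the two implications follow by the size-at-most-$k$ accounting sketched above.
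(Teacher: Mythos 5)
Your high-level strategy matches the paper's: translate a solution between $G$ and $G'$, do budget accounting via the min-cut $\tau$, and reroute monad-holes across the buffer using \cref{obs:inGintZ} and \cref{obs:inGintZ1}. However, there are several genuine gaps.

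First, and most importantly, you never address the case where the ``new'' vertex (in $S$ for the forward direction, in $F$ for the backward direction) is the \centre of the offending monad rather than a vertex of its \hole. Your sub-claim is phrased only about monads ``threading $S$/$F$'', i.e.\ with a hole crossing the buffer. But if $v\in S$ is a \centre, then by \cref{obs:inGintZ1} the hole $H$ is disjoint from the entire buffer $\mathcal{D}_L\cup S\cup\mathcal{D}_R$, hence already lives in $G-X$; the paper then argues there are at least $2k$ vertices in the middle cliques that are each non-adjacent to $H$ and so each yields a monad with \hole $H$, forcing $X$ to intersect $H$ and yielding a contradiction. This abundance-of-alternative-centres argument is the workhorse of the paper's proof and is missing from your plan.

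Second, your dichotomy is not quite right. You claim ``if $X$ disconnects $Q_{20k}$ from $Q_{30k}$, then $|X\cap F|\geq\tau$.'' This is false: the minimum cut may sit entirely inside $Q_{20k}$ or $Q_{30k}$ (the paper explicitly flags that the cut ``may include the vertices from $Q_{20k}$ and $Q_{30k}$''), so $X$ can disconnect while $X\cap F=\emptyset$, and then adding all of $S$ would blow the budget. The paper instead proves, for a \emph{minimum} solution $X$, that $|X\cap F|\in\{0,\tau\}$ — using minimality to get, for each $v\in X\cap F$, a monad privately hit by $v$, then rerouting via one of the $\tau$ vertex-disjoint $Q_{20k}$--$Q_{30k}$ paths not hit by $X$, and using a min-cut exchange to exclude $|X\cap F|>\tau$. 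You need this (or something equivalent) to make the budget argument go through. Third, you explicitly defer the rerouting/chordlessness sub-claim as ``the main obstacle''; that sub-claim is the substance of the lemma and must actually be proved, not postponed. Finally, the opening picture (``the hole must leave $\mathcal{C}$ and travel through $T$'', ``$\geq 6$ vertices of $T$ since not covered by $\calW$'') is not how the paper argues and is not needed here — the lemma's proof never invokes the redundancy or $\calW$ properties, and the monad's only contact with $T$ could be its centre.
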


\begin{proof}
	We show that $(G,k)$ is a \yes-instance of \phcad if and only if $(G',k)$ is a \yes-instance of \phcad.
	
	\noindent
	$ (\Rightarrow) $ Suppose $(G,k)$ is a \yes-instance and let $X$ be a minimum size solution. Recall that ${\cal D}_L$ and ${\cal D}_R$ contain $(5k+1)$ cliques each and $\tau$ is the size of minimum $(Q_{20k}-Q_{30k})$ cut. We note that this cut may include the vertices from $Q_{20k}$ and $Q_{30k}$. Let $W$ be the set of vertices from all the cliques in $Q_{20k+1}\cup \ldots \cup Q_{30k-1}$.

	\begin{claim}
		Either $X\cap W=\emptyset$ or $|X\cap W|=\tau$.
	\end{claim}
	\begin{proof}
		Suppose that $X\cap W\neq\emptyset$ and $|X\cap W|< \tau$. Let $v\in X\cap W$. As $X$ is also a minimal solution, for every vertex $u\in X$, there exists an obstruction that does not contain any other vertex from $X$. This implies that there exists a \monad  containing $v$ and not containing any vertex from $X\setminus \{v\}$. We first show that $v$ can not be  \centre  of such a \monad. If it is a \centre of \monad $H\cup \{v\}$, then from \cref{obs:inGintZ1}, $H\cap Z=\emptyset$. But in $Z$ we have $k+1$ many cliques and each vertex  of them can create an obstruction (\monad) with $H$ as an \mhole. Hence $X\cap H$ can not be empty, which is a contradiction to the fact that $X$ was a minimal solution. So the obstruction for $v$ must contain $v$ as one of its \hole vertices i.e. $v\in V(H)$. Let the \centre of this obstruction be $z$. Notice that $X\cap H=\{v\}$. In the \hole $H$, let $v_1$ and $v_2$ be two vertices from $Q_{20k}$ and $Q_{30k}$ respectively with no other vertex from $Q_{20k}\cup Q_{30k}$ in between. Between $v_1$ and $v_2$, we have $\tau$ many vertex disjoint (induced) paths, say $\cal P$ in $W$. Since  $|X \cap W|<\tau$, it does not intersect at least one of the paths from $\cal P$. But then replacing the segment of $H$ between $v_1$ and $v_2$ with the non-intersected path, we get a new \hole $H'$ where $X\cap H'=\emptyset$. Adding $z$ as \centre, we get an obstruction $H'\cup{z}$ which is contained in $G-X$, a contradiction. Hence $|X\cap W|\geq \tau$. If $|X\cap W| > \tau$, we construct a new set $X'=(X\setminus W)\cup X''$, where $X''$ is a $\tau$ sized min-cut between $Q_{20k}$ and $Q_{30k}$, We claim that $G-X'$ is also a \phcag. If not, then there is an obstruction in $G-X'$. But such an obstruction must necessarily contain a vertex from $X\cap W$. Any obstruction containing a vertex $v\in X\cap W$ is a \emph{large} obstruction. If this obstruction contains $v$ as \centre, then by arguments similar to the ones made just above, we can say that $X\setminus W $ also intersects this obstruction, and hence so does $X'$. On the other hand, if the obstruction contains $v$ as a vertex in its \hole, then such an obstruction is hit by the min-cut between $Q_{20k}$ and $Q_{30k}$ in $X'$. But $X'$ is a strictly smaller solution than $X$, which is a contradiction.
		Hence the claim is proved.
	\end{proof}

	Using the above claim we consider the following cases: Recall that $X$ is a solution to the \yes-instance $(G,k)$.
	\smallskip 
	
	\noindent \textbf{Case 1: $X\cap W=\emptyset$}\\ 
	Here we claim that $X$ is also a solution to $(G',k)$. Suppose it is not true. Then there is an obstruction $\obs$  in $G'-X$.  Now $V(\obs)\cap S\neq \emptyset$, otherwise we will have the same obstruction in $G-X$. Hence  this obstruction must be a \emph{large} obstruction (from arguments similar to the ones made in \cref{obs:inGintZ}). Let $v\in (\obs\cap S)$. If $v$ is a \centre in $\obs$, then $H$ ($H$ is the \hole of $\obs$) is contained in $G-(Z\cup X)$. But any vertex from any clique between $Q_{24k}$ and  $Q_{26k}$ with $H$ will form an obstruction in $G-X$. Notice that all these (at least) $2k$ vertices, each form an obstruction with $H$. Hence $X$ must intersect $H$, which is a contradiction to the fact that $H\cup \{v\}$ does not contain any vertex from $X$. For the other case, when $v$ is part of the \hole in $\obs$, let $v_1$ and $v_2$ be two vertices of $H$ from $Q_{20k}$ and $Q_{30k}$ respectively with no other vertex from $Q_{20k}\cup Q_{30k}$ in between. Since $X$ does not intersect any vertex from $W$, replacing the segment of $H\cap S$ with a path between $v_1$ and $v_2$, we get a new \hole $H'$, where $X\cap H'=\emptyset$. Adding $z$ as a \centre we get an obstruction $H'\cup{z}$ which is contained in $G-X$, a contradiction. 
	
	\smallskip 
	
	\noindent \textbf{Case 2: $|X\cap W|=\tau$}\\   
	Here we claim that $X'=(X\setminus W)\cup S$ is also a solution to $(G',k)$. Suppose it is not true. Then there is an obstruction $\obs$ in $G'-X'$. But then  $V(\obs)\cap W\neq \emptyset$. And this obstruction must be a \emph{large} obstruction  (from arguments similar to the ones made in \cref{obs:inGintZ}. Let $v \in (\obs\cap W)$. The vertex $v$ must be a \centre in $\obs$, since $S$ (and hence $X'$) intersects all paths between $Q_{20k}$ and $Q_{30k}$. But then $H$ ($H$ is the \hole of $\obs$) is contained in $G-(Z\cup X)$. And there are at least $k+1$ vertices in $Z$ each of which an obstruction with $H$ and at least one of them, say $u$ is not contained in $X$. Hence $u$ along with $H$ forms a \monad that is contained  in $G-X$, a contradiction. Hence $X'$ is also a solution to $(G',k)$.  This completes the proof in the forward direction.
	
	\medskip 
	
	\noindent
	$ (\Leftarrow) $  Suppose $(G', k)$ is a \yes-instance of \phcad where $Y$ is a minimum  size solution. Let $Z'=\mathcal{D}_L \cup S \cup \mathcal{D}_R$.
	\begin{claim}
		Either $Y\cap S=\emptyset$ or $|Y\cap S|=\tau$.
	\end{claim}
	
	\begin{proof}
		Suppose that $Y\cap S\neq\emptyset$ and $|Y\cap S|< \tau$. Let $v\in Y\cap S$. As $Y$ is a minimal solution, for every vertex $u\in Y$, there exists an obstruction that does not contain any vertex from $Y\setminus \{u\}$. This implies that there exists a \monad  containing $v$ and not containing any vertex in $Y\setminus \{v\}$. We first show that $v$ can not be a \centre of such an obstruction. If it is a \centre of an obstruction induced by $H\cup \{v\}$, then by \cref{ob1} $H\cap Z'=\emptyset$. But then in $Z'$ we have $k+1$ many distinct cliques and each vertex from these cliques can form obstruction with $H$ as a \centre. Hence $Y\cap H$ can not be empty, which is a contradiction to the fact that $Y$ was a minimal solution. So the obstruction for $v$ must contain $v$ as one of its \hole vertices i.e. $v\in H$. Let the \centre of this obstruction be $z$. Notice that $Y\cap H=\{v\}$. In the \hole $H$, let $v_1$ and $v_2$ be two vertices from $Q_{20k}$ and $Q_{30k}$ respectively with no other vertex from $Q_{20k}\cup Q_{30k}$ in between. Between $v_1$ and $v_2$ we have $\tau$ many vertex disjoint (induced) paths  ($\cal P$) (each path consists of exactly one vertex from $S$). $Y$ does not intersect at least one of these paths from $\cal P$. But then replacing the segment of $H$ between $v_1$ and $v_2$ with the non-intersected path from $\cal P$, we get a new \hole $H'$ where $Y\cap H'=\emptyset$. Adding $z$ as the \centre, we get an obstruction $H'\cup{z}$ which is contained in $G'-Y$, a contradiction. Hence $Y\cap S= \tau$. Hence the claim is proved.
	\end{proof}
	
	Using the above claim we consider the following cases: Recall that $Y$ is a minimum size solution to the \yes-instance  $(G',k)$.
	
	\smallskip
	
	\noindent \textbf{Case 1: $Y\cap S=\emptyset$}\\ 
	Here we claim that $Y$ is also a solution to $(G,k)$. Suppose it is not true. Then there is an obstruction $\obs$ in $G-Y$.  $\obs\cap W\neq \emptyset$, otherwise we will have the same obstruction in $G'-Y$. Hence  this obstruction must be a \emph{large} obstruction (from arguments similar to the ones made in \cref{obs:inGintZ1}). Let $v\in (\obs\cap W)$. If $v$ is a \centre in $\obs$, then $H$ ($H$ is the \hole of $\obs$) is contained in $G-(Z'\cup Y)$. But there are at least $k+1$ vertices from $Z'$ who along with $H$ form obstructions in $G'-Y$. Hence $Y$ must intersect $H$, which contradicts the fact that $H\cup\{v\}$ is an obstruction in $G-Y$ . For the other case when $v$ is part of the \hole in $\obs$, let $v_1$ and $v_2$ be two vertices from $Q_{20k}$ and $Q_{30k}$ respectively with no other vertex from $Q_{20k}\cup Q_{30k}$ in between in $H$. Since $Y$ does not contain any vertex from $S$, by replacing the segment of $H$ between $v_1$ and $v_2$ with any vertex of $S$, we get a new \hole $H'$ where $Y\cap H'=\emptyset$. Adding $z$ as the \centre we get an obstruction $H'\cup{z}$ which is contained in $G'-Y$, which is a contradiction.
	
	\smallskip  
	
	\noindent \textbf{Case 2: $|Y\cap S|=\tau$}\\   
	Here we claim that $Y'=(Y\setminus S)\cup Y''$, where $Y''$ is a $\tau$ sized min-cut between $Q_{20k}$ and $Q_{30k}$, is a solution to $(G,k)$. Suppose it is not true. Then there is an obstruction $\obs$ in $G-Y'$. But  $\obs\cap W\neq \emptyset$ and $\obs$ must be a \emph{large} obstruction (from arguments similar to the ones made in \cref{obs:inGintZ1}). Let $v \in (\obs\cap W)$. Since there is no path from $Q_{20k}$ to $Q_{30k}$ in $G-Y'$, any obstruction must contain $v$ as its \centre only. But then $H$ ($H$ is the \hole of $\obs$)  is contained in $G-(Z\cup Y')$ and hence also in $G'-(Z'\cup Y)$. And there are $k+1$ many vertices in $Z'$ who can form obstructions with the \hole $H$. At least one of them is not contained in $Y$. This vertex along with $H$ forms an obstruction that is contained  in $G'-Y$, a contradiction. Hence $Y$ is also a solution to $(G,k)$.
	This completes the proof in the reverse direction.
	
	\noindent This concludes the proof for the lemma.\qed
\end{proof}

 With reduction rule \ref{rr3}, we obtain the following result. 
\begin{lemma}
\label{lem:final-connectedsize}
Given an instance $(G, k)$ of \phcad and a nice modulator  $T \subseteq V(G)$ of size $k^{\OO(1)}$, in polynomial time, we can construct an equivalent instance $(G', k)$ such that, $T \subseteq V(G')$ is a nice modulator for $G'$ and for each connected component $\C{C}$ of $G' - T$, the number of cliques in $\cdc$  is at most $300\cdot |T|\cdot k(k+1) = \OO(k^2 \cdot \card{T})$. 
\end{lemma}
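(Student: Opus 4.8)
The plan is to combine the clique-size bound from \cref{lem-clique_size} with the connected-component analysis driven by reduction rules \ref{rr1}, \ref{rr2}, and \ref{rr3}. First I would invoke \cref{lem-clique_size} to assume that every clique in the nice clique partition $\cd$ of $G-T$ has size $k^{\OO(1)}$; since the total kernel size is the sum over components of (number of cliques) times (max clique size), it suffices to bound, for each connected component $\C{C}$ of $G-T$, the number of cliques of $\cdc$ by $\OO(k^2 \cdot \card{T})$.

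Next I would exhaustively apply reduction rules \ref{rr1} and \ref{rr2}; their safety is immediate (in rule \ref{rr1}, $v$ lies in $k+1$ claws meeting only at $v$, so $v$ is in every solution of size $\le k$; rule \ref{rr2}'s safety is \cref{red:Boundnumber}), and each application strictly decreases $k$, so the process terminates in polynomial time. Once neither is applicable, every $v\in T$ has neighbors in at most $6(k+1)$ distinct cliques $Q_i$ of $\cdc$, which gives \cref{lem:seventkplus1}: at most $6(k+1)\card{T}$ cliques of $\cdc$ meet $N(T)$. Now suppose toward a contradiction (or rather, as the trigger for the final reduction) that $\cdc$ has more than $300\card{T}k(k+1)$ cliques. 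By pigeonhole and \cref{lem:seventkplus1}, among any $300\card{T}k(k+1)/(6(k+1)\card{T}) = 50k$-or-larger gap there must be $50k$ consecutive cliques disjoint from $N(T)$; I would pick such a run $Q_1,\dots,Q_{50k}$, define $\mathcal{D}_L,\mathcal{D}_R,F$ as in the text, let $\tau$ be the minimum $(Q_{20k}\text{--}Q_{30k})$ cut, and apply reduction rule \ref{rr3}. Its safety is exactly \cref{red:BoundIngComp}, and the net effect is to replace the $\ge 10k-1$ cliques of $F$ by a single clique $S$ of size $\tau \le k^{\OO(1)}$ (as $\tau$ is at most the size of one clique), hence to strictly decrease the number of cliques in the component. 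Iterating rule \ref{rr3} until it no longer applies leaves every component of $G'-T$ with at most $300\card{T}k(k+1) = \OO(k^2\cdot\card{T})$ cliques, as claimed; throughout, $T$ remains a nice modulator since $G'-T$ stays a \phcag\ by construction (noted after \cref{rr3}) and the three bullet properties of a nice modulator are unaffected by deleting or re-wiring vertices outside $T$.

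The main obstacle in writing this cleanly is the termination/polynomial-time bookkeeping: one must check that the interleaving of rules \ref{rr1}, \ref{rr2}, \ref{rr3} terminates in polynomially many rounds — rules \ref{rr1} and \ref{rr2} lower $k$, and each application of rule \ref{rr3} removes at least $10k-1$ cliques while adding only one, so the total clique count is a non-negative integer that strictly decreases, bounding the number of rule-\ref{rr3} applications by $n$. A second point needing care is that after applying rule \ref{rr3} the clique-size hypothesis must still hold: the new clique $S$ has size $\tau$, and $\tau \le \min(\card{Q_{20k}},\card{Q_{30k}}) = k^{\OO(1)}$, so \cref{lem-clique_size} is not violated and we need not re-run the clique-reduction phase. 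Finally I would remark that once the number of cliques per component is $\OO(k^2\card{T})$ and each clique has size $k^{\OO(1)}$, each component has size $k^{\OO(1)}$ (using $\card{T}\le\OO(k^{12})$), which is the statement of \cref{lem:final-connectedsize}; bounding the number of components is deferred to the next phase.
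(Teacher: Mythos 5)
Your proposal is correct and follows essentially the same route as the paper: the paper does not give a standalone proof of this lemma but simply states it after Reduction Rules \ref{rr1}, \ref{rr2}, \ref{rr3} and their safety arguments (Lemma \ref{red:Boundnumber}, Lemma \ref{lem:seventkplus1}, Observations \ref{ob1}--\ref{obs:inGintZ1}, Lemma \ref{red:BoundIngComp}), with the tacit understanding that exhaustive application of those rules, driven by the pigeonhole argument that more than $300\card{T}k(k+1)$ cliques forces a run of $50k$ cliques disjoint from $N(T)$, yields the bound. The extra bookkeeping you supply — that rules \ref{rr1} and \ref{rr2} each drop $k$ so terminate, that rule \ref{rr3} strictly decreases the clique count (and hence applies at most $n$ times), that $\tau$ is at most the size of any single clique $Q_i$ with $20k\le i\le 30k$ (each such clique is itself a $Q_{20k}$--$Q_{30k}$ cut) so the clique-size bound from Lemma \ref{lem-clique_size} is preserved, and that $T$ stays a nice modulator because the surgery happens at distance $\ge 15k$ from $T$ so small obstructions and the coverage-by-$\calW$/redundancy bookkeeping are unchanged — is all correct and makes explicit what the paper leaves implicit. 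It is worth noting, though the paper shares this wrinkle, that the parameter technically decreases under rules \ref{rr1} and \ref{rr2} while the lemma's statement writes the output instance as $(G',k)$; and the pigeonhole constant $300$ is a little loose (the careful count gives $\OO(k^2\card{T})$ but with a slightly larger constant), again mirroring the paper.
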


\section{Bounding the Number of Connected Components }\label{sec-number_component}

Until now we have assumed that $G - T$ is connected. Further, in \cref{sec-size_component}, we showed that the size of any connected component is upper bounded by $k^{\OO(1)}$. In this section, we show that the number of connected components in $G - T$ can be upper bounded by $k^{\OO(1)}$. This together with the fact that $|T| \leq k^{\OO(1)}$, results in a polynomial kernel for \phcad.

Here we  bound the number of connected components with an argument similar to the one using which we bounded the neighborhood of the modulator. We make use of the claw obstruction to get the desired bound. Notice that if any vertex $v$ in $T$ has neighbors in three  different components in $G - T$, then we get a claw. 

\begin{reduction rule} \label{rule 3k}
Let $v$ be a vertex in $T$ such that $v$ has neighbors in at least $3(k+1)$  different components in $G - T$ then delete
$v$ from $G$, and reduce $k$ by $1$. The resultant instance is $(G - v, k - 1)$.
\end{reduction rule}

The correctness of the above reduction rule is easy to see as every solution to $(G, k)$ of \phcad~must contain $v$. From now onwards we assume that reduction rule \ref{rule 3k} is not applicable. And this leads to the following lemma.

\begin{lemma}\label{lem-nbr_components}
$T$ can have neighbors in at most $3(k+1)|T|$ many different components. 
\end{lemma}


Now we bound the number of connected components that have no neighbor in $T$. Towards that, we classify all such connected components into two classes: interval connected components (which admit an interval representation) and non-interval connected components. Here non-interval connected components cover the entire circle whereas others partially cover the underlying circle.

\begin{clm}\label{claim:mostone}
The number of non-interval connected components in $G-T$ is at most one. 
\end{clm}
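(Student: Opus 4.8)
The number of non-interval connected components in $G-T$ is at most one.

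The plan is to argue by contradiction: suppose there are two distinct connected components $\mathcal{C}$ and $\mathcal{C}'$ of $G-T$, each of which is non-interval, i.e. each covers the entire underlying circle in every proper Helly arc representation. First I would recall the structural fact about proper circular-arc graphs that a connected proper Helly circular-arc graph whose arcs cover the whole circle must contain an induced cycle $C_\ell$ for some $\ell \ge 4$ (a hole); equivalently, a connected proper interval graph is exactly a connected proper Helly circular-arc graph that is \emph{not} covering the circle, and the obstruction to being an interval graph inside the proper-Helly world is precisely the presence of such a hole. So each of $\mathcal{C}$ and $\mathcal{C}'$ contains an induced cycle $H \subseteq \mathcal{C}$ and $H' \subseteq \mathcal{C}'$ with $|V(H)|, |V(H')| \ge 4$.

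Next I would observe that $H$ and $H'$ live in different connected components of $G-T$, so there is no edge between $V(H)$ and $V(H')$ in $G$. Consequently $V(H) \cup V(H')$ induces the disjoint union of two holes, call it $2C_{\ge 4}$. The key step is then to exhibit a \emph{small} obstruction inside $G-T$ using these two vertex-disjoint holes, contradicting the fact that $G-T$ is a \phcag (which by the forbidden-subgraph characterization, \cref{fig:obs}, contains none of the obstructions). The natural candidate is to take a $C_4$-like structure: pick any edge $ab$ of $H$ and any edge $cd$ of $H'$; then $\{a,b,c,d\}$ induces a graph that is two disjoint edges — i.e. $2K_2$ — which is the complement of $C_4$; but I should instead look for one of the listed small obstructions. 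Since a $C_4$ together with a vertex nonadjacent to all of it is $C_4^* = \monad$ of size $4$, and here we genuinely have an induced $C_4$ sitting in $H$ (or a longer hole containing an induced $C_4$? no — a $C_\ell$ with $\ell \ge 5$ has no induced $C_4$), the cleanest route is: a hole $H$ of length $\ell$ has an induced path $P$ on $\ell-1 \ge 3$ vertices; pick $4$ consecutive vertices $v_1v_2v_3v_4$ of $H$ (using $\ell \ge 4$) — if $\ell = 4$ this is $C_4$, if $\ell \ge 5$ this is an induced $P_4$. Then together with any vertex $w \in V(H')$, which is nonadjacent to all four, $\{v_1,v_2,v_3,v_4,w\}$ induces either $C_4^*$ (a \monad of size $4$, a small obstruction) when $\ell=4$, or $P_4$ plus an isolated vertex when $\ell \ge 5$ — the latter is not on the list, so I need $C_4$ specifically, or I should use the $P_4$ inside $H'$ too: take $4$ consecutive vertices $u_1u_2u_3u_4$ of $H'$ and note $\{v_1,v_3\} \cup \{u_1,u_3\}$ (two non-adjacent pairs, no cross edges) — that is $2K_2 = \overline{C_4}$, not listed either. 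The right object is the claw or, more robustly, $\overline{C_6}$: take $3$ consecutive vertices $v_1 v_2 v_3$ of $H$ with $v_1 v_3 \notin E$ and $3$ consecutive vertices $u_1 u_2 u_3$ of $H'$ with $u_1 u_3 \notin E$; then $\{v_1,v_2,v_3,u_1,u_2,u_3\}$ induces $\overline{C_6}$ precisely (two induced $P_3$'s with no edges between them is the complement of $C_6$), which \emph{is} on the forbidden list — a small obstruction. That gives the contradiction.

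So the main obstacle — and the step deserving the most care — is matching the disjoint-holes structure to exactly one of the seven named small obstructions; I expect $\overline{C_6}$ to be the clean fit (the complement of $C_6$ is $2P_3$, the disjoint union of two paths on three vertices), using that any hole of length $\ge 4$ contains an induced $P_3$ and the two holes sit in different components of $G-T$ so have no edges between them. Once that is set up, the contradiction is immediate: $G-T$ is a \phcag and hence contains no $\overline{C_6}$, so at most one component of $G-T$ can fail to be an interval graph, i.e. can be non-interval. I would also need the preliminary structural remark that a connected \phcag that is not an interval graph necessarily contains a hole (induced cycle of length $\ge 4$) — this is standard, following from the fact that proper interval graphs are exactly the chordal proper Helly circular-arc graphs, or directly from the arc representation: if no arc contains another and the arcs do not cover the circle, cutting the circle at an uncovered point yields an interval representation; so a non-interval (circle-covering) connected \phcag has arcs covering the circle, and a minimal covering subfamily of arcs, none containing another, realizes an induced hole.
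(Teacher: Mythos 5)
Your overall strategy (deriving a forbidden induced subgraph of $G-T$ from two disjoint holes sitting in different components) differs from the paper's, which argues directly on the arc representation: in any proper Helly circular-arc representation of $G-T$, the arcs of one non-interval component already cover the whole circle, so a second component's arcs would necessarily intersect them. Your route could in principle work, but the step you finally commit to is wrong: $\overline{C_6}$ is \emph{not} the disjoint union of two $P_3$'s. The complement of $C_6$ is the triangular prism $K_3\,\square\,K_2$ (it has $9$ edges and contains triangles), whereas two induced $P_3$'s with no edges between them form $2P_3$, a disjoint union of paths --- which is itself a proper interval graph and hence contains none of the listed obstructions. So the six vertices $\{v_1,v_2,v_3,u_1,u_2,u_3\}$ you exhibit induce no forbidden subgraph, and the intended contradiction evaporates.

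The fix is precisely the option you considered and then discarded for the wrong reason. You rejected $C_4^*$ because a hole of length $\ell\ge 5$ contains no induced $C_4$, but the obstruction family includes $C_\ell^*$ (a \monad) for \emph{every} $\ell\ge 4$: take the entire hole $H$ in one component, of whatever length $\ell\ge 4$, together with a single vertex $w$ of the other component; since $w$ lies in a different component of $G-T$ it is nonadjacent to all of $V(H)$, so $G[V(H)\cup\{w\}]$ is exactly $C_\ell^*$, an obstruction inside $G-T$ --- contradicting that $G-T$ is a \phcag. (Your preliminary claim that a connected non-interval \phcag contains a hole also needs a citation or a short argument, e.g.\ via the forbidden-subgraph characterization of proper interval graphs, but that part is standard; with the \monad fix in place your argument becomes a valid, purely obstruction-based alternative to the paper's representation-based proof.)
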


\begin{proof}
	For contradiction suppose there are at least two non-interval connected components $C_1, C_2$ in $G-T$. Notice  that in any proper Helly circular-arc representation $\sigma$ of $G-T$, the circular-arcs corresponding to all the vertices in $C_1$ together cover the entire circle in the representation. But then there is no other connected component in $G-T$ that can admit arc representation $\sigma$. Hence $C_2$ can not have any circular-arc representation in $\sigma$, a contradiction.\qed
\end{proof}


\begin{reduction rule}\label{rule no nbr}
If there are more than $(k+1)$ interval connected components in $G - T$ that have no neighbor in $T$, delete all but $(k+1)$ components. 
\end{reduction rule}

\begin{lemma}\label{lem:rule7}
	Reduction Rule \ref{rule no nbr} is safe.
		\end{lemma}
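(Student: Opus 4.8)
The plan is to show that if $G-T$ has more than $k+1$ interval connected components with no neighbor in $T$, then deleting all but $k+1$ of them yields an equivalent instance. Write $G'$ for the reduced graph. The forward direction is immediate: $G'$ is an induced subgraph of $G$, so any solution of size at most $k$ for $(G,k)$ restricts to a solution for $(G',k)$. The work is in the reverse direction, so suppose $(G',k)$ is a \yes-instance with solution $Y$, $|Y|\le k$, and let $D_1,\dots,D_m$ (with $m>k+1$) be the interval connected components of $G-T$ having no neighbor in $T$, of which $D_1,\dots,D_{k+1}$ survive in $G'$. I claim $Y$ is also a solution for $(G,k)$.

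First I would establish the structural key point: each $D_i$ is a full connected component of $G-T$ and, since $D_i$ has no neighbor in $T$, it is a full connected component of all of $G$ as well; moreover each $D_i$ is an interval graph, hence a \phcag\ with no obstruction inside it. Since $Y$ has size at most $k<m$, at least one surviving component, say $D_j$ with $j\le k+1$, satisfies $Y\cap V(D_j)=\emptyset$. Now suppose for contradiction that $G-Y$ contains an obstruction $\obs$. Because the deleted components $D_{k+2},\dots,D_m$ are disconnected from the rest of $G$ (no edges to $T$ and none to other components), any obstruction that uses a vertex of some deleted $D_i$ lies entirely within $D_i$ — impossible, as $D_i$ is an interval graph and thus obstruction-free. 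Hence $V(\obs)$ avoids all deleted components, so $\obs$ is an induced subgraph of $G'-Y$, contradicting that $Y$ is a solution for $(G',k)$. Therefore $G-Y$ is a \phcag\ and $(G,k)$ is a \yes-instance.

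For the circular-arc representation subtlety I should double-check one point, which I expect to be the main (though minor) obstacle: I must make sure the argument does not secretly need the surviving part to admit a \emph{combined} representation — but it does not, since we only argue about obstructions, not representations, and obstruction-freeness is preserved under taking induced subgraphs in the trivial direction and under the fact that each deleted component is an isolated interval-graph component. One should also note that Claim~\ref{claim:mostone} guarantees the non-interval component (if any) is not among the $D_i$'s, so restricting the deletion to interval components is exactly what keeps the structure of the remaining graph intact; deleting such trivial interval components cannot create new obstructions because it only removes whole components. This completes the proof. \qed
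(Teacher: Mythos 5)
Your argument takes a genuinely different route from the paper's, but it contains a gap. The paper's backward direction is constructive: it starts from a proper Helly circular-arc representation of $G'-X$, finds a surviving interval component $C$ disjoint from the solution, shrinks its arcs, and places the deleted components in the freed-up space; it never reasons about obstructions at all. You instead invoke the forbidden-induced-subgraph characterization and try to show that $G-Y$ contains no obstruction. That route is legitimate, but the pivotal step as written is false.

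You claim that ``any obstruction that uses a vertex of some deleted $D_i$ lies entirely within $D_i$'' because the $D_i$ are whole connected components of $G$. This is true for the small obstructions (claw, net, tent, $W_4$, $W_5$, $\overline{C_6}$), all of which are connected, but it is false for the \monad $C^*_\ell$: by definition a \monad is a cycle (its \mhole) together with an isolated \centre vertex, so it is a \emph{disconnected} graph. Hence $G-Y$ could a priori contain a \monad whose \centre lies in a deleted $D_i$ while its \mhole lies in $G'-Y$; such an obstruction is neither contained in $D_i$ nor an induced subgraph of $G'-Y$, and your dichotomy simply does not catch it. The fix uses exactly the component $D_j$ you introduce and then never invoke: if such a \monad existed, its \mhole $H$ would be a hole lying in $G'-Y$ and vertex-disjoint from the chordal interval component $D_j$ (a hole cannot sit inside an interval graph), and since $D_j$ is a whole component of $G'$ with $V(D_j)\cap Y=\emptyset$, any $u\in V(D_j)$ is non-adjacent to every vertex of $H$, so $H\cup\{u\}$ is a \monad already present in $G'-Y$, a contradiction. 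Your closing paragraph worries about the right kind of issue (whether a combined representation is secretly needed) but misses the actual one, the disconnectedness of the \monad.
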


\begin{proof}
	Let $(G',k)$ be the reduced instance. We show that $(G,k)$ is a \yes-instance of \phcad if and only if $(G',k)$ is a \yes-instance. The forward direction is trivial as $G'$ is an induced subgraph of $G$. In the backward direction, let $(G',k)$ be a \yes-instance. Assume that $X$ is a solution for \phcad on $(G', k)$ and $\sigma$ is a proper Helly circular-arc representation of $G'-X$.   As there are $(k+1)$ interval connected components in $G-T$ there always exists a component $C$ in $G-T$ which has no intersection with $X$ and hence all the vertices in $C$ belong to $G'-X$. Now we can always shrink all the arcs corresponding to $C$ in $\sigma$ to half of their length. In the freed-up space, we can accommodate the arcs corresponding to the deleted interval connected components. Hence $(G,k)$ is a \yes-instance.\qed
\end{proof}

From now onwards we assume that reduction rules \ref{rule 3k} and \ref{rule no nbr} are not applicable. Now these two reduction rules  and
\cref{lem-nbr_components} implies the following result:

\begin{lemma} \label{lem-number_components}
 Given an instance $(G, k)$ and a nice modulator $T \subseteq V (G)$ of size $\OO(k^{12})$, in polynomial-time, we can construct an equivalent instance $(G'
, k')$ such that the number of connected component in $G' - T$ is $\OO(k \cdot |T|^2)$.
\end{lemma}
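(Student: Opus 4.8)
The plan is to assemble the final kernelization bound from the three reduction rules and the one structural lemma available at this point, namely Reduction Rule~\ref{rule 3k}, Reduction Rule~\ref{rule no nbr}, Claim~\ref{claim:mostone}, and \cref{lem-nbr_components}. The statement of \cref{lem-number_components} is essentially a summary/bookkeeping lemma, so the proof is a short counting argument rather than a new combinatorial insight. I would partition the connected components of $G-T$ into those that have a neighbor in $T$ and those that do not, and bound each class separately.

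\textbf{Step 1: components with a neighbor in $T$.} After exhaustively applying Reduction Rule~\ref{rule 3k}, every vertex $v\in T$ has neighbors in at most $3(k+1)-1 < 3(k+1)$ distinct connected components of $G-T$. Summing over all $|T|$ vertices of $T$, \cref{lem-nbr_components} gives that the total number of components touched by $N(T)$ is at most $3(k+1)|T|$.

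\textbf{Step 2: components with no neighbor in $T$.} These split, via Claim~\ref{claim:mostone}, into at most one non-interval component plus some interval components. After exhaustively applying Reduction Rule~\ref{rule no nbr}, at most $k+1$ interval components with no neighbor in $T$ remain. Hence the number of components with no neighbor in $T$ is at most $(k+1)+1 = k+2$.

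\textbf{Step 3: combine and observe correctness.} Adding the two bounds, the total number of connected components of $G'-T$ is at most $3(k+1)|T| + k + 2 = \OO(k\cdot |T|)$; since $|T|=\OO(k^{12})\le |T|^2$ trivially gives the stated form $\OO(k\cdot|T|^2)$ (indeed the sharper $\OO(k|T|)$ holds, but the looser bound suffices and matches the lemma statement). Each reduction rule used is known to be safe (Reduction Rules~\ref{rule 3k} and~\ref{rule no nbr} were argued immediately after their statements, and Claim~\ref{claim:mostone} is an unconditional structural fact), each runs in polynomial time and strictly decreases a measure (either $k$ or the number of vertices), so exhaustive application terminates in polynomial time and produces an equivalent instance $(G',k')$; moreover $T$ remains a nice modulator for $G'$ since we only delete vertices (some inside $T$, some components outside $T$) and never add edges inside $G'-T$, so $G'-T$ stays an induced subgraph of $G-T$ and hence a \phcag, while the $(k+2)$-necessary family $\calW\subseteq 2^T$ is unaffected by deletions outside $T$ and by the removal of a vertex $v\in T$ (in which case we also update $k$, preserving all three invariants of a nice modulator).

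\textbf{The main obstacle} is not really combinatorial here — it is purely the bookkeeping of making sure that $T$ is still a valid \emph{nice modulator} after the rules fire, in particular that deleting a vertex $v\in T$ (Reduction Rule~\ref{rule 3k}) together with decrementing $k$ preserves the three bulleted properties, and that Reduction Rule~\ref{rule no nbr} does not accidentally increase the size of any connected component beyond the $k^{\OO(1)}$ bound established in \cref{lem:final-connectedsize} (it cannot, since it only deletes whole components). Once these invariants are checked, the numeric bound follows by the elementary addition in Step~3.
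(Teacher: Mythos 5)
Your proof is correct and follows exactly the approach the paper takes (the paper states \cref{lem-number_components} as an immediate consequence of Reduction Rules~\ref{rule 3k},~\ref{rule no nbr}, Claim~\ref{claim:mostone}, and \cref{lem-nbr_components} without writing out the counting). You are also right that the argument actually yields the sharper bound $\OO(k\cdot|T|)$, of which the paper's stated $\OO(k\cdot|T|^2)$ is a loosening.
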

\section{Kernel size analysis}

Now we are ready to prove the main result of our paper, that is, \cref{theo:poly_kernel}.
Before proceeding with the proof, let us state all the bounds that contributes to the kernel size.
\begin{tcolorbox}
Size of \emph{nice modulator} $T$: $\OO(k^{12})$ (from \cref{nicemod}).
\\Number of connected components in $G-T$: $\mathcal{O}(k\cdot{\card{T}}^2)$ (by \cref{lem-number_components}).
\\Number of cliques in any connected component in $G-T$: $\OO(k^2\cdot |T|)$ (by \cref{lem:final-connectedsize}).
\\Size of any clique $\cl$ in $G-T$:  $2(k+1) \card{T}^4$ (by \cref{rem:Mark1}). 
\end{tcolorbox}


\begin{proof}[Proof of \cref{theo:poly_kernel}]
From \cref{lem-efficient} and \cref{lem:redundant}, in polynomial-time, we can obtain a \emph{nice modulator} $T\subseteq V(G)$ of size $\mathcal{O}(k^{\cyclebound})$ or conclude that $(G,k)$ is a \no-instance. Note that, $G-T$ is a \phcag. Next, we take a \emph{nice clique partition} of $G-T$. Now by \cref{lem-number_components}, in polynomial-time we return a graph $G$ such that $G-T$ has  $\OO(k\cdot |T|^2)$ components.  
By Lemma \cref{lem:final-connectedsize}, in polynomial-time, we can reduce the graph $G$ such that any connected component in $G-T$ has at most  $\OO(k^2\cdot |T|)$ cliques. 
Next, we bound the size of each clique in $G-T$ by  $2(k+1) \card{T}^4$ from \cref{lem-clique_size}. Hence the graph $G-T$ has at most $ \OO(k\cdot |T|^2)\cdot  \OO(k^2\cdot |T|) \cdot 2(k+1) \card{T}^4 $, that is, $\OO(k^4\cdot |T|^7)$ many vertices. Recall that $|T|=\OO(k^{12})$. Therefore, the size of the obtained kernel is $\OO(k^4\cdot |T|^7)$, that is, $\OO(k^{88})$. \qed
\end{proof}

\section{Conclusion}

In this paper, we studied \phcad from the perspective
of kernelization complexity, and designed a polynomial kernel of size $\OO(k^{88})$.  We remark that the size of a kernel can be further optimized with more careful case analysis. However, getting a kernel of a significantly smaller size might require an altogether different approach.

	\bibliography{draft.bib}

\begin{thebibliography}{10}

\bibitem{AgrawalLMSZ17}
Akanksha Agrawal, Daniel Lokshtanov, Pranabendu Misra, Saket Saurabh, and
  Meirav Zehavi.
\newblock {Feedback Vertex Set Inspired Kernel for Chordal Vertex Deletion}.
\newblock In {\em {(SODA 2017)}}, pages 1383--1398, 2017.

\bibitem{AgrawalM0Z19}
Akanksha Agrawal, Pranabendu Misra, Saket Saurabh, and Meirav Zehavi.
\newblock Interval vertex deletion admits a polynomial kernel.
\newblock In Timothy~M. Chan, editor, {\em {SODA} 2019, San Diego, California,
  USA}, pages 1711--1730. {SIAM}, 2019.

\bibitem{BDFH09}
Hans~L. Bodlaender, Rodney~G. Downey, Michael~R. Fellows, and Danny Hermelin.
\newblock On problems without polynomial kernels.
\newblock {\em {Journal of Computer and System Sciences}}, 75(8):423--434,
  2009.

\bibitem{DBLP:journals/corr/abs-2202-00854}
Yixin Cao, Hanchun Yuan, and Jianxin Wang.
\newblock {Modification Problems Toward Proper (Helly) Circular-Arc Graphs}.
\newblock In J{\'{e}}r{\^{o}}me Leroux, Sylvain Lombardy, and David Peleg,
  editors, {\em {MFCS} 2023}, volume 272 of {\em LIPIcs}, pages 31:1--31:14.
  Schloss Dagstuhl, 2023.

\bibitem{paramalgoCFKLMPPS}
Marek Cygan, Fedor~V. Fomin, {\L}ukasz Kowalik, Daniel Lokshtanov, Daniel Marx,
  Marcin Pilipczuk, Michal Pilipczuk, and Saket Saurabh.
\newblock {\em {Parameterized Algorithms}}.
\newblock Springer-Verlag, 2015.

\bibitem{diestel-book}
Reinhard Diestel.
\newblock {\em Graph Theory, 4th Edition}, volume 173 of {\em Graduate texts in
  mathematics}.
\newblock Springer, 2012.

\bibitem{DBLP:series/mcs/DowneyF99}
Rodney~G. Downey and Michael~R. Fellows.
\newblock {\em {Parameterized Complexity}}.
\newblock Monographs in Computer Science. Springer, 1999.

\bibitem{DowneyFbook13}
Rodney~G. Downey and Michael~R. Fellows.
\newblock {\em {Fundamentals of Parameterized Complexity}}.
\newblock Texts in Computer Science. Springer, 2013.

\bibitem{duran2014structural}
Guillermo Dur{\'a}n, Luciano~N Grippo, and Mart{\'\i}n~D Safe.
\newblock Structural results on circular-arc graphs and circle graphs: a survey
  and the main open problems.
\newblock {\em Discrete Applied Mathematics}, 164:427--443, 2014.

\bibitem{DBLP:series/txtcs/FlumG06}
J{\"{o}}rg Flum and Martin Grohe.
\newblock {\em {Parameterized Complexity Theory}}.
\newblock Texts in Theoretical Computer Science. An {EATCS} Series. Springer,
  2006.

\bibitem{DBLP:journals/siamdm/FominSV13}
Fedor~V. Fomin, Saket Saurabh, and Yngve Villanger.
\newblock A polynomial kernel for proper interval vertex deletion.
\newblock {\em {SIAM} J. Discret. Math.}, 27(4):1964--1976, 2013.

\bibitem{Golumbic80}
Martin~Charles Golumbic.
\newblock {\em {Algorithmic Graph Theory and Perfect Graphs}}.
\newblock Academic Press, 1980.

\bibitem{kaplan2009certifying}
Haim Kaplan and Yahav Nussbaum.
\newblock Certifying algorithms for recognizing proper circular-arc graphs and
  unit circular-arc graphs.
\newblock {\em Discrete Applied Mathematics}, 157(15):3216--3230, 2009.

\bibitem{KeCOLW18}
Yuping Ke, Yixin Cao, Xiating Ouyang, Wenjun Li, and Jianxin Wang.
\newblock Unit interval vertex deletion: Fewer vertices are relevant.
\newblock {\em J. Comput. Syst. Sci.}, 95:109--121, 2018.

\bibitem{DBLP:conf/latin/0001S0Z18}
R.~Krithika, Abhishek Sahu, Saket Saurabh, and Meirav Zehavi.
\newblock The parameterized complexity of cycle packing: Indifference is not an
  issue.
\newblock {\em Algorithmica}, 81(9):3803--3841, 2019.

\bibitem{DBLP:journals/dam/LinSS13}
Min~Chih Lin, Francisco~J. Soulignac, and Jayme~Luiz Szwarcfiter.
\newblock {Normal Helly circular-arc graphs and its subclasses}.
\newblock {\em Discret. Appl. Math.}, 161(7-8):1037--1059, 2013.

\bibitem{lin2009characterizations}
Min~Chih Lin and Jayme~L Szwarcfiter.
\newblock {Characterizations and recognition of circular-arc graphs and
  subclasses: A survey}.
\newblock {\em Discrete Mathematics}, 309(18):5618--5635, 2009.

\bibitem{DBLP:conf/cocoon/LinS06}
Min~Chih Lin and Jayme~Luiz Szwarcfiter.
\newblock {Characterizations and Linear Time Recognition of Helly Circular-Arc
  Graphs}.
\newblock In Danny~Z. Chen and D.~T. Lee, editors, {\em {COCOON} 2006}, volume
  4112 of {\em LNCS}, pages 73--82. Springer, 2006.

\bibitem{Marx10}
D{\'a}niel Marx.
\newblock {Chordal Deletion is Fixed-Parameter Tractable}.
\newblock {\em Algorithmica}, 57(4):747--768, 2010.

\bibitem{mcconnell2003linear}
Ross~M McConnell.
\newblock Linear-time recognition of circular-arc graphs.
\newblock {\em Algorithmica}, 37(2):93--147, 2003.

\bibitem{DBLP:books/ox/Niedermeier06}
Rolf Niedermeier.
\newblock {\em {Invitation to Fixed-Parameter Algorithms}}.
\newblock Oxford University Press, 2006.

\end{thebibliography}


\end{document}